  \newtheorem{proposition}{Proposition}[section]
\def \be {\begin{equation}}
\def \ee {\end{equation}}
\def \ba {\begin{aligned}}
\def \ea {\end{aligned}}
\def \bea {\begin{eqnarray}}
\def \eea {\end{eqnarray}}
\newtheorem{lemma}{Lemma}[section]
\theoremstyle{definition}
\newtheorem{remark}[lemma]{Remark}
\newtheorem{theorem}[lemma]{Theorem}
\newcommand{\hphi}{\widehat{\varphi}}
\newcommand{\hpsi}{\widehat{\psi}}
\newcommand{\SA}{
  {{\Delta}} \hspace{-.6em}
  \raisebox{.37ex}{\resizebox{.6ex}{!}{$\mathrm{/}$}}
  \hspace{.4em} {}
}
\begin{document}
\begin{titlepage}
\begin{flushright}
\end{flushright}
\vspace{0.5cm}
\begin{center}
{\Large \bf Resurgence for large $c$ expansion in Coulomb gas formalism}
\lineskip .75em
\vskip 2.5cm
{Yong Li$^{a,}\footnote{liyong@bimsa.cn}$ and Hongfei Shu$^{b,a,c,}$\footnote{shuphy124@gmail.com, shu@zzu.edu.cn}
}
\vskip 2.5em
 {\normalsize\it 
$^{a}$Beijing Institute of Mathematical Sciences and Applications, Beijing, 101408, China\\
$^{b}$Institute for Astrophysics, School of Physics,
Zhengzhou University, Zhengzhou, Henan 450001, China\\
$^{c}$Yau Mathematical Sciences Center, Tsinghua University, Beijing, 100084, China
}
\vskip 3.0em
\end{center}

\begin{abstract}
We develop a resurgence analysis for large central charge (large $C$) expansions in two-dimensional CFTs using the Coulomb gas formalism. Through the exact Borel-Laplace representations of the conformal blocks $I_1(C,z)$ and $I_2(C,z)$ associated with the four-point correlation function $ \langle \phi_{2,1}(0)\phi_{2,1}(z,\bar{z})\phi_{2,1}(1)\phi_{2,1}(\infty)\rangle$, we demonstrate that $I_1(C,z)$ participates in the Stokes phenomenon of $I_2(C,z)$ (and vice versa), and establish that monodromy in $z$ arises from alien calculus in the Borel plane variable $\zeta$ (Borel dual to $C$). From a given conformal block, resurgence theory thus enables us to discover other
internal operators (conformal blocks). This approach establishes a non-perturbative connection between conformal blocks,  shedding light on the resurgence phenomena in more general quantum field theories.
\end{abstract}

\end{titlepage}

\tableofcontents
\newpage

\section{Introduction}
A central task in quantum field theory is to understand the non-perturbative contributions. While various non-perturbative methods, such as integrability \cite{Beisert:2010jr}, localization \cite{Pestun:2016zxk}, and holography \cite{Maldacena:1997re}, have been developed, their applicability is often restricted to highly specialized scenarios requiring additional symmetries or dualities. In recent years, resurgence theory---invented by Jean \'{E}calle in the 1980s \cite{Ecalle81,Ecalle81b,Ecalle85c}---has attracted significant interest in non-perturbative approaches. This method is based on a key feature of quantum field theory: the perturbative series of observables are typically divergent \cite{Dyson:1952tj,GZ-1990}. Resurgence theory provides a pathway to transform these divergences into physically meaningful quantities. Moreover, it enables us to extract the non-perturbative contributions from the detailed structures of perturbative series. For comprehensive reviews and relevant references, see \cite{Marino:2012zq,Dorigoni:2014hea,Aniceto:2018bis}.

However, obtaining a closed analytic form for the perturbative series remains a challenging task in most physically interesting quantum field theories. Similarly, computing such perturbative series to a high enough order for numerical analysis presents difficulties. On the other hand, conformal field theory (CFT), playing a crucial role in the study of statistical mechanics, string theory, and quantum gravity, offers a more tractable setting \cite{DiFrancesco:1997nk}. In particular, the infinite-dimensional conformal symmetry of two-dimensional CFTs enables a systematic framework for analyzing the operator content and constraining the correlation functions. Furthermore, correlation functions involving degenerate primary operators can be determined via the Belavin-Polyakov-Zamolodchikov (BPZ) equation \cite{Belavin:1984vu}. It is thus an interesting question to investigate the resurgence structures in two-dimensional CFTs, as it may shed light on the resurgence phenomena in more general quantum field theories.

One of the most important perturbative series in CFT is the large central charge $c$ expansion, which corresponds to the semi-classical expansion in the context of AdS/CFT correspondence. In \cite{Harlow:2011ny}, the semi-classical limit of the path integral in the Liouville theory has been studied, where the complex saddle points have been found. Moreover, the Stokes phenomenon and the analytic continuation have been applied to the correlation function to explore the theory beyond the physical region. Moreover, based on the DOZZ formula \cite{Dorn:1994xn,Zamolodchikov:1995aa}, it has been shown that the large $c$ expansion of the three-point function is asymptotic \cite{Benjamin:2024cvv}. The singularities on the Borel plane correspond to the complex saddle points found in \cite{Harlow:2011ny}.

To explore the higher-point function of two-dimensional CFT, one usually uses the conformal block expansion of the correlation function. Recently, it was proposed that the large $c$ expansion of these conformal blocks in the (non-unitary) minimal model can be written as a trans-series \cite{Benjamin:2023uib}. Given the asymptotic series of a certain conformal block, the contribution of other conformal blocks can be extracted through the Borel singularity structure \footnote{Intuitively, the asymptotic series of a conformal block can be analogized to a saddle-point expansion. Through the resurgence theory, one can see the connection among the different saddle points/conformal blocks \cite{Berry:1991but}.}. In particular, focus has been placed on the conformal block expansion for the correlation function involving heavy primary degenerate operators $\phi_{2,1}$:
\begin{equation}
    \langle \phi_{2,1}(0)\phi_{2,1}(z,\bar{z})\phi_{2,1}(1)\phi_{2,1}(\infty)\rangle=\sum_{i=1,2}\frac{C_i^2}{z^{2h_{2,1}}\bar{z}^{2\bar{h}_{2,1}}}F(h_i,z)\bar{F}(\bar{h}_i,\bar{z}),
\end{equation}
where $C_i$ are the structure constants, $h_1=h_{1,1}=0$, $h_2=h_{3,1}=-2C-1$ and $h_{2,1}=-3C/4 -1/2$ are the conformal dimensions of the degenerate operator $\phi_{h_i}$\footnote{The central charge of this CFT is $c=13+6(C+1/C)$. In the region of large $c$ (or large $C$), this CFT with operator $\phi_{r,s}$ is non-unitary.}.
$F(h_i,z)$ is the conformal block with internal operator $\phi_{h_i}$. Given the large $c$ expansion of the conformal block $F_{h_{1,1}}$, one can find the contribution of the other operator $\phi_{3,1}$, which is encoded in the operator product expansion (OPE) 
\begin{equation}
    \phi_{2,1}\times \phi_{2,1}=\phi_{1,1}+\phi_{3,1}.
\end{equation}
This resurgent structure has been studied in more detail based on the hypergeometric function expression of $F_{h_i}$ in \cite{Bissi:2024wur}. On the other hand, correlation functions of degenerate operators are more commonly expressed using the Coulomb gas formalism \cite{Dotsenko:1984nm,Dotsenko:1984ad}. In particular, the conformal block $F(h_i,z)$ can be expressed by using 
\begin{equation}
        I_{1}(C,
        z)=
\int_{1}^{\infty} Q^{C} dw
\end{equation}
for $-1<\mathfrak{Re}C<-\frac13$ and 
\begin{equation}\label{equationI2definition}
        I_2 (C,
        z)
        =\int_{0}^{z} Q^{C} dw
\end{equation}
for $-1<\mathfrak{Re}C$. Here $Q(w):= w(w-1)(w-z)$. $I_1(C,z)$ and $I_2(C,z)$ are the conformal blocks associated to the internal operators $\phi_{1,1}$ and $\phi_{3,1}$, respectively. For more general regions of $C$, we could use the Pochhammer contour on the $w$-plane instead \cite{DiFrancesco:1997nk}. By imposing the $U(1)$ charge neutrality, one can also express the higher-point correlation function involving more general degenerate operators in this formalism. 

In this paper, we investigate the resurgent structure of the conformal blocks, $I_1(C,z)$ and $I_2(C,z)$, within the framework of the Coulomb gas formalism. This approach is supposed to be extended to higher-point correlation functions that involve more general degenerate operators. 

Our main contribution is the complete resurgence analysis of the functions \(I_1(C,z)\) and \(I_2(C,z)\) as $C$ goes to infinity. 
By initially fixing \(z\) in a neighborhood of \(\frac{1}{2}\), we establish explicit Borel-Laplace representations for \(I_1(C,z)\) and \(I_2(C,z)\) in Propositions~\ref{propositionI1BL} and~\ref{propositionI2BL}:
\begin{equation}
    I_1(C,z) = Q_+^C \frac{e^{-2\pi i C}}{1-e^{-2\pi i C}} \mathfrak{L}^{\frac\pi2+\varepsilon} \Phi(Q_+e^{-\zeta},z), \quad I_2(C,z) = Q_-^C \mathfrak{L}^{0
    } \Psi(Q_-e^{-\zeta},z),
\end{equation}
where the Borel germs $\Phi$ and $\Psi$ are determined by the solutions $w_i,i=0,1,2$ to $Q=0$ (see equation \eqref{equationsolutionswz}) and $Q_\pm$ are the critical values of the function $Q(w)$ (see equation \eqref{equationcriticalvalueQpm}). The asymptotic behavior for large \(C\) is thus directly recoverable. 
In Section~\ref{sectionalien}, we analyze the alien operators —equivalently characterized by monodromy computations since all singularities are integrable—and explicitly compute the Stokes phenomenon for \(I_2(C,z)\) (Theorem~\ref{theoremI2Stokes}):
\begin{equation}\label{equationintroductionstokes2}
        I_2(C,z)=Q_-^C \frac{1+e^{-2\pi iC}}{1-e^{-2\pi iC}}\mathfrak{L}^{\frac\pi2+\varepsilon} \Psi(Q_-e^{-\zeta},z) +I_1(C,z)  , \quad \mathfrak{Im} C<0
\end{equation}\label{equationintroductionstokes1}
and for \(I_1(C,z)\) (equation \eqref{equationStokesI1}):
\begin{equation}
    I_1(C,z) = Q_+^C\frac{\sin(2\pi C)}{\sin(3\pi C)} e^{-\pi iC} \mathfrak{L}^0 \Phi(Q_+e^{-\zeta},z) - \frac{\sin(\pi C)}{\sin(3\pi C)}  I_2(C,z).
\end{equation}
The above formulas can be analytically extended to $C<0$ or $C>0$, respectively. Our analysis reveals that the Stokes phenomenon in one conformal block intrinsically governs that of another, a connection heuristically posited in works such as \cite{Benjamin:2023uib} and \cite{Bissi:2024wur}.

We then establish in Section~\ref{subsection1-z} a duality between Borel germs $\Psi$ and $\Phi$ appearing in equations~\eqref{equationintroductionstokes2} and~\eqref{equationintroductionstokes1} (Proposition~\ref{proposition1-z}):
\begin{equation}
    \Psi(Q_-e^{-\zeta},1-z) = - \Phi(Q_+e^{-\zeta},z),
\end{equation}
which is supposed to relate the crossing symmetry of the four-point correlation function.

Monodromy computation provides an alternative method to derive $I_1(C,z)$ from $I_2(C,z)$ and vice versa (as noted in \cite{GH24}). At the end of this paper, we compute the monodromy of \(I_1(C,z)\) and \(I_2(C,z)\) in the \(z\)-variable (at \(z = 0\) and \(z = 1\)) by tracking singular-point trajectories on the Borel plane (Figure~\ref{figurehpsihphimonodromy}). This reveals a fundamental relation between \(z\)-monodromy and alien calculus in the \(\zeta\)-variable —the Borel dual of \(C\)— as detailed in Lemma~\ref{lemmaDeltaalienhphi}. This relation is intrinsically linked to co-equational resurgence (termed parametric resurgence in \cite{ecalle1994weighted}) and the WKB analysis of Schr\"odinger-type equations \cite{ATT21}.  
This work serves as an accessible entry point to this technical framework, elucidating connections through explicit computations.

This paper is organized as follows: In
section \ref{sectionBorelLaplaceI1I2}, we provide the Borel-Laplace expressions of $I_1(C,z)$ and $I_2(C,z)$ for $z$ around $\frac12$. In section \ref{sectionalien}, the Alien calculus on Borel germs of $I_1(C,z)$ and $I_2(C,z)$ and the Stokes phenomenon of them are presented. Based on the Alien calculus, we evaluate the monodromy computation for variable $z$ in section \ref{sectionmonodromy}. In section \ref{sec:conclusion}, we will conclude and present possible
future directions.

\section{Borel-Laplace expressions on conformal blocks $I_1,I_2$}\label{sectionBorelLaplaceI1I2}

In order to write $I_1,I_2$ to be Laplace integrals, we need to establish the following formula
\begin{equation}
    \int Q^Cdw = \int Q^C f(Q) dQ,
\end{equation}
where \( f(Q) \) consists of the solutions \( w_i(Q)\) (with $\ i=0,1,2$) to the equation \( w(w-1)(w-z) - Q = 0 \). 
We mainly focus on the case as $z$ nears $1/2$ in this section. Other regions of $z$ can be obtained via the analytical continuation. Let \[\boxed{z \sim \frac12 .}\]

\subsection{The solutions to $Q=w(w-1)(w-z)$}

The critical points of $Q(w)=w(w-1)(w-z)$ are located at 
\begin{equation}
    w_{\pm}(z)=\frac13(z+1) \pm \frac13 (z^2-z+1)^{\frac12} \ \sim \frac12 \pm \frac{\sqrt3}{6}. 
\end{equation}
The corresponding critical values are
\begin{equation}\label{equationcriticalvalueQpm}
    Q_{\pm}(z):=Q(w_{\pm}(z)) \ \sim \mp\frac{\sqrt3}{36}. 
\end{equation}
Moreover, the solutions to $w(w-1)(w-z)-Q=0$ are (see formula (5) of  \cite{Nickalls1993ANA})
\begin{equation}\label{equationsolutionswz}
    \begin{aligned}
        w_{0}(Q,z)
        =&
       \frac{z+1}3+2\left(-\frac{u}{3}\right)^{\frac12}\cos\left(\frac{1}{3}\arccos\left(\frac{3v}{2u}\left(-\frac{3}{u}\right)^{\frac12}\right)+\frac{2\pi}{3}\right),
        \\
       w_{1}(Q,z)
        =&
       \frac{z+1}3+2\left(-\frac{u}{3}\right)^{\frac12}\cos\left(\frac{1}{3}\arccos\left(\frac{3v}{2u}\left(-\frac{3}{u}\right)^{\frac12}\right)\right),
        \\
        w_{2}(Q,z)
        =&
       \frac{z+1}3+2\left(-\frac{u}{3}\right)^{\frac12}\cos\left(\frac{1}{3}\arccos\left(\frac{3v}{2u}\left(-\frac{3}{u}\right)^{\frac12}\right)-\frac{2\pi}{3}\right),
        \\
    \end{aligned}
\end{equation}
where $u$ and $v$ are defined by
\begin{equation}\label{equationuv}
u=\frac{3z-(z+1)^{2}}{3},\quad v=\frac{1}{27}\big(-2(z+1)^{3}+9z(z+1)-27Q\big).
\end{equation}

We regard \( w_i(Q,z) \), for \( i = 0,1,2 \), as holomorphic germs defined in a neighborhood of \(\{0\} \times \{\frac{1}{2}\}\) in the variables \((Q,z)\). The initial conditions in \(Q\) are given by
\begin{equation}
    w_0(0,z) = 0, \quad
    w_1(0,z) = 1, \quad
    w_2(0,z) = z,
\end{equation}
which correspond precisely to the solutions of \( w(w-1)(w-z) = 0 \) at \( Q = 0 \). As $z$ near $\frac12$, the \( w_i \) (for \( i = 0,1,2 \)) are resurgent functions possessing singularities at \( Q_{\pm} \). More precisely, the singular part of each \( w_i \) arises from the function \( \arccos \) at \( \pm 1 \), producing the possibly\footnote{``possibly'' since some points are regular, see equation \eqref{equationw0w1holomorphic}} singular set
\begin{equation}\label{equationsingularsetsinducesbyQ}
    \left\{(Q,z) \mid \frac{3v}{2u}\left(-\frac{3}{u}\right)^{\frac{1}{2}} = \pm 1 \right\} = \{(Q,z) \mid 4u^3 + 27 v^2 = 0\} = \{(Q,z) \mid Q = Q_{\mp}(z)\}.
\end{equation}

For fixed $z$ around $\frac12$, by using the analytic continuation of function $\arccos$ computed in Appendix \ref{Appendixarccos} (Lemma \ref{lemmaarccosanalyticcontinuation}), we have the analytic continuation of these solutions $\omega_i$'s in variable $Q$
\begin{equation}\label{equationanalyticityofwz}
\begin{split}
{\rm cont}_{Q_+(z)} w_1 (Q,z) & = w_2 (Q,z), \quad
{\rm cont}_{Q_+(z)} w_2 (Q,z)= w_1 (Q,z), 
\\
{\rm cont}_{Q_-(z)} w_{0} (Q,z) &= w_{2} (Q,z), \quad
{\rm cont}_{Q_-(z)} w_{2} (Q,z)= w_{0} (Q,z)     
\end{split}
\end{equation}
 and 
\begin{equation}\label{equationw0w1holomorphic}
    w_0(Q,z) \text{ holomorphic at } (Q_+(z),z),\quad
    w_1(Q,z) \text{ holomorphic at } (Q_-(z),z).
\end{equation}
Moreover, 
\begin{equation}
w_2(Q_{\pm},z) = w_{\pm}, \quad
w_1(Q_+,z)=w_+ \neq w_0(Q_+,z), \quad
w_0(Q_-,z)=w_- \neq w_1(Q_-,z).
\end{equation}
See Figure \ref{figuretriplecover} for the evaluations in different leaves.


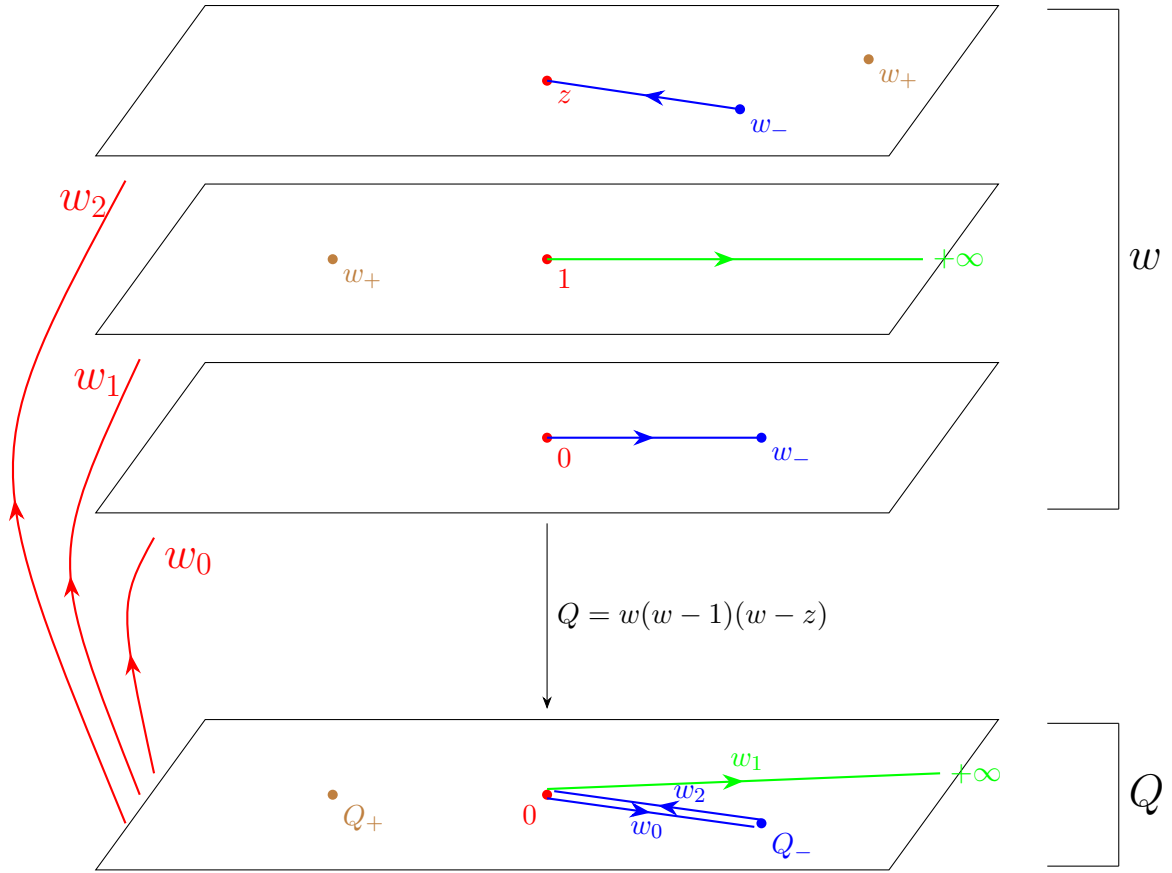
\begin{figure}[h] 
    \centering
\begin{tikzpicture}[scale=0.95,
    node distance = 2.5cm,
    paraplane/.style = { 
        draw,
        shape=trapezium,
        trapezium left angle=10,
        trapezium right angle=170,
        trapezium stretches body=true,
        minimum width=12cm,
        minimum height=2cm,
        text centered
    },
    arrow/.style = {red, thick, -Latex, bend right=45},
    dot/.style = {circle, fill, inner sep=0.5pt},
        mid arrow/.style={
        postaction={decorate}, 
        decoration={
            markings,
            mark=at position 0.5 with {\arrow{Stealth[scale=1.2]}} 
        }
    }
]

\node[paraplane] () at (0,6) {};
\node[paraplane] () at (0,3.5) {};
\node[paraplane] () at (0,1) {};
\node[paraplane] () at (0,-4) {};

\fill[red] (0,-4) circle (2pt) node[below left] {$0$};
\fill[red] (0,1) circle (2pt) node[below right] {$0$};
\fill[red] (0,3.5) circle (2pt) node[below right] {$1$};
\fill[red] (0,6) circle (2pt) node[below right] {$z$};

\draw[-Stealth](0,-0.2)--(0,-2.8); \node[right] at (0,-1.5) {$Q = w(w-1)(w-z)$};

\draw(7,7)--(8,7);
\draw(8,7)--(8,0);
\draw(8,0)--(7,0);
\node[right] at (8,3.5) {\LARGE $w$};

\draw(7,-3)--(8,-3);
\draw(8,-3)--(8,-5);
\draw(8,-5)--(7,-5);
\node[right] at (8,-4) {\LARGE $Q$};

\fill[blue] (3,1) circle (2pt) node[below right] {$w_-$};
\fill[blue] (2.7,5.6) circle (2pt) node[below right] {$w_-$};
\fill[blue] (3,-4.4) circle (2pt) node[below right] {$Q_-$};
\fill[brown] (-3,3.5) circle (2pt) node[below right] {$w_+$};
\fill[brown] (4.5,6.3) circle (2pt) node[below right] {$w_+$};
\fill[brown] (-3,-4) circle (2pt) node[below right] {$Q_+$};

\draw[thick, red, mid arrow] 
    (-5.5,-3.7) .. controls (-6,-1.4) and (-6,-1.3) .. (-5.5,-0.4);
\draw[thick, red, mid arrow] 
    (-5.7,-4) .. controls (-7,-0.7) and (-7,-0.7) .. (-5.7,2.1);
\draw[thick, red, mid arrow] 
    (-5.9,-4.4) .. controls (-8,0.7) and (-8,0.7) .. (-5.9,4.6);
\node[below right,red] at (-5.5,-0.4) {\LARGE $w_0$};
\node[below left,red] at (-5.8,2.1) {\LARGE $w_1$};
\node[below left,red] at (-6,4.6) {\LARGE $w_2$};

\draw[thick, blue, mid arrow](0,-4.05)--(2.9,-4.45);
\node[below,blue] at (1.4,-4.23) {$w_0$};
\draw[thick, blue, mid arrow](3,-4.35)--(0.1,-3.95);
\node[above,blue] at (2,-4.25) {$w_2$};

\draw[thick, blue, mid arrow](0,1)--(3,1);
\draw[thick, blue, mid arrow](2.7,5.6)--(0,6);
\draw[thick, green, mid arrow] (0,3.5)--(5.26,3.5);
\draw[thick, green, mid arrow] (0,-3.92)--(5.5,-3.7);
\node[right,green] at (5.26,3.5) {$+\infty$};
\node[right,green] at (5.5,-3.7) {$+\infty$};
\node[above,green] at (2.8,-3.8) {$w_1$};
\end{tikzpicture}
\caption{
When $z$ lies in a small neighborhood of $\frac{1}{2}$, the cubic projection $Q(w) = w(w-1)(w-z)$ induces a local triple cover. The fibers are parameterized by three maps $w_i\ (i=0,1,2)$. Proposition \ref{propositionI2BL} expresses the integral $I_2(C,z)$ as an integral over the $Q$-plane along the blue line segments, which are preimages of $w_0$ and $w_2$. Similarly, Proposition \ref{propositionI1BL} represents $I_1(C,z)$ via integration over the $Q$-plane along the green ray, which is the preimage of $w_1$.
} 
    \label{figuretriplecover} 
\end{figure}

\subsection{The Borel-Laplace transform of $I_1$ and $I_2$}             

In this section, we express the conformal blocks $I_1,I_2$ as the forms of the Borel-Laplace transform.
By the discussion in the previous section, a direct computation yields

\begin{equation}\label{equationI2Laplacecomputation}
    \begin{aligned}
        I_{2}(C,z)
        =&\int_{0}^{z}dw\Big[w(w-1)(w-z)\Big]^{C}
        \\
        =&
        \int_{0}^{Q_-}dQ Q^{C}\Big[\frac{1}{\frac{dQ}{dw}\Big|_{w_{0}(Q)}}-\frac{1}{\frac{dQ}{dw}\Big|_{w_{2}(Q)}}\Big]
        \\
        =&
        \int_{0}^{Q_-}dQ \ Q^{C} \Big[\frac{1}{(w_0-w_1)(w_0-w_2)}-\frac{1}{(w_2-w_0)(w_2-w_1)}\Big]
        \\
        =&
        \int^{+\infty}_{\zeta_-}e^{-\zeta C} \left.\frac{2w_1-w_0-w_2}{(w_0-w_1)(w_0-w_2)(w_1-w_2)}\right|_{Q=e^{-\zeta}} e^{-\zeta} d\zeta,
    \end{aligned}
\end{equation}
where $\zeta_- = -\log Q_-$ in the last formula. See the blue lines in Figure \ref{figuretriplecover}. Let us write $I_2$ as a form of Laplace integral and combine this result and the previous calculations into a proposition.

\begin{proposition}\label{propositionI2BL}
    Let $Q(w)=w(w-1)(w-z)$ and $I_2(C,z):=\int_0^z Q^Cdw$. Let $z$ near $\frac12$ and let $Q_{\pm}(z)$ (see \eqref{equationcriticalvalueQpm}) be the critical values of the polynomial $Q(w)$. Then we have
\begin{equation}\label{equationI2laplace}
    I_2(C,z) = Q_-^{C} \mathfrak{L}^0 \Psi(Q_-e^{-\zeta}),
\end{equation}
where
$\mathfrak{L}^\theta$ is the Laplace transform along direction $\theta$: $\mathfrak{L}^\theta \bullet:=\int_0^{e^{i\theta}\infty} e^{-C\zeta}\bullet d\zeta$ and the germ
\begin{equation}
    \Psi(Q) := Q\cdot \left(\frac{2w_1-w_0-w_2}{(w_0-w_1)(w_0-w_2)(w_1-w_2)}\right)(Q)
\end{equation}
is defined in terms of $w_i$ by equations in  \eqref{equationsolutionswz}. Moreover, the asymptotic behavior of $I_2$ as $C$ goes to infinity with $ -\pi + \varepsilon <\arg C < \pi -\varepsilon$ is 
    \begin{equation}\label{equationI2asymp}
        I_2(C,z) \sim Q_-^{C} \mathfrak{B}^{-1} \big(\Psi(Q_-e^{-\zeta})\big)
    \end{equation}
    where $\mathfrak{B}$ is the Borel transform, which is an isomorphism between power series: 
    \begin{equation}
        \mathfrak{B}:\sum\limits_{n\geq0}a_n(z)C^{-n-\beta} \mapsto \sum\limits_{n\geq0}\frac{a_n(z)}{\Gamma(n+\beta)}\zeta^{n+\beta-1}\quad {\rm with}\quad 0<\beta<1.
    \end{equation}
\end{proposition}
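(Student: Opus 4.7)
The plan is to establish the Laplace representation by pushing the defining integral forward from the $w$-plane to the $Q$-plane via the cubic cover $Q=w(w-1)(w-z)$, and then to extract the asymptotic expansion through Watson's lemma in the form of the Borel--Laplace correspondence. The segment $[0,z]$ in the $w$-plane passes through the critical point $w_-$ and lifts to a contour in the $Q$-plane that runs from $0$ to $Q_-$ along the $w_0$-sheet and returns from $Q_-$ to $0$ along the $w_2$-sheet; this decomposition is legitimized by the local analytic-continuation relation ${\rm cont}_{Q_-(z)}w_0=w_2$ from \eqref{equationanalyticityofwz} and is exactly the blue lifting in Figure \ref{figuretriplecover}.

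On each sheet the Jacobian is $dw/dQ=1/Q'(w_i)=1/\prod_{j\neq i}(w_i-w_j)$, so combining the two branch contributions and reducing to a common denominator gives the integrand $\tfrac{2w_1-w_0-w_2}{(w_0-w_1)(w_0-w_2)(w_1-w_2)}$ as displayed in \eqref{equationI2Laplacecomputation}. Setting $Q=Q_-e^{-\zeta}$, so that $\zeta$ runs from $+\infty$ to $0$ and $dQ=-Q_-e^{-\zeta}d\zeta$, turns the $Q$-integral into $Q_-^C\int_0^{+\infty}e^{-C\zeta}\,\Psi(Q_-e^{-\zeta})\,d\zeta$, where the factor of $Q_-e^{-\zeta}$ from the substitution is absorbed into the definition $\Psi(Q)=Q\cdot\tfrac{2w_1-w_0-w_2}{(w_0-w_1)(w_0-w_2)(w_1-w_2)}$. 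This yields \eqref{equationI2laplace} directly.

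For the asymptotic statement, the next step is to analyze the singular structure of $\Psi$ at $Q=Q_-$. By \eqref{equationanalyticityofwz}, $w_0$ and $w_2$ coalesce at $w_-$ through a square-root branch while $w_1$ remains holomorphic with $w_1(Q_-,z)\neq w_-$; hence $2w_1-w_0-w_2\neq 0$ and $(w_0-w_1)(w_1-w_2)\neq 0$ at $Q_-$, so the only source of non-analyticity of $\Psi$ at $Q_-$ is the factor $1/(w_0-w_2)\sim c(Q_--Q)^{-1/2}$. Consequently $\Psi(Q_-e^{-\zeta})$ admits at $\zeta=0$ a convergent expansion of the form $\sum_{n\geq0}\tfrac{a_n(z)}{\Gamma(n+1/2)}\zeta^{n-1/2}$, which matches the image of $\mathfrak{B}$ at $\beta=1/2$. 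Watson's lemma applied term by term delivers the asymptotic series $\sum_{n\geq 0}a_n(z)C^{-n-1/2}=\mathfrak{B}^{-1}\Psi(Q_-e^{-\zeta})$. Validity in the full sector $\arg C\in(-\pi+\varepsilon,\pi-\varepsilon)$ is secured by rotating the Laplace direction $\mathfrak{L}^0\rightsquigarrow\mathfrak{L}^\theta$ for $\theta\in(-\pi/2+\varepsilon,\pi/2-\varepsilon)$; by \eqref{equationsingularsetsinducesbyQ} the only other singularity of $\Psi(Q_-e^{-\zeta})$ sits at $\zeta=-\log(Q_+/Q_-)$, which for $z\sim 1/2$ lies at $\zeta=\pm i\pi$, so these rotations remain inside the domain of holomorphy.

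The principal obstacle is the clean identification of the exponent $\beta=1/2$ together with the holomorphy of the coefficients $a_n(z)$ of the expansion, since one must rule out accidental vanishings in the numerator $2w_1-w_0-w_2$ and in the finite factors of the denominator. Both points can be verified directly from the explicit trigonometric formulas \eqref{equationsolutionswz}: at $Q=Q_-$ the argument of $\arccos$ in \eqref{equationsolutionswz} equals $-1$, which forces $w_0$ and $w_2$ to branch as a $\sqrt{\cdot}$ pair through $w_-$ while leaving $w_1$ regular at $w_+$, so that all the required nondegeneracies hold uniformly for $z$ in a neighborhood of $\tfrac12$.
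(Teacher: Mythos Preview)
Your argument is essentially the same as the paper's: the Laplace representation is obtained exactly as in the computation \eqref{equationI2Laplacecomputation} (you combine the paper's two substitutions $Q=e^{-\zeta}$ and the shift by $\zeta_-=-\log Q_-$ into the single change $Q=Q_-e^{-\zeta}$), and for the asymptotic you spell out in detail what the paper abbreviates as ``classical result in Borel--Laplace method'' with a reference to \cite{CS16}. Two small slips to fix: by \eqref{equationsingularsetsinducesbyQ} the argument of $\arccos$ equals $+1$ (not $-1$) at $Q=Q_-$, and at that point $w_1$ stays regular with value $w_1(Q_-,z)\neq w_-$ (not ``at $w_+$''); neither affects the structure of your proof.
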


\begin{proof}
    Formula \eqref{equationI2laplace} is proved in the computation \eqref{equationI2Laplacecomputation}. Notice that for any fixed $z\neq 0,1$, the Borel germ $\Psi(Q_-e^{-\zeta})$ growth like $e^{-\zeta}$ as $\zeta$ goes to $+\infty$. By the classical result in Borel-Laplace method (see for instance, the chapter $5$ of \cite{CS16}), $I_2$ in the viewpoint of \eqref{equationI2laplace} is holomorphic as $\mathfrak{Re} C>-1$ for all $z\neq 0,1$. The asymptotic behavior \eqref{equationI2asymp} as $C$ goes to infinity with $ -\pi + \varepsilon <\arg C < \pi -\varepsilon$ is a directly result of \eqref{equationI2laplace}.
\end{proof}

\begin{remark}
Rewriting the integrand by $Q^C=e^{-C \big(-\log Q\big)}$, the asymptotic expansion of $I_2$ 
can be obtained using the saddle point expansion around $z_\ast=\frac{1}{3} (1+z-\sqrt{z^2-z+1})$, where $z_\ast$ is saddle point of $-\log Q$ locating at the interval $[0,z]$. At $z=1/2$, the saddle point expansion of $I_2(C,z)$ is 
\begin{equation}
    Q_{-}^{-C}I_{2}(C,z)\sim\sqrt{\frac{\pi}{2}}\Big(\frac{1}{3}C^{-\frac{1}{2}}-\frac{11}{108}C^{-\frac{3}{2}}+\frac{265}{7776}C^{-\frac{5}{2}}-\frac{5075}{839808}C^{-\frac{7}{2}}+\frac{261457}{120932352}C^{-\frac{9}{2}}+\cdots\Big),
\end{equation}
which coincides with the Borel inverse transform $\mathfrak{B}^{-1} \big(\Psi(Q_-e^{-\zeta})\big)$ in \eqref{equationI2asymp} at $z=1/2$. Similar checks can be done in the region $0<z<1$.
\end{remark}
For $I_1$, as $z$ around $\frac12$, a directly computation yields
\begin{equation}
    \begin{aligned}
        I_{1}(C,z)
        =&\int_{1}^{+\infty}dw\Big[w(w-1)(w-z)\Big]^{C}
        \\
        =&
        \int_{0}^{+\infty}dQ Q^{C}\Big[\frac{1}{\frac{dQ}{dw}\Big|_{w_{1}(Q)}}\Big]
        \\
        =&
        \int_{0}^{+\infty}dQ \ Q^{C} \Big[\frac{1}{(w_1-w_0)(w_1-w_2)}\Big].
    \end{aligned}
\end{equation}

We used 
\begin{equation}
    \lim\limits_{Q\rightarrow+\infty} w_1(Q) = +\infty \quad \text{as } z \text{ near } \frac12
\end{equation}
in the second equality. This can be proved by using the equation \eqref{equationarccos}.
\begin{lemma}\label{lemmaHpropoty}
    For the germ $H(Q):=\left(\frac{1}{(w_1-w_0)(w_1-w_2)}\right)(Q)$, the singular point of this germ in variable $Q$ is $Q_+(z) $. Moreover, $H(Q)$ is integrable and has a double-branch at $Q_+$.
\end{lemma}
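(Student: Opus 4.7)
The plan is to locate the singularities of $H$ among the candidates $Q_\pm(z)$ (the only points at which the $w_i$ fail to be holomorphic, by \eqref{equationanalyticityofwz}--\eqref{equationw0w1holomorphic}), and then compute the local form of $H$ at the surviving singular point.

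First I would rule out $Q_-(z)$. By \eqref{equationw0w1holomorphic}, $w_1$ is holomorphic at $Q_-$, with value $w_1(Q_-,z) = w_+ \neq w_- = w_0(Q_-,z) = w_2(Q_-,z)$, so both factors $w_1 - w_0$ and $w_1 - w_2$ are nonzero there. Although $w_0$ and $w_2$ individually carry a square-root branch at $Q_-$ and are exchanged by monodromy via \eqref{equationanalyticityofwz}, the product $(w_1 - w_0)(w_1 - w_2)$ is symmetric under the swap $w_0 \leftrightarrow w_2$ and therefore extends as a single-valued, nonvanishing holomorphic function in a neighborhood of $Q_-$. Consequently $H$ is holomorphic and nonvanishing at $Q_-$.

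Next I would analyze the behavior at $Q_+(z)$. Here $w_1(Q_+,z) = w_2(Q_+,z) = w_+$ while $w_0$ is holomorphic at $Q_+$ with $w_0(Q_+,z) \neq w_+$, so $(w_1 - w_0)$ is a nonvanishing holomorphic germ at $Q_+$ and the entire singular behavior is carried by $(w_1 - w_2)$. Since $Q(w)$ is cubic and $w_+ \neq \tfrac{z+1}{3}$ (the unique zero of $Q''$), the critical point $w_+$ is nondegenerate, and Taylor expansion gives
\begin{equation*}
    Q(w) - Q_+ = \tfrac12 Q''(w_+)(w - w_+)^2 + O\!\left((w - w_+)^3\right), \qquad Q''(w_+) \neq 0.
\end{equation*}
Inverting yields two branches
\begin{equation*}
    w_{1,2}(Q,z) - w_+ = \pm \sqrt{\tfrac{2(Q - Q_+)}{Q''(w_+)}}\,\left(1 + O\!\left(\sqrt{Q - Q_+}\right)\right),
\end{equation*}
whose sign flip is exactly the monodromy exchange $w_1 \leftrightarrow w_2$ in \eqref{equationanalyticityofwz}. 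Thus $(w_1 - w_2)(Q,z)$ equals a nonvanishing holomorphic unit times $\sqrt{Q - Q_+}$, and consequently
\begin{equation*}
    H(Q) = \frac{c(z)}{\sqrt{Q - Q_+}}\left(1 + O\!\left(\sqrt{Q - Q_+}\right)\right), \qquad c(z) \neq 0.
\end{equation*}
This germ is integrable (the exponent $-\tfrac12$ exceeds $-1$) and flips sign under analytic continuation once around $Q_+$, which is exactly the double-branch structure claimed.

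The main obstacle, which is mild, is the bookkeeping needed to align the chosen square-root branch of $\sqrt{Q - Q_+}$ with the labels $w_1, w_2$ so that the $\pm$ in the local expansion matches the monodromy action stated in \eqref{equationanalyticityofwz}. This is a direct check using the explicit formulas \eqref{equationsolutionswz} together with the analytic continuation of $\arccos$ at $-1$ recorded in Appendix \ref{Appendixarccos}; the remainder of the argument is a straightforward local expansion.
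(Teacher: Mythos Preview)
Your proof is correct and takes a different, more structural route than the paper's. The paper substitutes the trigonometric parametrization \eqref{equationsolutionswz} to obtain the closed form $H(Q)=\tfrac29(-u/3)^{-1/2}\bigl(1-\tfrac43\sin^2 y\bigr)^{-1}$ with $y=\tfrac13\arccos(\cdot)$, then checks the limits $\lim_{Q\to Q_\pm}$ directly and computes the monodromies $\Delta_{Q_\pm}H$ from the exchange rules \eqref{equationanalyticityofwz}. You instead argue by symmetry and local normal form: at $Q_-$ the product $(w_1-w_0)(w_1-w_2)$ is invariant under the monodromy swap $w_0\leftrightarrow w_2$ and bounded, hence holomorphic and nonvanishing; at $Q_+$ the nondegeneracy $Q''(w_+)\neq 0$ and Morse inversion give the $\sqrt{Q-Q_+}$ behavior of $w_1-w_2$ directly. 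Your approach is cleaner and extends verbatim to $\int Q^C\,dw$ for higher-degree $Q$ with simple critical points; the paper's explicit computation, on the other hand, also outputs the precise monodromy $\Delta_{Q_+}H$ in \eqref{equationmonodromyH}, which is used immediately afterward to identify the Borel germ $\Phi$ appearing in $I_1$.

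Two small wording slips that do not affect validity: at $Q_+$ the factor $w_1-w_0$ is not holomorphic in $Q$ (since $w_1$ itself carries a $\sqrt{Q-Q_+}$ branch) but is a nonvanishing unit in $\mathbb{C}\{\sqrt{Q-Q_+}\}$, which is all you need; and the monodromy of $H$ around $Q_+$ is not literally $H\mapsto -H$ (only the leading term flips sign), but it is of order two, which is exactly the double-branch claim.
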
 

\begin{proof}
    Let $z$ near $\frac12$. A direct computation yields 
    \begin{equation}\label{equationinlemmaH}
        H(Q)=\frac29 \big(-\frac{u}3\big)^{-\frac12}\frac{1}{1-\frac43\sin^2y},
    \end{equation}
    where $y:=\frac13\arccos\left(\frac{3v}{2u}\left(-\frac{3}{u}\right)^{\frac12}\right)$ and $u,v$ are defined in equation \eqref{equationuv}. Notice that $\frac{3v}{2u}\left(-\frac{3}{u}\right)^{\frac12}$ is linear in $Q(z)$. Together with the discussion in equation \eqref{equationsingularsetsinducesbyQ}, we have that the possibly singular points of $H$ in $Q$ are $Q_{\pm}(z)$. Since $({\rm cont}_{Q_{\pm}})^2w_i=w_i$ for $i=0,1,2$, it is clear that 
    \begin{equation}
        ({\rm cont}_{Q_\pm})^2 H = H.
    \end{equation}The monodromy computation at $Q_\pm$ are, by using equations \eqref{equationanalyticityofwz} and \eqref{equationw0w1holomorphic},
\begin{equation}\label{equationmonodromyH}
\begin{split}
    \Delta_{Q_+}H:=&\left(\frac{1}{(w_1-w_0)(w_1-w_2)}\right)(Q)- \left({\rm cont}_{Q_+} \frac{1}{(w_1-w_0)(w_1-w_2)}\right)(Q) 
    \\
    =&
    \left(\frac{1}{(w_1-w_0)(w_1-w_2)}\right)(Q)
    -\left(\frac{1}{(w_0-w_2)(w_1-w_2)}\right)(Q)
    \\
    =&
    \left(\frac{w_1+w_2-2w_0}{(w_0-w_1)(w_0-w_2)(w_1-w_2)}\right)(Q).
    \\
    \Delta_{Q_-}H:=&\left(\frac{1}{(w_1-w_0)(w_1-w_2)}\right)(Q)- \left({\rm cont}_{Q_-} \frac{1}{(w_1-w_0)(w_1-w_2)}\right)(Q)=0.
\end{split}
\end{equation}
Moreover, by formula \eqref{equationinlemmaH}, one can show that the limit as $Q\rightarrow Q_\pm$ in principle sheet are
\begin{equation}
    \lim\limits_{Q\rightarrow Q_+} (Q-Q_+)^{\frac12} H(Q) < \infty \quad \text{and} \quad \lim\limits_{Q\rightarrow Q_-} H(Q) < \infty .
\end{equation}
Indeed, the existence of the limit is implied by the direct computations of the existence of the limit of 
\begin{equation}
    \lim\limits_{x\rightarrow -1}\frac{(x+1)^{\frac12}}{1-\frac43 \sin^2(\frac13\arccos x)} \quad \text{and}  \quad
    \lim\limits_{x\rightarrow 1}\frac{1}{1-\frac43 \sin^2(\frac13\arccos x)}
\end{equation} 
and the linearity between $x=\frac{3v}{2u}\left(-\frac{3}{u}\right)^{\frac12}$ and $Q$. We conclude that $H$ is regular at $Q_-$ and has an integrable double-branch singular point at $Q_+$.
\end{proof}
We then write $I_1$ to be a form of the Laplace transform on the real axis.
\begin{equation}
    I_1(C,z)=\int^{+\infty}_{-\infty}e^{-\zeta C} \left.\frac{1}{(w_1-w_0)(w_1-w_2)}\right|_{Q=e^{-\zeta}} e^{-\zeta} d\zeta.
\end{equation}
The singular points $\xi_m$ in variable $\zeta$ are determined by equation $Q_+=e^{-\zeta}$
\begin{equation}
    \xi_m := -\log Q_+ +2\pi im  \quad\text{with } m\in\mathbb{Z}.
\end{equation}

Notice that $-\log Q_+$ is almost $-\pi i - \log \frac{\sqrt3}{36}$ as $z$ tends to $\frac12$. For $\mathfrak{Im} C<0$,
\begin{equation}
\begin{split}
    I_1(C,z) 
    &=
    \int^{+\infty}_{-\infty}e^{-\zeta C} \left.\frac{1}{(w_1-w_0)(w_1-w_2)}\right|_{Q=e^{-\zeta}} e^{-\zeta} d\zeta
    \\
    &=
    \sum\limits_{m\geq 1} \int_{\gamma_{m}}e^{-\zeta C} \left.\frac{1}{(w_1-w_0)(w_1-w_2)}\right|_{Q=e^{-\zeta}} e^{-\zeta} d\zeta
    \\
    &=
    \sum\limits_{m\geq1} \int_{\xi_m}^{e^{i(\frac\pi2+\varepsilon)} \infty} e^{-C\zeta} \left.\frac{w_1
    +w_2-2w_0}{(w_0-w_1)(w_0-w_2)(w_1-w_2)}\right|_{Q=e^{-\zeta}} e^{-\zeta} d\zeta
    \\
    &=
    Q_+^{C}\sum\limits_{m\geq1} e^{-2\pi imC}\int_{0}^{e^{i(\frac\pi2+\varepsilon)} \infty} e^{-C\zeta} \Phi(Q_+e^{-\zeta})
\end{split}
\end{equation}
where the integral curves $\gamma_m$ in the second line of the above formula are shown in Figure \ref{StokesI1}. The computation from the second line to the third line is due to the monodromy computation in equation \eqref{equationmonodromyH} and the integrability of the singular points $Q_+$, which are all shown in Lemma \ref{lemmaHpropoty}. 

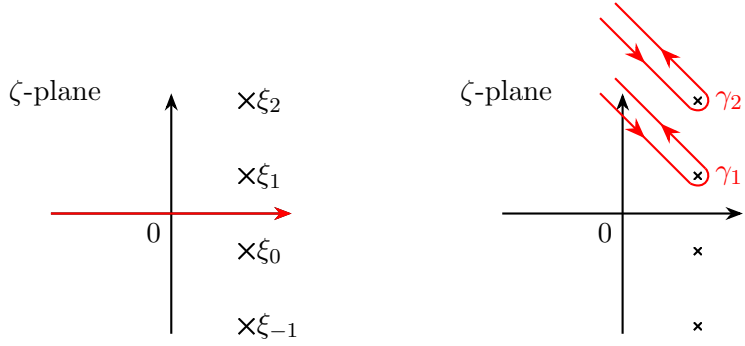
\begin{figure}[h]
\centering
\begin{tikzpicture}[scale=2,        
mid arrow/.style={
        postaction={decorate}, 
        decoration={
            markings,
            mark=at position 0.5 with {\arrow{Stealth[scale=1.2]}} 
        }
    }]

\begin{scope}[xshift=-1cm, local bounding box=left]
  \draw[-{Stealth[scale=1]}, thick] (-0.8,0)--(0.8,0); 
  \draw[-{Stealth[scale=1]}, thick] (0,-0.8)--(0,0.8); 
  \node[inner sep=3pt, path picture={
    \draw[thick] (path picture bounding box.south west) -- (path picture bounding box.north east);
    \draw[thick] (path picture bounding box.north west) -- (path picture bounding box.south east);
  }] at (0.5,-0.25) {};
    \node[inner sep=3pt, path picture={
    \draw[thick] (path picture bounding box.south west) -- (path picture bounding box.north east);
    \draw[thick] (path picture bounding box.north west) -- (path picture bounding box.south east);
  }] at (0.5,-0.75) {};
    \node[inner sep=3pt, path picture={
    \draw[thick] (path picture bounding box.south west) -- (path picture bounding box.north east);
    \draw[thick] (path picture bounding box.north west) -- (path picture bounding box.south east);
  }] at (0.5,0.75) {};
  \node[inner sep=3pt, path picture={
    \draw[thick] (path picture bounding box.south west) -- (path picture bounding box.north east);
    \draw[thick] (path picture bounding box.north west) -- (path picture bounding box.south east);
  }] at (0.5,0.25) {};
  \node[below left] at (0,0) {$0$};
  \node[left] at (-0.4,0.8) {$\zeta$-plane};
  \node[right] at (0.5,0.25) {$\xi_1$};
  \node[right] at (0.5,-0.25) {$\xi_0$};
  \node[right] at (0.5,-0.75) {$\xi_{-1}$};
  \node[right] at (0.5,0.75) {$\xi_2$};
  \draw[-{Stealth[scale=1]}, thick,red] (-0.8,0)--(0.8,0);
\end{scope}

\begin{scope}[xshift=2cm, local bounding box=right, dot/.style = {circle, fill, inner sep=0.5pt}]
  \draw[-{Stealth[scale=1]}, thick] (-0.8,0)--(0.8,0); 
  \draw[-{Stealth[scale=1]}, thick] (0,-0.8)--(0,0.8); 
  \node[inner sep=1.5pt, path picture={
    \draw[thick] (path picture bounding box.south west) -- (path picture bounding box.north east);
    \draw[thick] (path picture bounding box.north west) -- (path picture bounding box.south east);
  }] at (0.5,-0.25) {};
    \node[inner sep=1.5pt, path picture={
    \draw[thick] (path picture bounding box.south west) -- (path picture bounding box.north east);
    \draw[thick] (path picture bounding box.north west) -- (path picture bounding box.south east);
  }] at (0.5,-0.75) {};
    \node[inner sep=1.5pt, path picture={
    \draw[thick] (path picture bounding box.south west) -- (path picture bounding box.north east);
    \draw[thick] (path picture bounding box.north west) -- (path picture bounding box.south east);
  }] at (0.5,0.75) {};
  \node[inner sep=1.5pt, path picture={
    \draw[thick] (path picture bounding box.south west) -- (path picture bounding box.north east);
    \draw[thick] (path picture bounding box.north west) -- (path picture bounding box.south east);
  }] at (0.5,0.25) {};
  \node[below left] at (0,0) {$0$};
  \node[left] at (-0.4,0.8) {$\zeta$-plane};

\node[right,red] at (0.55,0.25) {$\gamma_1$};
  \draw[thick, red, mid arrow](-0.15,0.8)--(0.45,0.2);
  \draw[thick, red, mid arrow](0.55,0.3)--(-0.05,0.9);
\draw[thick,red] (0.45,0.2) .. controls (0.55,0.15) and (0.6,0.25) .. (0.55,0.3);

\node[right,red] at (0.55,0.75) {$\gamma_2$};
  \draw[thick, red, mid arrow](-0.15,1.3)--(0.45,0.7);
  \draw[thick, red, mid arrow](0.55,0.8)--(-0.05,1.4);
  \draw[thick,red] (0.45,0.7) .. controls (0.55,0.65) and (0.6,0.75) .. (0.55,0.8);
 
\end{scope}
\end{tikzpicture}
\caption{Writing the integral along the real axis as an integral on the sum of Hankel contours $\gamma_m$ (for $m\geq1$).}
\label{StokesI1}
\end{figure}

We conclude in the following Proposition.
\begin{proposition}\label{propositionI1BL}
    Let $Q(w)=w(w-1)(w-z)$ and $I_1(C,z):=\int_0^{+\infty} Q^Cdw$. Let $z$ close to $\frac12$ and let $Q_{\pm}(z)$ (see \eqref{equationcriticalvalueQpm}) be the critical values of the polynomial $Q(w)$. Then we have, for sufficient small $\varepsilon$ and for $\mathfrak{Im}C<0$,
\begin{equation}\label{equationI2zlaplace}
    I_1(C,z) = Q_+^C \frac{e^{-2\pi i C}}{1-e^{-2\pi i C}} \mathfrak{L}^{\frac\pi2+\varepsilon} \Phi(Q_+e^{-\zeta}),
\end{equation}
where
\begin{equation}
    \Phi(Q)= Q \cdot\left(\frac{w_1+w_2-2w_0}{(w_0-w_1)(w_0-w_2)(w_1-w_2)}\right)(Q)
\end{equation}
is defined in terms of $w_i$ by equations \eqref{equationsolutionswz}. Thus $I_1$ is exponentially small as $C$ goes to $\infty$ with $\mathfrak{Im} C<0$. More precisely, the trans-series expansion\footnote{There is no precise definition of such terminology. We use it here for a good correspondence between the location of the singular points and the power appeared in the expansion.} of $I_1$ is 
\begin{equation}\label{eq:I1-asy-exp}
    I_1(C,z) \sim Q_+^C \sum\limits_{m\geq1} e^{-2\pi imC} \mathfrak{B}^{-1} \Phi(Q_+e^{-\zeta})
\end{equation}
as $C$ goes to $\infty$ with $\mathfrak{Im} C<0$.
\end{proposition}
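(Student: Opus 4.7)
The plan is to follow the same three-step template used for $I_2$: push the $w$-integral through the $Q$-change of variables, rewrite as a Laplace integral in $\zeta=-\log Q$, then deform the contour to expose the geometric-series structure. First, parametrize $[1,+\infty)$ in the $w$-plane by the branch $w_1(Q,z)$; since $w_1(0,z)=1$ and $w_1\to+\infty$ as $Q\to+\infty$ along the real ray (which follows from the $\arccos$ representation \eqref{equationsolutionswz}), and since $dw=\bigl((w_1-w_0)(w_1-w_2)\bigr)^{-1}dQ$, the integral becomes $\int_0^{+\infty}Q^C H(Q)\,dQ$ with $H$ as in Lemma \ref{lemmaHpropoty}. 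Substituting $Q=e^{-\zeta}$ turns this into a Laplace integral along the real $\zeta$-axis with Borel germ $H(e^{-\zeta})e^{-\zeta}$.

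Next, I would deform the real-$\zeta$ contour into the union of Hankel loops $\gamma_m$ around the branch points $\xi_m=-\log Q_++2\pi im$, $m\in\mathbb{Z}$, as drawn in Figure \ref{StokesI1}. The hypothesis $\mathfrak{Im}\,C<0$ makes $e^{-C\zeta}$ decay along upward rays of argument $\tfrac{\pi}{2}+\varepsilon$, so the contours with $m\leq 0$ can be pushed off to infinity in the lower half-plane and contribute nothing, leaving only the sum over $m\geq 1$. The integrability of $H$ at $Q_+$ established in Lemma \ref{lemmaHpropoty} guarantees that the small circular piece of each $\gamma_m$ drops out, and the monodromy formula \eqref{equationmonodromyH} replaces the two-sided integral on $\gamma_m$ by the one-sided Laplace integral of the discontinuity $\Delta_{Q_+}H$, whose product with $Q$ is precisely $\Phi(Q)$.

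Finally, translating $\zeta\mapsto\zeta+\xi_m$ pulls out the exponential factor $e^{-C\xi_m}=Q_+^C e^{-2\pi imC}$ and collapses the $m$th Hankel integral to $\mathfrak{L}^{\tfrac{\pi}{2}+\varepsilon}\Phi(Q_+e^{-\zeta})$. Summing the geometric series $\sum_{m\geq 1}e^{-2\pi imC}=e^{-2\pi iC}/(1-e^{-2\pi iC})$, whose convergence is exactly the content of the condition $\mathfrak{Im}\,C<0$, yields formula \eqref{equationI2zlaplace}. Applying the Borel-inverse correspondence $\mathfrak{L}^{\theta}\mapsto\mathfrak{B}^{-1}$ to each summand produces the trans-series \eqref{eq:I1-asy-exp} term by term.

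The main obstacle I expect is the contour-deformation step: one must justify closing up the real-axis integral onto the vertical family of Hankel loops by uniform decay of $H(e^{-\zeta})e^{-\zeta}$ along the large arcs joining consecutive contours $\gamma_m$ and $\gamma_{m+1}$. This requires the explicit growth estimate for $w_1$ in the upper $\zeta$-half-plane from the $\arccos$ representation, combined with $\mathfrak{Im}\,C<0$, to kill both the closing arcs uniformly in $m$ and to secure absolute convergence of the geometric sum over $m\geq 1$. The rest of the proof is a rearrangement of the chain of equalities already laid out before the proposition statement.
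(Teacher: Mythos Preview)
Your proposal is correct and matches the paper's own argument essentially step for step: the change of variables $w\mapsto Q$ via the branch $w_1$, the rewriting as a bilateral Laplace integral in $\zeta=-\log Q$, the deformation onto the Hankel contours $\gamma_m$ for $m\geq1$ using $\mathfrak{Im}\,C<0$ and Figure~\ref{StokesI1}, the reduction of each Hankel integral via Lemma~\ref{lemmaHpropoty} and the monodromy formula \eqref{equationmonodromyH}, and the final geometric-series summation are exactly the chain of equalities the paper records just before the proposition. The paper's formal proof is the one-line ``Similar to the proof of Proposition~\ref{propositionI2BL}'' precisely because this computation has already been carried out in the text; your identification of the large-arc decay as the only analytic point needing justification is also apt, though the paper does not spell it out either.
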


\begin{proof}
Similar to the proof of Proposition \ref{propositionI2BL}. 
\end{proof}

\begin{remark}\label{rmk:I1-asym}
The Borel inverse transform in \eqref{equationI2zlaplace} is 
\begin{equation}
    \begin{aligned}
       \mathfrak{L}^{\frac{\pi}{2}+\varepsilon}\Phi(Q_{+}e^{-\zeta})\sim&\sqrt{\frac{\pi}{2}}\Big(-\frac{1}{3}C^{-\frac{1}{2}}+\frac{11}{108}C^{-\frac{3}{2}}-\frac{265}{7776}C^{-\frac{5}{2}}+\frac{5075}{839808}C^{-\frac{7}{2}}-\frac{261457}{120932352}C^{-\frac{9}{2}}+\cdots\Big)\\=:&\sum_{n=0}C^{-n-\frac{1}{2}}s_{n}.
    \end{aligned}
\end{equation}
On the other hand, we can introduce an expansion
\begin{equation}
    Q_{+}^{-C}\frac{1-e^{-2\pi iC}}{e^{-2\pi iC}}I_{1}(C,z)\sim\sum_{n=0}C^{-n-\frac{1}{2}}s_{n}^{\prime},
\end{equation}
by using the following hypergeometric function expression of $I_1$
\begin{equation}\label{eq:I1-hyper}
    I_{1}(z)=\frac{\Gamma(-3C-1)\Gamma(1+C)}{\Gamma(-2C)}{}_{2}F_{1}(-C,-1-3C,-2C,z).
\end{equation}
At large $C$, we can approach $s_n^\prime$ by plotting
\begin{equation}
    s_{n}^{\prime}\sim C^{n+\frac{1}{2}}\Big(Q_{+}^{-C}\frac{1-e^{-2\pi iC}}{e^{-2\pi iC}}I_{1}(C,z)-\sum_{m=0}^{n-1}C^{-m-\frac{1}{2}}s_{m}^{\prime}\Big).
\end{equation}
In Fig.\ref{fig:sn-plot}, we compare $s_n$ and $s_n^\prime$ at $z=1/2$ for pure imaginary negative  $C$. As we can see, $s_n$ and $s_n^\prime$ match well at large $C$. A similar comparison can be done for $0<z<1$.
\begin{figure}
    \centering
    \includegraphics[width=0.45\linewidth]{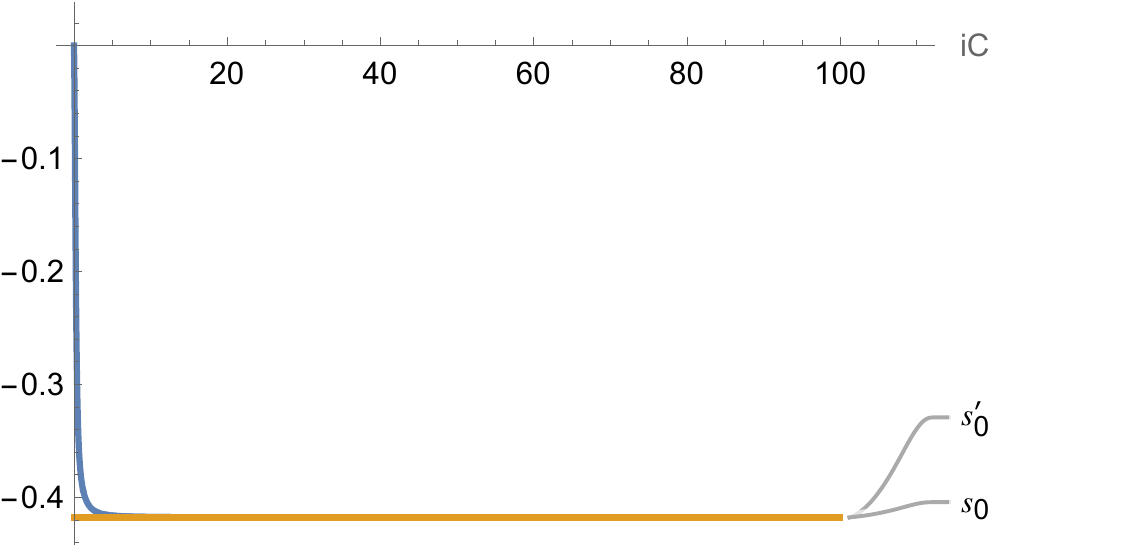}
    \includegraphics[width=0.45\linewidth]{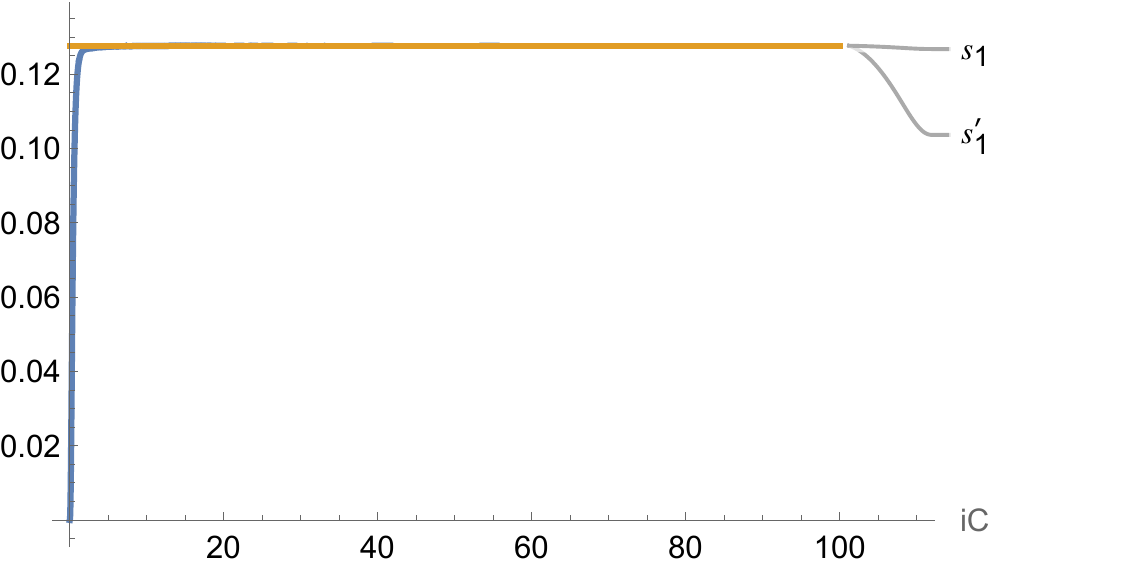}
    \includegraphics[width=0.45\linewidth]{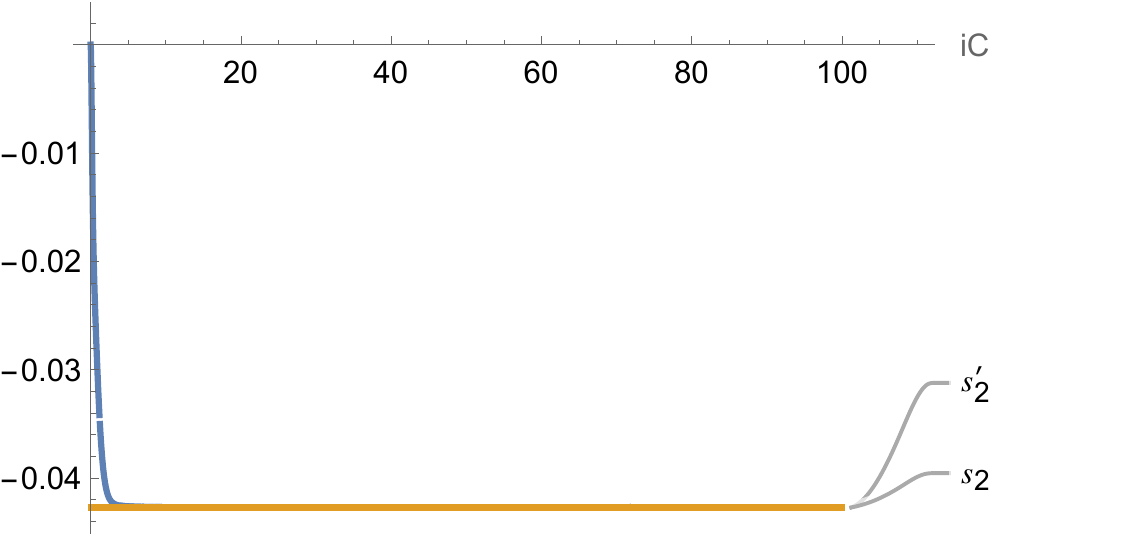}
    \includegraphics[width=0.45\linewidth]{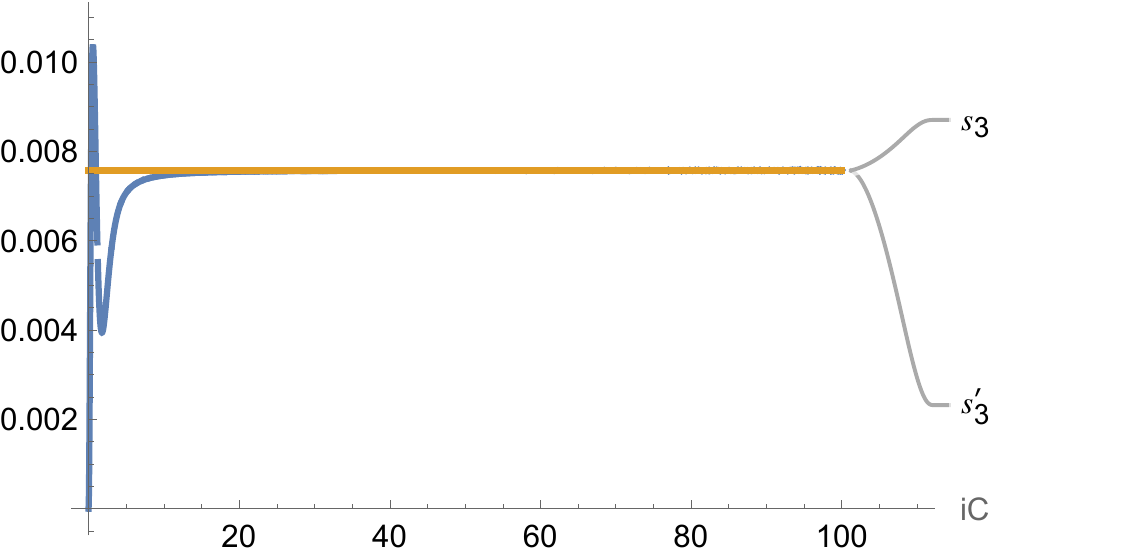}
    \caption{We compare $s_n$ and $s_n^\prime$ at $z=1/2$ for pure negative imaginary $C$, where $iC$ is pure real positive. }
    \label{fig:sn-plot}
\end{figure}
\end{remark}

\section{Alien culculus on $\Psi(Q_-e^{-\zeta})$ and $\Phi(Q_+e^{-\zeta})$}\label{sectionalien}

What we have shown are the Borel-Laplace expressions of $I_1$ and $I_2$ for $z$ near $\frac12$. In this section, we discuss the alien calculus of the Borel germs used in Proposition \ref{propositionI2BL} and Proposition \ref{propositionI1BL}. For abbreviation, we use
\begin{equation}\label{equationhphihpsi}
  \hpsi(\zeta):=\Psi(Q_-e^{-\zeta}) \quad \text{and} \quad   \hphi(\zeta):=\Phi(Q_+e^{-\zeta}).
\end{equation}
Of course, $\hpsi$ and $\hphi$ are functions of both $\zeta$ and $z$. We will later emphasize the dependence on $z$ and use the notations $\hpsi(\zeta, z)$ 
and $\hphi(\zeta, z)$ in Section \ref{subsection1-z}.

\begin{lemma}
    $\hpsi$ has singular points at
\begin{equation}\label{equationsingularpointsI2zeta}
    \{ \zeta^-_m := 2\pi i m , m\in\mathbb{Z}\}
    \quad\text{and}\quad\{ \zeta^+_m := -\log\left(\frac{Q_+}{Q_-}\right)+2\pi i m , m\in\mathbb{Z}\}
\end{equation}
and they are all integrable and double-branch. Similarly, $\hphi$ has singular points at 
\begin{equation}\label{equationsingularpointsI1zeta}
    \{ \zeta^-_m = 2\pi i m , m\in\mathbb{Z}\}
    \quad\text{and}\quad\{ -\zeta^+_{-m} = \log\left(\frac{Q_+}{Q_-}\right)+2\pi i m , m\in\mathbb{Z}\}
\end{equation}
and they are all integrable and double-branch.
\end{lemma}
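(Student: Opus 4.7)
The plan is to reduce the statement about $\hpsi$ and $\hphi$ in the $\zeta$-variable to one about $\Psi(Q)$ and $\Phi(Q)$ in the $Q$-variable via the exponential covering $Q = Q_\pm e^{-\zeta}$. This map is a local biholomorphism and $2\pi i$-periodic in $\zeta$, which accounts both for the additive $2\pi i \mathbb{Z}$ structure in \eqref{equationsingularpointsI2zeta}, \eqref{equationsingularpointsI1zeta} and for the fact that local analytic properties transfer verbatim from $Q$ to $\zeta$.

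First I would locate the singular set. By \eqref{equationsingularsetsinducesbyQ} the only candidate singular points of the $w_i(Q,z)$'s in $Q$ are $Q_\pm(z)$, so $\Psi(Q)$ and $\Phi(Q)$ are holomorphic away from $\{Q_+, Q_-\}$. Solving $Q_- e^{-\zeta} \in \{Q_+, Q_-\}$ yields precisely the two families in \eqref{equationsingularpointsI2zeta}, and analogously for $\hphi$ via $Q_+ e^{-\zeta} \in \{Q_+, Q_-\}$.

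Next I would establish the double-branch property. From \eqref{equationanalyticityofwz}, the monodromy around $Q_+$ acts on $(w_0,w_1,w_2)$ by the transposition $(w_1\,w_2)$, while around $Q_-$ it acts by $(w_0\,w_2)$. Each transposition squares to the identity, so $(\mathrm{cont}_{Q_\pm})^2$ fixes the rational expressions for $\Psi$ and $\Phi$, giving the double-branch condition. To see that the continuations are not equal to the germs themselves (so the singularities are genuine), one computes $\Delta_{Q_\pm}\Psi$ and $\Delta_{Q_\pm}\Phi$ by the same elementary symmetrization of the numerators used for $\Delta_{Q_\pm}H$ in \eqref{equationmonodromyH}; the resulting expressions are non-zero provided the relevant numerators do not vanish at the critical points.

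Finally, for integrability I would mimic the end of Lemma~\ref{lemmaHpropoty}. In the closed forms \eqref{equationsolutionswz}, the only source of non-holomorphicity is the branch of $\arccos$ at $\pm 1$, which contributes a $(Q-Q_\pm)^{1/2}$-factor; consequently $(w_1-w_2) = O((Q-Q_+)^{1/2})$ near $Q_+$ and $(w_0-w_2) = O((Q-Q_-)^{1/2})$ near $Q_-$, while all remaining differences stay bounded away from zero by \eqref{equationw0w1holomorphic}. A short calculation then yields $\Psi, \Phi = O((Q-Q_\pm)^{-1/2})$, which is integrable and lifts through the exponential covering to the claimed integrable double-branch singularities of $\hpsi$ and $\hphi$. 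The main technical point, and where I would be most careful, is verifying that the numerators $2w_1 - w_0 - w_2$ and $w_1+w_2-2w_0$ are non-vanishing at the relevant critical value (otherwise the inverse square root could cancel and the singularity disappear); this reduces to the inequalities $w_0(Q_+)\neq w_+$ and $w_1(Q_-)\neq w_-$ recorded just after \eqref{equationw0w1holomorphic}.
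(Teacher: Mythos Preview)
Your proposal is correct and follows essentially the same route as the paper: reduce to the $Q$-plane via the exponential covering, read off the candidate singular set $\{Q_\pm\}$ from \eqref{equationsingularsetsinducesbyQ}, use the transposition monodromies \eqref{equationanalyticityofwz}--\eqref{equationw0w1holomorphic} to get $(\mathrm{cont}_{Q_\pm})^2=\mathrm{id}$ on $\Psi,\Phi$, and check integrability by the square-root local behavior inherited from $\arccos$, exactly as in Lemma~\ref{lemmaHpropoty}. Your additional explicit verification that the numerators do not vanish at the critical values (hence the $(Q-Q_\pm)^{-1/2}$ singularity survives) is a welcome sharpening that the paper leaves implicit.
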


\begin{proof}
Both $\Psi$ and $\Phi$ may have singularities at $Q_{\pm}$, as established in the discussion following equation \eqref{equationsingularsetsinducesbyQ}. Consequently, the locations of the singular points given in equations \eqref{equationsingularpointsI2zeta} and \eqref{equationsingularpointsI1zeta} are direct computations, visualized in Figure \ref{figuresingularsethpsi}. 

From equations \eqref{equationanalyticityofwz} and \eqref{equationw0w1holomorphic}, we deduce that $\Psi$ and $\Phi$ satisfy the monodromy relations in the $Q$-plane:
\begin{equation}
(\text{cont}_{Q_{\pm}})^2 \Psi = \Psi \quad \text{and} \quad (\text{cont}_{Q_{\pm}})^2 \Phi = \Phi.
\end{equation}
This confirms that $\widehat{\psi}$ and $\widehat{\phi}$ exhibit exclusively double-branch singularities.

Furthermore, by mimicking the analysis of forms $\frac{1}{(w_\alpha-w_\beta)(w_\alpha-w_\gamma)}$ (where $\{\alpha,\beta,\gamma\}=\{0,1,2\}$) presented in Lemma \ref{lemmaHpropoty}, we demonstrate the integrability of all singular points $\zeta_m^-$ and $\zeta_m^+$.
\end{proof}

For an integrable singularity at $\omega$ in direction $d$, the alien operator $\Delta_\omega^+$ is defined as the composition of an analytic continuation difference and a transition map:\footnote{%
    Readers familiar with alien calculus will recognize this as a restriction of the general theory—defined on spaces of singularities (cf \cite{Ecalle81}, Chap.~1 or \cite{S05}, section $3.3$)—to the case of integrable singularities.
}
\begin{equation}
    (\Delta_\omega^+ f)(\zeta) := \Big( \mathrm{cont}_{+} f - \mathrm{cont}_{\omega}  \mathrm{cont}_{+} f \Big)(\zeta + \omega),
\end{equation}
where $\mathrm{cont}_{+}$ denotes analytic continuation along a path $\gamma_{+}$ from a neighborhood of the origin to a neighborhood of $\omega$, avoiding singular points to the right. See Figure \ref{figureDelta+}.

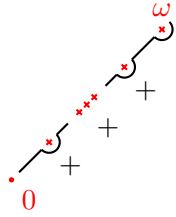
\begin{figure}[h]
\centering
    \begin{tikzpicture}[scale=0.5] 
    
    \foreach \point/\pos in {(1,1)/right,(3,3)/right,(4,4)/right,(2,2)/right,(2.2,2.2)/right,(1.8,1.8)/right}
    \draw[red,line width=0.8pt] \point ++(-2pt,-2pt) -- ++(4pt,4pt) ++(-4pt,0) -- ++(4pt,-4pt);
  \draw[thick] (0.2,0.2) -- (0.8,0.8); 
  \draw[thick] (1.2,1.2) -- (1.5,1.5); 
  \draw[thick] (2.5,2.5) -- (2.8,2.8); 
  \draw[thick] (3.2,3.2) -- (3.8,3.8); 

  \draw[thick] (1.2,1.2) arc (45:-135:0.282);
  \draw[thick] (3.2,3.2) arc (45:-135:0.282);
  \draw[thick] (4.2,4.2) arc (45:-135:0.282);
    \coordinate (A) at (0,0); 
  \fill[red] (A) circle (2pt) node[below right] {$0$};
  \node[below right] at (1,0.9) {$+$};
  \node[below right] at (2,1.9) {$+$};
  \node[below right] at (3,2.9) {$+$};
  \node[above,red] at (4,4.1) {$\omega$};
\end{tikzpicture}
\caption{For a point $\omega$ locaeed in direction $d$, the path used in ${\rm cont}_+$ is the black curve from a point near $0$ to a point near $\omega$. $+$ means going from right.}
\label{figureDelta+}
\end{figure}

Thus $\Delta^+$ is an operator that maps a germ at $\varepsilon e^{i d}$ (for $\varepsilon \ll 1$) to a germ holomorphic at the same base point. To conclude this brief introduction to the operator \(\Delta^+_\omega\), we emphasize---though we will not use this property in the present paper---that it satisfies a generalized Leibniz rule with respect to the convolution product.
\begin{equation}\label{equationalienleibniz}
    \Delta^+_{\omega} (f\ast g)= \sum\limits_{\omega_1+\omega_2=\omega \atop \omega_1,\omega_2\in [0,|\omega|]\cdot e^{id}} \Delta_{\omega_1}^+ f \ast \Delta_{\omega_2}^+ g, 
\end{equation}
for $f,g$ resurgent functions with only integrable singularities in direction $d$.
\ \\

On the other hand, one may similarly define $\Delta^-_\omega$. The relation between $\Delta^+$ and $\Delta^-$ will be mentioned in the Section \ref{subsectionStokesI2}.

\begin{lemma}\label{lemmaalienhphihpsi}
    For $\widehat{\psi}$ and $\hphi$ defined in \eqref{equationhphihpsi}, the alien operator acts as follows:
    \begin{equation}\label{equationalienhphihpsi}
        \Delta_{\zeta_m^-}^+ \widehat{\psi} = 2\widehat{\psi}, \quad 
        \Delta_{\zeta_m^+}^+ \widehat{\psi} = \hphi, \quad 
        \Delta_{\zeta_m^-}^+ \hphi= 2\hphi, \quad 
        \Delta_{-\zeta_{-m}^+}^+ \hphi = \widehat{\psi}.
    \end{equation}
\end{lemma}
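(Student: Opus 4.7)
The plan is to transport all four alien-derivative computations from the $\zeta$-plane to monodromy calculations in the $Q$-plane via the substitution $Q = Q_\pm e^{-\zeta}$, and then to evaluate those monodromies using the explicit root-permutation information recorded in \eqref{equationanalyticityofwz}--\eqref{equationw0w1holomorphic}.

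First I would observe that the exponential covering $\zeta \mapsto Q_\pm e^{-\zeta}$ sends each singular point $\zeta_m^\pm$ of $\widehat{\psi}$ (and $-\zeta_{-m}^+$ of $\widehat{\phi}$) onto the critical value $Q_\mp$ or $Q_\pm$ of $Q(w)$ to which it corresponds, and that a small loop around the $\zeta$-singularity descends to a small loop around the associated $Q_\pm$. Since $\Psi$ and $\Phi$ are holomorphic away from $\{Q_-, Q_+\}$ and the deck transformation $\zeta \mapsto \zeta + 2\pi i m$ leaves both $Q_\pm e^{-\zeta}$ invariant, every right-hand side in \eqref{equationalienhphihpsi} is independent of $m$. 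For the integrable double-branch singularities at hand, the operator $\Delta^+_\omega$ then reduces, after this transport, to the elementary discontinuity $\bigl(\mathrm{id} - \mathrm{cont}_{Q_\pm}\bigr)$ applied to $\Psi$ or $\Phi$ at the appropriate critical value, and the orientation reversal caused by $dQ/d\zeta = -Q_\pm e^{-\zeta}$ is absorbed uniformly into the convention of $\mathrm{cont}_{Q_\pm}$.

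Next I would carry out the four algebraic verifications, all of them short because the permutations of $w_0,w_1,w_2$ are already given. Under $\mathrm{cont}_{Q_-}$, the swap $w_0\leftrightarrow w_2$ with $w_1$ fixed and an anti-symmetric denominator give $\mathrm{cont}_{Q_-}\Psi = -\Psi$, hence $\Delta_{Q_-}\Psi = 2\Psi$, which pulls back to $\Delta^+_{\zeta_m^-}\widehat{\psi} = 2\widehat{\psi}$; the same mechanism at $Q_+$ (swap $w_1\leftrightarrow w_2$, $w_0$ fixed) gives $\mathrm{cont}_{Q_+}\Phi = -\Phi$ and hence $\Delta^+_{\zeta_m^-}\widehat{\phi} = 2\widehat{\phi}$. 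For the mixed identities, the same swap $w_1\leftrightarrow w_2$ applied to $\Psi$ yields
\[
\Psi - \mathrm{cont}_{Q_+}\Psi = Q \cdot \frac{(2w_1-w_0-w_2)+(2w_2-w_0-w_1)}{(w_0-w_1)(w_0-w_2)(w_1-w_2)} = \Phi,
\]
establishing $\Delta^+_{\zeta_m^+}\widehat{\psi} = \widehat{\phi}$; the parallel calculation with $w_0\leftrightarrow w_2$ applied to $\Phi$ produces the numerator $(w_1+w_2-2w_0)+(w_1+w_0-2w_2) = 2w_1 - w_0 - w_2$, giving $\Delta_{Q_-}\Phi = \Psi$ and thus $\Delta^+_{-\zeta^+_{-m}}\widehat{\phi} = \widehat{\psi}$.

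The only genuine obstacle I anticipate is bookkeeping the orientation conventions implicit in $\Delta^+_\omega$: one must confirm that the path $\mathrm{cont}_+$ in the $\zeta$-plane does descend through the non-injective covering $\zeta \mapsto Q_\pm e^{-\zeta}$ to a well-defined simple loop around $Q_\pm$ with the correct sense of rotation, and that this is consistent with the sign of $\mathrm{cont}_{Q_\pm}$ used in the four algebraic checks above. This is a one-time verification at $m=0$; once it is in place, the translation invariance of the covering extends all four identities to every $m\in\mathbb{Z}$, completing the lemma.
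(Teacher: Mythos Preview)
Your proposal is correct and follows essentially the same route as the paper: transport the alien operator through the covering $\zeta\mapsto Q_\pm e^{-\zeta}$, reduce $\Delta^+_\omega$ to $(\mathrm{id}-\mathrm{cont}_{Q_\pm})$ on $\Psi$ or $\Phi$, and evaluate the latter via the root swaps $w_0\leftrightarrow w_2$ at $Q_-$ and $w_1\leftrightarrow w_2$ at $Q_+$ recorded in \eqref{equationanalyticityofwz}--\eqref{equationw0w1holomorphic}. Your observation that the double-branch nature makes the orientation bookkeeping harmless is exactly why the paper does not dwell on it either, and your explicit remark on $m$-independence via the $2\pi i$-periodicity of the exponential is a small clarification the paper leaves implicit.
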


\begin{proof}
    We compute the action on $\widehat{\psi}$, omitting the analogous case for $\hphi$. Using \eqref{equationanalyticityofwz} and \eqref{equationw0w1holomorphic}, we derive:
    \begin{align}
    \Delta_{\zeta_m^+}^+ \widehat{\psi}
    &= \Big[ \Psi(Q_-e^{-\zeta}) - \mathrm{cont}_{\zeta_m^+} \Psi (Q_-e^{-\zeta}) \Big]_{\zeta \to \zeta + \zeta_m^+} 
    \nonumber \\
    &= \left. Q \cdot \left( \frac{2w_1 - w_0 - w_2}{(w_0 - w_1)(w_0 - w_2)(w_1 - w_2)} 
    - \frac{2w_0 - w_1 - w_2}{(w_1 - w_0)(w_1 - w_2)(w_0 - w_2)} \right) \right|_{Q=Q_-e^{-\zeta - \zeta_m^+}}
    \nonumber \\
    &= \Phi(Q_+ e^{-\zeta}) = \hphi(\zeta),
    \end{align}
    and similarly for $\zeta_m^-$:
    \begin{equation}
        \Delta_{\zeta_m^-}^+ \widehat{\psi} 
        = \Big[ \Psi(Q_-e^{-\zeta}) - \mathrm{cont}_{\zeta_m^-} \Psi (Q_-e^{-\zeta}) \Big]_{\zeta \to \zeta + \zeta_m^-} 
        = 2\widehat{\psi}(\zeta).
    \end{equation}
\end{proof}

\subsection{The Stokes phenomenon}\label{subsectionStokesI2}

In this section, we investigate the Stokes phenomenon associated with the integral $I_2$. The central result establishes that $I_1$ emerges intrinsically as a Stokes correction term in the asymptotic expansion of $I_2$, and vice versa. Our approach employs the symbolic Stokes automorphism 
\footnote{By equation \eqref{equationalienleibniz}, the operator $\SA^+_d$ is a morphism of a graded algebra. For the details, see for instance \cite{Ecalle81} or \cite{CS16}.} 
$\SA^+_d$, which operates on the Borel resummation of $I_2$ to quantify its discontinuity across Stokes lines. Recall that the definition is as follows.
    \begin{equation}\label{equationdefinitionSA}
        \SA_d^+ := \text{Id} +\sum\limits_{\omega\in\Omega\cap d} e^{-C\omega}\Delta^+_\omega,
    \end{equation}
which satisfies
\begin{equation}
    \mathfrak{L}^{d-\varepsilon} =  \mathfrak{L}^{d+\varepsilon} \circ \SA^+_d.
\end{equation}
In equation \eqref{equationdefinitionSA}, $\Omega$ denotes the singular set of the target holomorphic function, and $d$ is a singular ray originating from the origin. For instance, when $z = \frac{1}{2}$, the set $\{\zeta_m^-, m \in \mathbb{Z}\} \cup \{\zeta_m^+, m \in \mathbb{Z}\} = i\pi\mathbb{Z}$ forms a lattice on the imaginary axis. Thus, for $z$ in a neighborhood of $\frac{1}{2}$, the singular sets of $\widehat{\psi}$ and $\hphi$ are \underline{almost} the lattice $\pi i\mathbb{Z}$. We will employ the symbolic Stokes automorphism ${\SA}^+_{\frac{\pi}{2}}$ to aggregate contributions from all singular points along the ray with direction \underline{almost} $\theta = \frac{\pi}{2}$. The geometric configuration is illustrated in Figure \ref{figuresingularsethpsi}. 

We recall that the term $e^{-C\omega}$ commutes as a symbol with the Laplace transform operator under this framework, i.e., 
\begin{equation}
    \mathfrak{L}^\theta \left( e^{-C\omega}  f \right) = e^{-C\omega}  \mathfrak{L}^\theta f.
\end{equation}

The operator $\SA^+_d$ performs from the right-hand to the left-hand Laplace transforms, while its symmetric dual $\SA^-_d$ governs the Stokes automorphism in the reverse direction.
    
We have that   \begin{equation}\label{equationdefinitionSA}
        \SA_d^- := \text{Id} +\sum\limits_{\omega\in\Omega\cap d} e^{-C\omega}\Delta^-_\omega
    \end{equation}
and it satisfies that 
\begin{equation}
    \mathfrak{L}^{d+\varepsilon} =  \mathfrak{L}^{d-\varepsilon} \circ \SA^-_d
\end{equation}
and 
\begin{equation}\label{equationSAid}
    \SA^-_d \circ \SA^+_d = \text{Id}.
\end{equation}

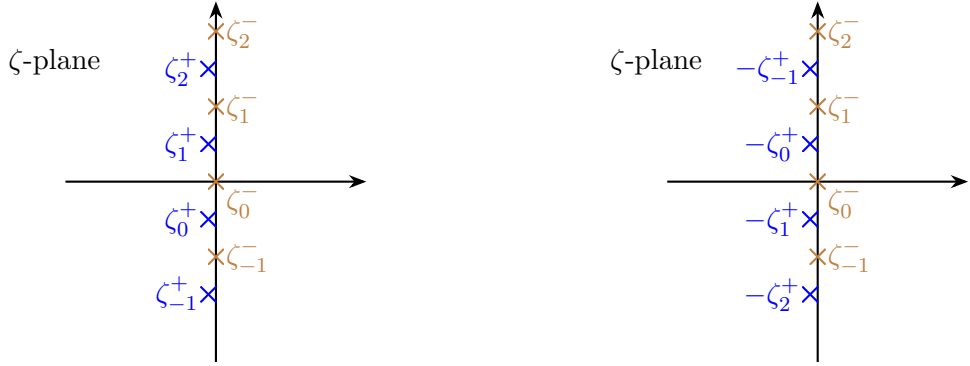
\begin{figure}[h]
\centering
\begin{tikzpicture}[scale=2,        
mid arrow/.style={
        postaction={decorate}, 
        decoration={
            markings,
            mark=at position 0.5 with {\arrow{Stealth[scale=1.2]}} 
        }
    }]

\begin{scope}[xshift=-2cm, local bounding box=left]
  \draw[-{Stealth[scale=1]}, thick] (-1,0)--(1,0); 
  \draw[-{Stealth[scale=1]}, thick] (0,-1.2)--(0,1.2); 
  \node[inner sep=3pt, path picture={
    \draw[thick,blue] (path picture bounding box.south west) -- (path picture bounding box.north east);
    \draw[thick,blue] (path picture bounding box.north west) -- (path picture bounding box.south east);
  }] at (-0.05,-0.25) {};
    \node[inner sep=3pt, path picture={
    \draw[thick,blue] (path picture bounding box.south west) -- (path picture bounding box.north east);
    \draw[thick,blue] (path picture bounding box.north west) -- (path picture bounding box.south east);
  }] at (-0.05,-0.75) {};
    \node[inner sep=3pt, path picture={
    \draw[thick,blue] (path picture bounding box.south west) -- (path picture bounding box.north east);
    \draw[thick,blue] (path picture bounding box.north west) -- (path picture bounding box.south east);
  }] at (-0.05,0.75) {};
  \node[inner sep=3pt, path picture={
    \draw[thick,blue] (path picture bounding box.south west) -- (path picture bounding box.north east);
    \draw[thick,blue] (path picture bounding box.north west) -- (path picture bounding box.south east);
  }] at (-0.05,0.25) {};
    \node[inner sep=3pt, path picture={
    \draw[thick,brown] (path picture bounding box.south west) -- (path picture bounding box.north east);
    \draw[thick,brown] (path picture bounding box.north west) -- (path picture bounding box.south east);
  }] at (0,0) {};
      \node[inner sep=3pt, path picture={
    \draw[thick,brown] (path picture bounding box.south west) -- (path picture bounding box.north east);
    \draw[thick,brown] (path picture bounding box.north west) -- (path picture bounding box.south east);
  }] at (0,0.5) {};
      \node[inner sep=3pt, path picture={
    \draw[thick,brown] (path picture bounding box.south west) -- (path picture bounding box.north east);
    \draw[thick,brown] (path picture bounding box.north west) -- (path picture bounding box.south east);
  }] at (0,1) {};
      \node[inner sep=3pt, path picture={
    \draw[thick,brown] (path picture bounding box.south west) -- (path picture bounding box.north east);
    \draw[thick,brown] (path picture bounding box.north west) -- (path picture bounding box.south east);
  }] at (0,-0.5) {};
  \node[left] at (-0.7,0.8) {$\zeta$-plane};
  \node[left,blue] at (-0.05,0.75) {$\zeta_2^+$};
  \node[left,blue] at (-0.05,0.25) {$\zeta_1^+$};
  \node[left,blue] at (-0.05,-0.25) {$\zeta_0^+$};
  \node[left,blue] at (-0.05,-0.75) {$\zeta_{-1}^+$};
   \node[right,brown] at (0,-0.5) {$\zeta_{-1}^-$};
   \node[right,brown] at (0,0.5) {$\zeta_{1}^-$};
   \node[right,brown] at (0,1) {$\zeta_{2}^-$};
   \node[below right,brown] at (0,0.05) {$\zeta_{0}^-$};
\end{scope}

   \begin{scope}[xshift=2cm,scale=1]
\draw[-{Stealth[scale=1]}, thick] (-1,0)--(1,0); 
  \draw[-{Stealth[scale=1]}, thick] (0,-1.2)--(0,1.2); 
  \node[inner sep=3pt, path picture={
    \draw[thick,blue] (path picture bounding box.south west) -- (path picture bounding box.north east);
    \draw[thick,blue] (path picture bounding box.north west) -- (path picture bounding box.south east);
  }] at (-0.05,-0.25) {};
    \node[inner sep=3pt, path picture={
    \draw[thick,blue] (path picture bounding box.south west) -- (path picture bounding box.north east);
    \draw[thick,blue] (path picture bounding box.north west) -- (path picture bounding box.south east);
  }] at (-0.05,-0.75) {};
    \node[inner sep=3pt, path picture={
    \draw[thick,blue] (path picture bounding box.south west) -- (path picture bounding box.north east);
    \draw[thick,blue] (path picture bounding box.north west) -- (path picture bounding box.south east);
  }] at (-0.05,0.75) {};
  \node[inner sep=3pt, path picture={
    \draw[thick,blue] (path picture bounding box.south west) -- (path picture bounding box.north east);
    \draw[thick,blue] (path picture bounding box.north west) -- (path picture bounding box.south east);
  }] at (-0.05,0.25) {};
    \node[inner sep=3pt, path picture={
    \draw[thick,brown] (path picture bounding box.south west) -- (path picture bounding box.north east);
    \draw[thick,brown] (path picture bounding box.north west) -- (path picture bounding box.south east);
  }] at (0,0) {};
      \node[inner sep=3pt, path picture={
    \draw[thick,brown] (path picture bounding box.south west) -- (path picture bounding box.north east);
    \draw[thick,brown] (path picture bounding box.north west) -- (path picture bounding box.south east);
  }] at (0,0.5) {};
      \node[inner sep=3pt, path picture={
    \draw[thick,brown] (path picture bounding box.south west) -- (path picture bounding box.north east);
    \draw[thick,brown] (path picture bounding box.north west) -- (path picture bounding box.south east);
  }] at (0,1) {};
      \node[inner sep=3pt, path picture={
    \draw[thick,brown] (path picture bounding box.south west) -- (path picture bounding box.north east);
    \draw[thick,brown] (path picture bounding box.north west) -- (path picture bounding box.south east);
  }] at (0,-0.5) {};
  \node[left] at (-0.7,0.8) {$\zeta$-plane};
  \node[left,blue] at (-0.05,0.75) {$-\zeta_{-1}^+$};
  \node[left,blue] at (-0.05,0.25) {$-\zeta_0^+$};
  \node[left,blue] at (-0.05,-0.25) {$-\zeta_1^+$};
  \node[left,blue] at (-0.05,-0.75) {$-\zeta_{2}^+$};
   \node[right,brown] at (0,-0.5) {$\zeta_{-1}^-$};
   \node[right,brown] at (0,0.5) {$\zeta_{1}^-$};
   \node[right,brown] at (0,1) {$\zeta_{2}^-$};
   \node[below right,brown] at (0,0.05) {$\zeta_{0}^-$};
  \end{scope}
\end{tikzpicture}
\caption{When $z$ lies in a neighborhood of $\frac{1}{2}$, the singular set of $\hpsi$ is depicted in the left panel, while that of $\hphi$ appears in the right panel. They are both lattice $\pi i\mathbb{Z}$ as $z=\frac12$.
}
\label{figuresingularsethpsi}
\end{figure}

\begin{lemma}
       Recall $\hpsi$ in equation \eqref{equationhphihpsi}. As $z$ near $\frac12$, we have
       \begin{equation}\label{equationSAhpsi}
           \SA^+_{\frac\pi2}\hpsi = \hpsi + 2\sum\limits_{m\geq1} e^{-2\pi imC} \hpsi + \left(\frac{Q_+}{Q_-}\right)^C \sum\limits_{m\geq1} e^{-2\pi imC} \hphi.
       \end{equation}
\end{lemma}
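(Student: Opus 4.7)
The plan is a direct application of the definition of the Stokes automorphism $\SA^+_{\pi/2}$ together with the alien derivative computations in Lemma~\ref{lemmaalienhphihpsi}. First I would identify, among the singular points listed in \eqref{equationsingularpointsI2zeta}, those that belong to the ray of (almost) direction $\pi/2$ emanating from the origin. Since $z$ lies close to $\frac12$, we have $Q_\pm \sim \mp\frac{\sqrt 3}{36}$, so $-\log(Q_+/Q_-) \sim -i\pi$ (choosing the branch so that the remaining singular points in the upper half plane are $\pi i, 2\pi i, 3\pi i, \dots$). Hence for $m\ge 1$ the points $\zeta_m^-=2\pi im$ and $\zeta_m^+=-\log(Q_+/Q_-)+2\pi im$ all lie (almost) on the positive imaginary axis, while the remaining members of the families fall in the lower half plane and are therefore not summed over in $\SA^+_{\pi/2}$.

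Next I would substitute into the defining formula
\begin{equation*}
  \SA^+_{\pi/2} \hpsi = \hpsi + \sum_{m\ge 1} e^{-C\zeta_m^-}\Delta^+_{\zeta_m^-}\hpsi + \sum_{m\ge 1} e^{-C\zeta_m^+}\Delta^+_{\zeta_m^+}\hpsi.
\end{equation*}
By Lemma~\ref{lemmaalienhphihpsi}, $\Delta^+_{\zeta_m^-}\hpsi = 2\hpsi$ and $\Delta^+_{\zeta_m^+}\hpsi = \hphi$, so each term in the first sum contributes $2e^{-2\pi imC}\hpsi$, while each term in the second contributes $e^{C\log(Q_+/Q_-)}e^{-2\pi imC}\hphi = (Q_+/Q_-)^C e^{-2\pi imC}\hphi$. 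Collecting the sums yields exactly \eqref{equationSAhpsi}.

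The only delicate point is the convention issue that the singular ray is only asymptotically $\pi/2$ when $z$ moves away from $\frac12$. I would address this by invoking the symbolic (graded) nature of the Stokes automorphism, where the exponential weights $e^{-C\omega}$ commute with Laplace transforms and the sum is independent of a small rotation of the direction as long as no singular point is crossed. In practice the two families $\{\zeta_m^-\}$ and $\{\zeta_m^+\}$ remain on a common ray (obtained by a small tilt of the imaginary axis) provided $z$ is sufficiently close to $\frac12$, and this is the only regime in which the lemma is asserted. This is the step that requires the most care, but it is not a genuine obstacle once the grading convention of $\SA^+$ is fixed. The rest of the proof is a one-line substitution.
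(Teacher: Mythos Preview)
Your proposal is correct and follows exactly the approach of the paper, which simply cites the definition of $\SA^+_{\frac\pi2}$ and Lemma~\ref{lemmaalienhphihpsi}; you have merely spelled out the substitution explicitly. One small inaccuracy: for $z\neq\frac12$ the families $\{\zeta_m^-\}$ and $\{\zeta_m^+\}$ do \emph{not} lie on a common ray (cf.\ Figure~\ref{figuresingularsethpsi}), but this is harmless since, as the paper emphasizes, $\SA^+_{\frac\pi2}$ is interpreted as gathering all singular contributions in the sector swept between directions $0$ and $\frac\pi2+\varepsilon$.
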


\begin{proof}
  This is a result of the definition of $\SA_{\frac\pi2}^+$ in equation \eqref{equationdefinitionSA} and alien calculus in Lemma \ref{lemmaalienhphihpsi}.
\end{proof}

\begin{lemma}\label{lemmaStokeshpsi}
Stokes phenomenon of $\hpsi$ is
    \begin{equation}
        \mathfrak{L}^{0} \hpsi = \mathfrak{L}^{\frac{\pi}{2}+\varepsilon} \circ \SA^+_{\frac\pi2} \hpsi =
        \mathfrak{L}^{\frac{\pi}{2}+\varepsilon}\hpsi + 2\sum\limits_{m\geq1} e^{-2\pi imC} \mathfrak{L}^{\frac{\pi}{2}+\varepsilon}\hpsi + \left(\frac{Q_+}{Q_-}\right)^C \sum\limits_{m\geq1} e^{-2\pi imC} \mathfrak{L}^{\frac{\pi}{2}+\varepsilon}\hphi.
    \end{equation}
\end{lemma}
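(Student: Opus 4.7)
The plan is to invoke the defining relation of the Stokes automorphism $\mathfrak{L}^{d-\varepsilon}=\mathfrak{L}^{d+\varepsilon}\circ\SA^+_d$ along the singular direction $d=\frac{\pi}{2}$, and then substitute the explicit action of $\SA^+_{\pi/2}$ on $\hpsi$ given by equation \eqref{equationSAhpsi}.

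The first step is to identify $\mathfrak{L}^0\hpsi$ with $\mathfrak{L}^{\pi/2-\varepsilon}\hpsi$ via a contour rotation. Since the singular set of $\hpsi$ lies entirely on the imaginary axis (a subset of $i\pi\mathbb{Z}$ when $z=\tfrac{1}{2}$, and remains confined to directions close to $\pm\tfrac{\pi}{2}$ for $z$ near $\tfrac{1}{2}$), the open sector $\{\arg\zeta\in(0,\pi/2-\varepsilon)\}$ is free of singularities. Moreover, since $\Psi$ is holomorphic at $0$ with $\Psi(0)=0$ (a direct computation from the initial conditions $w_0(0,z)=0$, $w_1(0,z)=1$, $w_2(0,z)=z$ combined with the explicit factor of $Q$ in the definition of $\Psi$), we obtain the exponential decay $\hpsi(\zeta)=\Psi(Q_-e^{-\zeta})=O(e^{-\zeta})$ as $\mathfrak{Re}\zeta\to+\infty$. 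A standard application of Cauchy's theorem on a large quarter-arc in this sector then yields $\mathfrak{L}^0\hpsi=\mathfrak{L}^{\pi/2-\varepsilon}\hpsi$ for $\mathfrak{Re}C>-1$.

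The second step applies the Stokes automorphism identity in the form $\mathfrak{L}^{\pi/2-\varepsilon}=\mathfrak{L}^{\pi/2+\varepsilon}\circ\SA^+_{\pi/2}$ and substitutes equation \eqref{equationSAhpsi}. Using that each exponential symbol $e^{-C\omega}$ commutes with $\mathfrak{L}^{\pi/2+\varepsilon}$ and distributing it term by term, the contributions $\zeta_m^-=2\pi im$ produce the factor $e^{-C\zeta_m^-}=e^{-2\pi imC}$, giving the coefficient $2\sum_{m\geq1}e^{-2\pi imC}$ in front of $\mathfrak{L}^{\pi/2+\varepsilon}\hpsi$, while the contributions from $\zeta_m^+=-\log(Q_+/Q_-)+2\pi im$ yield $e^{-C\zeta_m^+}=(Q_+/Q_-)^Ce^{-2\pi imC}$, giving the $(Q_+/Q_-)^C\sum_{m\geq1}e^{-2\pi imC}$ prefactor in front of $\mathfrak{L}^{\pi/2+\varepsilon}\hphi$.

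The main obstacle is not algebraic—the manipulation is a clean application of established resurgence identities together with Lemma \ref{lemmaalienhphihpsi}—but rather analytic: ensuring that the contour rotation in the first step is valid and that the infinite series on the right-hand side converges as a genuine (not merely symbolic) equality. The convergence is guaranteed in the half-plane $\mathfrak{Im}C<0$ so that $|e^{-2\pi imC}|$ decays exponentially in $m$, matching the regime already in force in Proposition \ref{propositionI1BL}; elsewhere the identity should be understood symbolically at the level of trans-series.
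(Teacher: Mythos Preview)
Your proof is correct and follows essentially the same approach as the paper: the paper's proof is a one-liner citing the defining relation of $\SA^+_d$ (equation \eqref{equationdefinitionSA}) and the computed action on $\hpsi$ (equation \eqref{equationSAhpsi}). You have simply made explicit two points the paper leaves tacit---the contour rotation $\mathfrak{L}^0\hpsi=\mathfrak{L}^{\pi/2-\varepsilon}\hpsi$ (justified via the decay $\hpsi(\zeta)=O(e^{-\zeta})$, which the paper notes in the proof of Proposition~\ref{propositionI2BL}) and the convergence of the series for $\mathfrak{Im}C<0$---so your argument is a correct elaboration rather than a different route.
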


\begin{proof}
    Use equations \eqref{equationdefinitionSA} and \eqref{equationSAhpsi}. 
\end{proof}

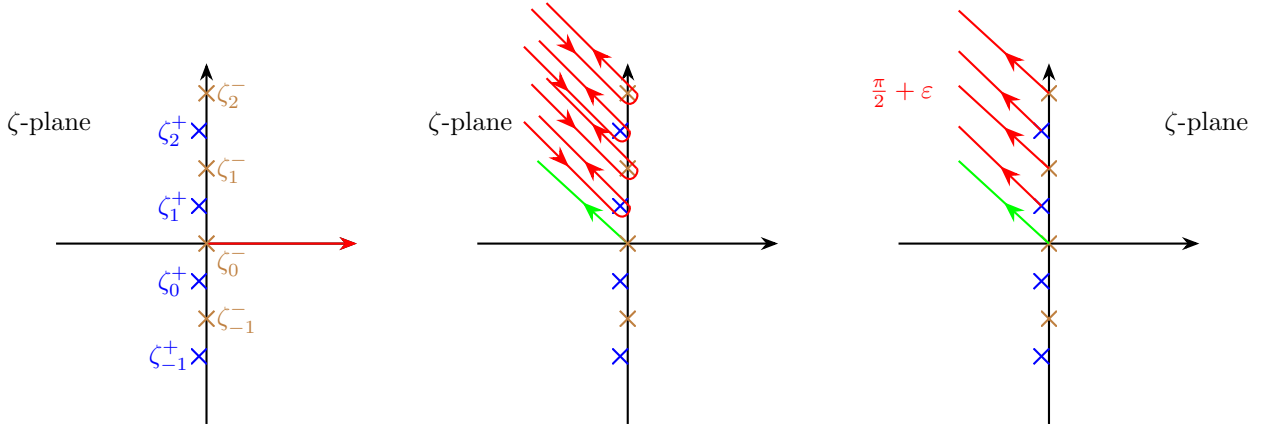
\begin{figure}[h]
\centering
\begin{tikzpicture}[scale=2, mid arrow/.style={
        postaction={decorate}, 
        decoration={
            markings,
            mark=at position 0.5 with {\arrow{Stealth[scale=1.2]}} 
        }
    },font=\small]

\begin{scope}[xshift=-1cm, local bounding box=left]
  \draw[-{Stealth[scale=1]}, thick] (-1,0)--(1,0); 
  \draw[-{Stealth[scale=1]}, thick,red] (0,0)--(1,0);
  \draw[-{Stealth[scale=1]}, thick] (0,-1.2)--(0,1.2); 
  \node[inner sep=3pt, path picture={
    \draw[thick,blue] (path picture bounding box.south west) -- (path picture bounding box.north east);
    \draw[thick,blue] (path picture bounding box.north west) -- (path picture bounding box.south east);
  }] at (-0.05,-0.25) {};
    \node[inner sep=3pt, path picture={
    \draw[thick,blue] (path picture bounding box.south west) -- (path picture bounding box.north east);
    \draw[thick,blue] (path picture bounding box.north west) -- (path picture bounding box.south east);
  }] at (-0.05,-0.75) {};
    \node[inner sep=3pt, path picture={
    \draw[thick,blue] (path picture bounding box.south west) -- (path picture bounding box.north east);
    \draw[thick,blue] (path picture bounding box.north west) -- (path picture bounding box.south east);
  }] at (-0.05,0.75) {};
  \node[inner sep=3pt, path picture={
    \draw[thick,blue] (path picture bounding box.south west) -- (path picture bounding box.north east);
    \draw[thick,blue] (path picture bounding box.north west) -- (path picture bounding box.south east);
  }] at (-0.05,0.25) {};
    \node[inner sep=3pt, path picture={
    \draw[thick,brown] (path picture bounding box.south west) -- (path picture bounding box.north east);
    \draw[thick,brown] (path picture bounding box.north west) -- (path picture bounding box.south east);
  }] at (0,0) {};
      \node[inner sep=3pt, path picture={
    \draw[thick,brown] (path picture bounding box.south west) -- (path picture bounding box.north east);
    \draw[thick,brown] (path picture bounding box.north west) -- (path picture bounding box.south east);
  }] at (0,0.5) {};
      \node[inner sep=3pt, path picture={
    \draw[thick,brown] (path picture bounding box.south west) -- (path picture bounding box.north east);
    \draw[thick,brown] (path picture bounding box.north west) -- (path picture bounding box.south east);
  }] at (0,1) {};
      \node[inner sep=3pt, path picture={
    \draw[thick,brown] (path picture bounding box.south west) -- (path picture bounding box.north east);
    \draw[thick,brown] (path picture bounding box.north west) -- (path picture bounding box.south east);
  }] at (0,-0.5) {};
  \node[left] at (-0.7,0.8) {$\zeta$-plane};
  \node[left,blue] at (-0.05,0.75) {$\zeta_2^+$};
  \node[left,blue] at (-0.05,0.25) {$\zeta_1^+$};
  \node[left,blue] at (-0.05,-0.25) {$\zeta_0^+$};
  \node[left,blue] at (-0.05,-0.75) {$\zeta_{-1}^+$};
   \node[right,brown] at (0,-0.5) {$\zeta_{-1}^-$};
   \node[right,brown] at (0,0.5) {$\zeta_{1}^-$};
   \node[right,brown] at (0,1) {$\zeta_{2}^-$};
   \node[below right,brown] at (0,0.05) {$\zeta_{0}^-$};
\end{scope}

\begin{scope}[xshift=1.8cm, local bounding box=middle]
  \draw[-{Stealth[scale=1]}, thick] (-1,0)--(1,0); 
  \draw[thick,green,mid arrow] (0,0)--(-0.6,0.55); 
  \draw[-{Stealth[scale=1]}, thick] (0,-1.2)--(0,1.2); 
  \node[inner sep=3pt, path picture={
    \draw[thick,blue] (path picture bounding box.south west) -- (path picture bounding box.north east);
    \draw[thick,blue] (path picture bounding box.north west) -- (path picture bounding box.south east);
  }] at (-0.05,-0.25) {};
    \node[inner sep=3pt, path picture={
    \draw[thick,blue] (path picture bounding box.south west) -- (path picture bounding box.north east);
    \draw[thick,blue] (path picture bounding box.north west) -- (path picture bounding box.south east);
  }] at (-0.05,-0.75) {};
    \node[inner sep=3pt, path picture={
    \draw[thick,blue] (path picture bounding box.south west) -- (path picture bounding box.north east);
    \draw[thick,blue] (path picture bounding box.north west) -- (path picture bounding box.south east);
  }] at (-0.05,0.75) {};
  \node[inner sep=3pt, path picture={
    \draw[thick,blue] (path picture bounding box.south west) -- (path picture bounding box.north east);
    \draw[thick,blue] (path picture bounding box.north west) -- (path picture bounding box.south east);
  }] at (-0.05,0.25) {};
    \node[inner sep=3pt, path picture={
    \draw[thick,brown] (path picture bounding box.south west) -- (path picture bounding box.north east);
    \draw[thick,brown] (path picture bounding box.north west) -- (path picture bounding box.south east);
  }] at (0,0) {};
      \node[inner sep=3pt, path picture={
    \draw[thick,brown] (path picture bounding box.south west) -- (path picture bounding box.north east);
    \draw[thick,brown] (path picture bounding box.north west) -- (path picture bounding box.south east);
  }] at (0,0.5) {};
      \node[inner sep=3pt, path picture={
    \draw[thick,brown] (path picture bounding box.south west) -- (path picture bounding box.north east);
    \draw[thick,brown] (path picture bounding box.north west) -- (path picture bounding box.south east);
  }] at (0,1) {};
      \node[inner sep=3pt, path picture={
    \draw[thick,brown] (path picture bounding box.south west) -- (path picture bounding box.north east);
    \draw[thick,brown] (path picture bounding box.north west) -- (path picture bounding box.south east);
  }] at (0,-0.5) {};
  \node[left] at (-0.7,0.8) {$\zeta$-plane};

  

  \draw[thick, red, mid arrow](-0.64,1.06)--(-0.04,0.46);
  \draw[thick, red, mid arrow](0.06,0.5)--(-0.54,1.1);
\draw[thick,red] (-0.04,0.46) .. controls (0.03,0.38) and (0.08,0.48) .. (0.06,0.5);

  \draw[thick, red, mid arrow](-0.64,1.56)--(-0.04,0.96);
  \draw[thick, red, mid arrow](0.06,1)--(-0.54,1.6);
\draw[thick,red] (-0.04,0.96) .. controls (0.03,0.88) and (0.08,0.98) .. (0.06,1);

  \draw[thick, red, mid arrow](-0.69,1.31)--(-0.09,0.71);
  \draw[thick, red, mid arrow](0.01,0.75)--(-0.59,1.35);
\draw[thick,red] (-0.09,0.71) .. controls (-0.02,0.63) and (0.03,0.73) .. (0.01,0.75);

  \draw[thick, red, mid arrow](-0.69,0.81)--(-0.09,0.21);
  \draw[thick, red, mid arrow](0.01,0.25)--(-0.59,0.85);
\draw[thick,red] (-0.09,0.21) .. controls (-0.02,0.13) and (0.03,0.23) .. (0.01,0.25);
\end{scope}

\begin{scope}[xshift=4.6cm, local bounding box=right]
  \draw[-{Stealth[scale=1]}, thick] (-1,0)--(1,0); 
  \draw[-{Stealth[scale=1]}, thick] (0,-1.2)--(0,1.2); 
  \node[inner sep=3pt, path picture={
    \draw[thick,blue] (path picture bounding box.south west) -- (path picture bounding box.north east);
    \draw[thick,blue] (path picture bounding box.north west) -- (path picture bounding box.south east);
  }] at (-0.05,-0.25) {};
    \node[inner sep=3pt, path picture={
    \draw[thick,blue] (path picture bounding box.south west) -- (path picture bounding box.north east);
    \draw[thick,blue] (path picture bounding box.north west) -- (path picture bounding box.south east);
  }] at (-0.05,-0.75) {};
    \node[inner sep=3pt, path picture={
    \draw[thick,blue] (path picture bounding box.south west) -- (path picture bounding box.north east);
    \draw[thick,blue] (path picture bounding box.north west) -- (path picture bounding box.south east);
  }] at (-0.05,0.75) {};
  \node[inner sep=3pt, path picture={
    \draw[thick,blue] (path picture bounding box.south west) -- (path picture bounding box.north east);
    \draw[thick,blue] (path picture bounding box.north west) -- (path picture bounding box.south east);
  }] at (-0.05,0.25) {};
    \node[inner sep=3pt, path picture={
    \draw[thick,brown] (path picture bounding box.south west) -- (path picture bounding box.north east);
    \draw[thick,brown] (path picture bounding box.north west) -- (path picture bounding box.south east);
  }] at (0,0) {};
      \node[inner sep=3pt, path picture={
    \draw[thick,brown] (path picture bounding box.south west) -- (path picture bounding box.north east);
    \draw[thick,brown] (path picture bounding box.north west) -- (path picture bounding box.south east);
  }] at (0,0.5) {};
      \node[inner sep=3pt, path picture={
    \draw[thick,brown] (path picture bounding box.south west) -- (path picture bounding box.north east);
    \draw[thick,brown] (path picture bounding box.north west) -- (path picture bounding box.south east);
  }] at (0,1) {};
      \node[inner sep=3pt, path picture={
    \draw[thick,brown] (path picture bounding box.south west) -- (path picture bounding box.north east);
    \draw[thick,brown] (path picture bounding box.north west) -- (path picture bounding box.south east);
  }] at (0,-0.5) {};
  \node[right] at (0.7,0.8) {$\zeta$-plane};
  \draw[thick,green,mid arrow] (0,0)--(-0.6,0.55); 
  \draw[thick,red,mid arrow] (0,0.5)--(-0.6,1.05); 
  \draw[thick,red,mid arrow] (0,1)--(-0.6,1.55); 
  \draw[thick,red,mid arrow] (-0.05,0.25)--(-0.6,0.78); 
  \draw[thick,red,mid arrow] (-0.05,0.75)--(-0.6,1.28); 
  \node[left,red] at (-0.7,1) {$\frac\pi2+\varepsilon$};
\end{scope}
\end{tikzpicture}
\caption{
When $z$ is near $\frac12$, as in Proposition \ref{propositionI2BL}, $\hpsi$ as singular points shown in the left picture. The middle picture is the integral curve that we use when we perform analytic continuation in the variable $C$ from $\mathfrak{Re}C>0$ to $\mathfrak{Im}C<0$. These integrals (except for the green one) provide the Stokes terms along the (almost) $\frac\pi2$ direction. Since every singular point is integrable, these integrals on the Hankel contours can be expressed by the Laplace transform on $\Delta_\omega^+\hpsi$ along the $\frac\pi2+\varepsilon$ direction attached to each singular point $\omega$, as shown in the right picture. In the proof of Lemma \ref{lemmaStokeshpsi}, we have to move every integral back to the origin such that they are the standard Laplace transform starting at the origin.}
\label{figureStokesI2}
\end{figure}

Thus, those alien computations yield our main theorem.

\begin{theorem}\label{theoremI2Stokes}
    Recall that in Proposition \ref{propositionI2BL} and Proposition \ref{propositionI1BL}:
    \begin{equation}
        I_1(C,z) = Q_+^C \frac{1}{e^{2\pi iC}-1}\mathfrak{L}^{\frac\pi2+\varepsilon} \hphi ,\quad I_2=Q_-^{C} \mathfrak{L}^0 \hpsi
    \end{equation}
as $z$ near $\frac12$. 
The Stokes phenomenon of $I_2$ associated with these singular points on the positive imaginary half plane is
    \begin{equation}\label{equationI2Stokes}
        I_2(C,z)=Q_-^C \frac{1+e^{-2\pi iC}}{1-e^{-2\pi iC}}\mathfrak{L}^{\frac\pi2+\varepsilon} \Psi(Q_-e^{-\zeta}) +I_1(C,z)  , \quad \mathfrak{Im} C<0.
    \end{equation}
The trans-series expansion of $I_2$ as $C$ goes to $\infty$ with $\mathfrak{Im} C<0$ is
\begin{equation}
\begin{split}
     I_2(C,z) 
     &\sim
     Q_-^{C} \mathfrak{B}^{-1} \Psi(Q_-e^{-\zeta})
    \\
    &+
    2Q_-^C\sum\limits_{m\geq0} e^{-2\pi imC} \mathfrak{B}^{-1} \Psi(Q_-e^{-\zeta})
    +
     Q_+^C \sum\limits_{m\geq1} e^{-2\pi imC} \mathfrak{B}^{-1} \Phi(Q_+e^{-\zeta}).
\end{split}
\end{equation}
\end{theorem}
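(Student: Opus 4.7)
The plan is to deduce the claimed Stokes identity for $I_2(C,z)$ by combining the Borel-Laplace representation of Proposition~\ref{propositionI2BL} with the Stokes relation for $\hpsi$ already recorded in Lemma~\ref{lemmaStokeshpsi}, and then recognising the exponentially small correction as $I_1$ via Proposition~\ref{propositionI1BL}. The rotation of the Laplace contour from direction $0$ to direction $\pi/2+\varepsilon$ is legitimate because $\hpsi$ has growth $e^{-\zeta}$ at infinity (Proposition~\ref{propositionI2BL}) and carries no singular points between those two rays, so the deformation through $\pi/2-\varepsilon$ followed by Stokes jumping at $\pi/2$ reproduces Lemma~\ref{lemmaStokeshpsi} in the sector $\mathfrak{Im} C<0$.

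First I would start with $I_2(C,z)=Q_-^C\mathfrak{L}^0\hpsi$, multiply the identity of Lemma~\ref{lemmaStokeshpsi} through by $Q_-^C$, and observe that the factor $(Q_+/Q_-)^C$ in the last term collapses with $Q_-^C$ into $Q_+^C$. The three resulting pieces are: (i) $Q_-^C\mathfrak{L}^{\pi/2+\varepsilon}\hpsi$, (ii) $2Q_-^C\sum_{m\geq 1} e^{-2\pi imC}\mathfrak{L}^{\pi/2+\varepsilon}\hpsi$, and (iii) $Q_+^C\sum_{m\geq 1}e^{-2\pi imC}\mathfrak{L}^{\pi/2+\varepsilon}\hphi$. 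For $\mathfrak{Im} C<0$ we have $|e^{-2\pi iC}|<1$, so the geometric series $\sum_{m\geq 1}e^{-2\pi imC}=\frac{e^{-2\pi iC}}{1-e^{-2\pi iC}}$ converges, and the coefficients of (i)+(ii) telescope to the prefactor $\frac{1+e^{-2\pi iC}}{1-e^{-2\pi iC}}$, giving the first summand of~\eqref{equationI2Stokes}. Applying the same geometric identity to (iii) yields $Q_+^C\frac{e^{-2\pi iC}}{1-e^{-2\pi iC}}\mathfrak{L}^{\pi/2+\varepsilon}\hphi$, which is precisely $I_1(C,z)$ by Proposition~\ref{propositionI1BL}.

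For the trans-series expansion I would replace each Laplace transform $\mathfrak{L}^{\pi/2+\varepsilon}$ by its formal counterpart $\mathfrak{B}^{-1}$ acting on the perturbative germ (equivalently, Watson's lemma applied at the end of each Hankel contour, as in Proposition~\ref{propositionI2BL}), then re-expand $\frac{1+e^{-2\pi iC}}{1-e^{-2\pi iC}}=1+2\sum_{m\geq 1}e^{-2\pi imC}$ and $\frac{e^{-2\pi iC}}{1-e^{-2\pi iC}}=\sum_{m\geq 1}e^{-2\pi imC}$, and collect the three towers of exponentials weighted by $Q_-^C$ and $Q_+^C$.

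The main obstacle is essentially bookkeeping rather than substantive analysis: one must carefully track the prefactor $Q_\pm^C$ through every application of the alien operators in Lemma~\ref{lemmaalienhphihpsi}, and verify that the Stokes automorphism $\SA^+_{\pi/2}$ can actually be applied in the $C$-sector of interest. Since the integrability and double-branch nature of each singular point $\zeta_m^{\pm}$ was already established when Lemma~\ref{lemmaalienhphihpsi} and Lemma~\ref{lemmaStokeshpsi} were proved, the theorem reduces to the algebraic manipulations above.
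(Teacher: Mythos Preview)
Your proposal is correct and follows exactly the paper's approach: the paper's proof is literally the one line ``This is a direct result of Lemma~\ref{lemmaStokeshpsi},'' and you have simply spelled out the algebraic manipulations (multiplying by $Q_-^C$, summing the geometric series for $\mathfrak{Im}C<0$, and identifying the $\hphi$ tower with $I_1$ via Proposition~\ref{propositionI1BL}) that lie behind that line. Your treatment of the trans-series expansion by replacing $\mathfrak{L}^{\pi/2+\varepsilon}$ with $\mathfrak{B}^{-1}$ and re-expanding the rational prefactors as geometric series is likewise the intended argument.
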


\begin{proof}
This is a direct result of Lemma \ref{lemmaStokeshpsi}.
\end{proof}

\begin{remark}
One may also view equation \eqref{equationI2Stokes} as a formula for $C<0$ since both terms in the right-hand side are well-defined when $C<0$. On the other hand, we have compared the asymptotic expansion $\mathfrak{B}^{-1} \Psi(Q_-e^{-\zeta})$ with $\frac{1-e^{-2\pi iC}}{1+e^{-2\pi iC}}Q_-^{-C}(I_2-I_1)$ using the hypergeometric expression of $I_1$ \eqref{eq:I1-hyper} and $I_2$:
\begin{equation}
    I_{2}(C,z)=z^{1+2C}\frac{\Gamma(1+C)^{2}}{\Gamma(2+2C)}{}_{2}F_{1}(-C,C+1,2C+2,z).
\end{equation}
This can be done in a similar way as remark \ref{rmk:I1-asym} and shows a precise match.
\end{remark}

\begin{remark}
    Using the identity of hypergeometric function (see for instance eq.(12.1.12) in \cite{Temme}), we find
    \begin{equation}
    I_{2}(z)=I_{1}(z)-\frac{\sin(2\pi C)}{\sin(\pi C)}I_{1}(1-z).
\end{equation}
Comparing with \eqref{equationI2Stokes}, one thus finds
\begin{equation}\label{equationI11-z}
    -Q_{-}^{-C}\frac{1-e^{-2\pi iC}}{1+e^{-2\pi iC}}\frac{\sin(2\pi C)}{\sin(\pi C)}I_{1}(1-z)=\mathfrak{L}^{\frac{\pi}{2}+\varepsilon}\Psi(Q_{-}e^{-\zeta}),\quad \mathfrak{Im}C<0,
\end{equation}
which has been tested via the asymptotic expansion.
\end{remark}

On the other side, the Stokes phenomenon of $I_1$ (from left-hand to right-hand Laplace transform) is related to the Stokes automorphism $\SA^-_{\frac\pi2}$. We shall compute the alien operators $\Delta^-_{\bullet}$ first.
\begin{lemma}
    \begin{equation}\label{equationDelta-}
    \begin{split}
        &\Delta^-_{\zeta^-_m} \hpsi =
        \begin{cases}
         2\hpsi  \quad &\text{if } m=3n, \ (n\geq 1) \\
         -\hpsi \quad &\text{if else } (m \geq 1).
        \end{cases}
        \quad 
        \Delta^-_{\zeta^+_m} \hpsi =
        \begin{cases}
         2\hphi  \quad &\text{if } m=3n-1, \ (n\geq 1) \\
         -\hphi \quad &\text{if else } (m \geq 1).
        \end{cases}
        \\
        &\Delta^-_{\zeta^-_m} \hphi =
        \begin{cases}
         2\hphi  \quad &\text{if } m=3n, \ (n\geq 1) \\
         -\hphi \quad &\text{if else } (m \geq 1).
        \end{cases}
        \quad 
        \Delta^-_{-\zeta^+_{-m}} \hphi =
        \begin{cases}
         2\hpsi  \quad &\text{if } m=3n-2, \ (n\geq 1) \\
         -\hpsi \quad &\text{if else } (m \geq 0) .
         \end{cases}
    \end{split}
    \end{equation}
\end{lemma}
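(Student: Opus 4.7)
My strategy is to exploit the inverse relation $\SA^-_d \circ \SA^+_d = \mathrm{Id}$ from equation \eqref{equationSAid} and read off $\SA^-_{\frac\pi2}$ from the explicit action of $\SA^+_{\frac\pi2}$ furnished by Lemma \ref{lemmaalienhphihpsi}. Introduce the formal symbol $q := e^{-i\pi C}$. For $z$ near $\frac12$, the singular points in direction $\frac\pi2$ stack along $i\pi\mathbb{Z}_{\geq 1}$: the odd multiples of $i\pi$ are the $\zeta_m^+$ (respectively $-\zeta_{-m}^+$) singularities for $\hpsi$ (respectively for $\hphi$), and the even multiples are the $\zeta_m^-$ singularities of both. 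Hence $e^{-C\zeta_m^-} = q^{2m}$, $e^{-C\zeta_m^+} = q^{2m-1}$, and $e^{-C(-\zeta_{-m}^+)} = q^{2m+1}$. Substituting the $\Delta^+$-values from Lemma \ref{lemmaalienhphihpsi} and summing the geometric series yields
\begin{equation*}
    \SA^+_{\frac\pi2}\hpsi = \frac{1+q^2}{1-q^2}\hpsi + \frac{q}{1-q^2}\hphi, \qquad \SA^+_{\frac\pi2}\hphi = \frac{q}{1-q^2}\hpsi + \frac{1+q^2}{1-q^2}\hphi,
\end{equation*}
so on the formal module spanned by $\{\hpsi,\hphi\}$, $\SA^+_{\frac\pi2}$ is represented by the symmetric matrix $M = \frac{1}{1-q^2}\bigl(\begin{smallmatrix} 1+q^2 & q \\ q & 1+q^2 \end{smallmatrix}\bigr)$.

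Next I would invert $M$ directly: $\det M = (1+q^2+q^4)/(1-q^2)^2$, and the cyclotomic identity $(1-q^2)(1+q^2+q^4) = 1-q^6$ gives
\begin{equation*}
    M^{-1} = \frac{(1-q^2)^2}{1-q^6}\begin{pmatrix} 1+q^2 & -q \\ -q & 1+q^2 \end{pmatrix}.
\end{equation*}
Since $\SA^-_{\frac\pi2}$ is represented by $M^{-1}$ and, by definition, the coefficient of $e^{-C\omega}$ in $\SA^-_{\frac\pi2}-\mathrm{Id}$ equals $\Delta^-_\omega$, the problem reduces to expanding the four entries of $M^{-1}$ as power series in $q$ and extracting coefficients at the appropriate exponents.

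The final step is the expansion. For the diagonal entry, $(1-q^2)^2(1+q^2) = 1-q^2-q^4+q^6$, so
\begin{equation*}
    \frac{(1-q^2)^2(1+q^2)}{1-q^6} = 1 + 2\sum_{n\geq 1} q^{6n} - \sum_{n\geq 0} q^{6n+2} - \sum_{n\geq 0} q^{6n+4},
\end{equation*}
whence the coefficient of $q^{2m}$ is $2$ when $3\mid m$ and $-1$ otherwise; this yields both $\Delta^-_{\zeta_m^-}\hpsi$ and $\Delta^-_{\zeta_m^-}\hphi$ via the $\hpsi\leftrightarrow\hphi$ symmetry of $M^{-1}$. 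A parallel expansion of the off-diagonal entry $-q(1-q^2)^2/(1-q^6) = -q+2q^3-q^5-q^7+2q^9-\cdots$, extracted at $q^{2m-1}$, gives coefficient $2$ exactly when $m\equiv 2\pmod 3$, matching the claim for $\Delta^-_{\zeta_m^+}\hpsi$; extracted instead at $q^{2m+1}$ (reflecting $-\zeta_{-m}^+ = i\pi(2m+1)$), the same series produces coefficient $2$ exactly when $m\equiv 1\pmod 3$, yielding $\Delta^-_{-\zeta_{-m}^+}\hphi$. The main obstacle is the bookkeeping of the three residue classes modulo $3$ induced by the cyclotomic factor $1+q^2+q^4$, compounded by the asymmetric indexing between $\zeta_m^+$ and $-\zeta_{-m}^+$; conceptually, however, nothing beyond Lemma \ref{lemmaalienhphihpsi} and the identity \eqref{equationSAid} is needed.
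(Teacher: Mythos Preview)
Your proposal is correct and follows the same route as the paper: both arguments use the identity $\SA^-_{\frac\pi2}\circ\SA^+_{\frac\pi2}=\mathrm{Id}$ together with the explicit $\SA^+_{\frac\pi2}$-action on $\hpsi,\hphi$ coming from Lemma~\ref{lemmaalienhphihpsi}. The paper simply records $\SA^+_{\frac\pi2}\hphi$ (equation~\eqref{equationSAhphi}) and states that the $\Delta^-$ values can be ``iteratively determined'', whereas you carry out that determination in closed form by packaging the action as a $2\times2$ matrix over $\mathbb{Z}[[q]]$, inverting it via the cyclotomic factorization $(1-q^2)(1+q^2+q^4)=1-q^6$, and reading off the coefficients modulo~$3$; this is just an explicit execution of the same idea.
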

\begin{proof}
    By Lemma \ref{lemmaalienhphihpsi}, we have
\begin{equation}\label{equationSAhphi}
           \SA^+_{\frac\pi2}\hphi = \hphi + 2\sum\limits_{m\geq1} e^{-2\pi imC} \hphi + \left(\frac{Q_-}{Q_+}\right)^C \sum\limits_{m\geq 0} e^{-2\pi imC} \hpsi
       \end{equation}
       similar to \eqref{equationSAhpsi}. Together with \eqref{equationSAhpsi} and \eqref{equationSAid}, one can iteratively determine the alien operators in equation \eqref{equationDelta-}.
\end{proof}
We thus have the Stokes phenomenon of $I_1$ in direction $\frac\pi 2$.
By the above Lemma, we have
\begin{equation}
\begin{split}
\SA^-_{\frac\pi2}\hphi 
&=
\hphi + \left(2\sum\limits_{m\geq1 \atop m=3n}  e^{-2\pi imC} -\sum\limits_{m\geq1 \atop m \neq 3n}  e^{-2\pi imC} \right)\hphi + \left(\frac{Q_-}{Q_+}\right)^C \left(2\sum\limits_{m\geq 0 \atop m=3n-2} e^{-2\pi imC} - \sum\limits_{m\geq 0 \atop m\neq3n-2} e^{-2\pi 
imC} \right) \hpsi    
\\
&=\left( \frac{e^{6\pi iC}-e^{4\pi iC}-e^{2\pi iC}+1}{e^{6\pi iC}-1} \right)\hphi + \left(\frac{Q_-}{Q_+}\right)^C \left(\frac{-e^{6\pi iC}+2e^{4\pi iC}-e^{2\pi iC}}{e^{6\pi iC}-1}\right) \hpsi  
\\
&= \frac{\sin(2\pi C)}{\sin(3\pi C)} (e^{2\pi iC}-1)e^{-\pi iC}\hphi - \left(\frac{Q_-}{Q_+}\right)^C \frac{\sin(\pi C)}{\sin(3\pi C)} (e^{2\pi iC}-1) \hpsi
\end{split}
\end{equation}
which implies that 
\begin{equation}\label{equationStokesI1}
    I_1(C,z) = Q_+^C\frac{\sin(2\pi C)}{\sin(3\pi C)} e^{-\pi iC} \mathfrak{L}^0 \hphi - \frac{\sin(\pi C)}{\sin(3\pi C)}  I_2(C,z).
\end{equation}

\begin{remark}
    Using the identity of the hypergeometric function, we find
    \begin{equation}
    I_{1}(z)+\frac{\sin(\pi C)}{\sin(3\pi C)}I_{2}(z)=-\frac{\sin(2\pi C)}{\sin(3\pi C)}I_{2}(1-z).
\end{equation}\label{equationStokesI1inI1I2}
Comparing with \eqref{equationStokesI1}, one thus finds
\begin{equation}\label{equationI21-z}
   Q_{+}^{C}e^{-\pi iC}\mathfrak{L}^{0}\hat{\varphi}=-I_{2}(1-z),
\end{equation}
which has been tested via the asymptotic expansion.
\end{remark}

\eqref{equationI2Stokes} and \eqref{equationStokesI1} are the main results in our paper. This analysis demonstrates that the conformal block $I_1(C,z)$ appears in the Stokes phenomenon of the other conformal $I_2$ (and vice versa). Given a certain conformal block, resurgence theory enables us to discover other internal operators (conformal blocks) \cite{Benjamin:2023uib,Bissi:2024wur}. 

\subsection{Change $z$ to $1-z$}\label{subsection1-z}

We have established a duality between the Borel germs $\hphi$ and $\hpsi$ mediated by alien operators. From the perspective of resurgence theory, the substitution $z \mapsto 1-z$ constitutes another unexpected duality.

\begin{lemma}
    As $z$ around $\frac12$, we have
    \begin{equation}
        Q_{\pm}(1-z) = - Q_{\mp}(z), \quad u(1-z) =u(z), \quad 
        v(Q_{\pm}(1-z),1-z)=-v(Q_{\mp}(z),z),
    \end{equation}
    where $Q_{\pm} ,u,v$ are viewed as functions of $z$ ($v$ has extra variable $Q$) defined in \eqref{equationcriticalvalueQpm} and \eqref{equationuv}.
\end{lemma}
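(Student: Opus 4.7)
The three identities can all be reduced to elementary algebraic substitutions in the explicit formulas \eqref{equationcriticalvalueQpm} and \eqref{equationuv}. The plan is to first establish the involution $w_{\pm}(1-z)=1-w_{\mp}(z)$ at the level of the critical points, then propagate this to $Q_\pm$, $u$, and $v$.

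First I would rewrite $u(z)=\frac{3z-(z+1)^2}{3}=-\frac{z^{2}-z+1}{3}$. Since the polynomial $z^{2}-z+1$ is invariant under $z\mapsto 1-z$ (immediate from $(1-z)^{2}-(1-z)+1=z^{2}-z+1$), the identity $u(1-z)=u(z)$ follows at once. The same invariance gives $(z^{2}-z+1)^{1/2}\bigr|_{z\to 1-z}=(z^{2}-z+1)^{1/2}$, so from the defining formula $w_{\pm}(z)=\frac{1}{3}(z+1)\pm\frac{1}{3}(z^{2}-z+1)^{1/2}$ one checks directly that
\begin{equation}
w_{\pm}(1-z)=\tfrac{1}{3}(2-z)\pm\tfrac{1}{3}(z^{2}-z+1)^{1/2}=1-w_{\mp}(z).
\end{equation}

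For the identity $Q_{\pm}(1-z)=-Q_{\mp}(z)$, I would substitute $w'=w_{\pm}(1-z)=1-w_{\mp}(z)$ into the cubic $Q(w,1-z)=w(w-1)(w-(1-z))$ and track the three factors: $w'=1-w_{\mp}$, $w'-1=-w_{\mp}$, and $w'-(1-z)=-(w_{\mp}-z)$. Multiplying together produces $-w_{\mp}(w_{\mp}-1)(w_{\mp}-z)=-Q_{\mp}(z)$ after collecting the three sign changes.

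Finally, for $v(Q_{\pm}(1-z),1-z)=-v(Q_{\mp}(z),z)$, I would plug $Q=-Q_{\mp}(z)$ into $v(\,\cdot\,,1-z)$ and reduce to the polynomial identity
\begin{equation}
-2(2-z)^{3}+9(1-z)(2-z)=2(z+1)^{3}-9z(z+1),
\end{equation}
which is verified by expanding both sides to $2z^{3}-3z^{2}-3z+2$. The $-27Q$ term flips sign by construction because of the $Q_{\pm}(1-z)=-Q_{\mp}(z)$ relation already established, so everything assembles into the claimed antisymmetry of $v$. The only mild obstacle is the sign bookkeeping in the $v$ identity; the rest is forced by the symmetry $z\mapsto 1-z$ of $z^{2}-z+1$.
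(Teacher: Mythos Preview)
Your proposal is correct and is precisely the direct computation the paper alludes to (the paper's proof reads in its entirety ``By direct computation''). You have simply made the substitutions explicit, including the useful intermediate observation $w_{\pm}(1-z)=1-w_{\mp}(z)$, and your sign bookkeeping and the polynomial expansion $2z^{3}-3z^{2}-3z+2$ are all accurate.
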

\begin{proof}
    By direct computation.
\end{proof}

\begin{proposition}\label{proposition1-z}
    Fixed $z$ around $\frac12$. We have
    \begin{equation}
        \hpsi(\zeta,1-z) = -\hphi(\zeta,z).
    \end{equation}
\end{proposition}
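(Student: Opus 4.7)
The strategy is to pull back the root structure under the involution $w \mapsto 1-w$ combined with $z \mapsto 1-z$. A direct expansion shows
\[
Q(1-w,\,1-z) \;=\; (1-w)(-w)(z-w) \;=\; -\,w(w-1)(w-z) \;=\; -Q(w,z),
\]
so the substitution $w \mapsto 1-w$ intertwines the fibration $Q(\cdot,z)=Q$ with the fibration $Q(\cdot,1-z)=-Q$. Consequently, the three roots of $Q(\tilde w,1-z)=-Q$ are exactly $\{1-w_0(Q,z),\,1-w_1(Q,z),\,1-w_2(Q,z)\}$. Matching the canonical initial conditions at $Q=0$ — namely $w_0(0,1-z)=0$, $w_1(0,1-z)=1$, $w_2(0,1-z)=1-z$ — against $\{1-w_1(0,z)=0,\ 1-w_0(0,z)=1,\ 1-w_2(0,z)=1-z\}$, I expect the labelled identification
\[
w_0(-Q,1-z)=1-w_1(Q,z),\quad w_1(-Q,1-z)=1-w_0(Q,z),\quad w_2(-Q,1-z)=1-w_2(Q,z),
\]
valid as germs at $(Q,z)=(0,\tfrac12)$.

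Plugging this substitution into the rational expression defining $\Psi$ from equation~\eqref{equationsolutionswz} is the next step. The numerator transforms as
\[
2w_1(-Q,1-z)-w_0(-Q,1-z)-w_2(-Q,1-z) = 2(1-w_0)-(1-w_1)-(1-w_2) = w_1+w_2-2w_0,
\]
which is exactly the numerator of $\Phi$. For the denominator, each factor picks up a sign under the involution, and an easy bookkeeping check (three factors, with two of them being negated) shows the triple product $(w_0-w_1)(w_0-w_2)(w_1-w_2)$ is preserved. Therefore
\[
\Psi(-Q,1-z)\;=\;(-Q)\cdot\frac{w_1+w_2-2w_0}{(w_0-w_1)(w_0-w_2)(w_1-w_2)}(Q,z)\;=\;-\,\Phi(Q,z).
\]

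To conclude, I substitute $Q=Q_+(z)e^{-\zeta}$ and invoke the immediately preceding lemma, which gives $Q_-(1-z)=-Q_+(z)$. Then
\[
\hpsi(\zeta,1-z) \;=\; \Psi(Q_-(1-z)\,e^{-\zeta},\,1-z) \;=\; \Psi(-Q_+(z)e^{-\zeta},\,1-z) \;=\; -\Phi(Q_+(z)e^{-\zeta},z) \;=\; -\hphi(\zeta,z),
\]
as desired.

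The only delicate point — and the main potential obstacle — is the label-matching of the branches $w_i$ under the involution. One must verify that the global analytic continuations (not just the $Q=0$ values) indeed realize the permutation $(0\,1)$ with $2$ fixed; this is an analytic continuation question on the cubic cover. My intended fix is to check this directly on the $\arccos$-expressions in~\eqref{equationsolutionswz}: since $u(1-z)=u(z)$ and $v(Q_\pm(1-z),1-z)=-v(Q_\mp(z),z)$ by the preceding lemma, the shift in the argument of $\arccos$ by $\pi$ explicitly realizes the swap of the branches indexed by $0$ and $1$ while leaving the branch indexed by $2$ untouched, which locks in the permutation and completes the argument.
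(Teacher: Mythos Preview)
Your proof is correct and in fact cleaner than the paper's. The paper proceeds by direct computation with the explicit $\arccos$ formulae: it shows, using $u(1-z)=u(z)$, $v(-Q,1-z)=-v(Q,z)$ and the identity $\arccos(-x)=\pi-\arccos x$, that the pairwise differences satisfy
\[
(w_0-w_1)\big|_{(-Q,1-z)}=(w_0-w_1)\big|_{(Q,z)},\quad
(w_0-w_2)\big|_{(-Q,1-z)}=(w_2-w_1)\big|_{(Q,z)},\quad
(w_1-w_2)\big|_{(-Q,1-z)}=(w_2-w_0)\big|_{(Q,z)},
\]
and then assembles $\Psi$ from these. You instead observe the polynomial identity $Q(1-w,1-z)=-Q(w,z)$, deduce that $\{w_i(-Q,1-z)\}$ and $\{1-w_j(Q,z)\}$ coincide as sets of root-germs, and read off the permutation from their values at $Q=0$. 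That is a more conceptual route and yields the same three difference identities as a corollary.

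One remark on your final paragraph: the ``delicate point'' you flag is not actually an obstacle. Both triples are holomorphic germs at $(0,\tfrac12)$ solving the same cubic, and the three values $0,1,\tfrac12$ at the base point are distinct; hence the bijection between the two triples is uniquely pinned down by those values, and the germ identities $w_i(-Q,1-z)=1-w_{\sigma(i)}(Q,z)$ follow without further analytic continuation arguments. The $\arccos$ verification you sketch in your last sentence is precisely the paper's computation, and would confirm the permutation $(0\,1)(2)$ explicitly --- but for your purposes it is a redundant check rather than a missing step.
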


\begin{proof}
    By definition, we have
    \begin{equation}
        \hpsi(\zeta,1-z) = Q_{-}(1-z) e^{-\zeta} \left.\frac{2w_1-w_0-w_2}{(w_0-w_1)(w_0-w_2)(w_1-w_2)}\right|_{\big(Q_-(1-z)e^{-\zeta},1-z\big)}.
    \end{equation}
    The difference of $w_i's$ has a nice property. Indeed, by the Lemma above,
    \begin{equation}
        \begin{split}
            &(w_0-w_1)|_{\big(Q_-(1-z)e^{-\zeta},1-z\big)} 
            = 
            2\left(-\frac{u(z)}{3}\right)^{\frac12}\cos\left(\frac{1}{3}\arccos\left(-\frac{3v(Q_+(z),z)}{2u(z)}\left(-\frac{3}{u(z)}\right)^{\frac12}\right)+\frac{2\pi}{3}\right)
            \\
            -2&\left(-\frac{u(z)}{3}\right)^{\frac12}\cos\left(\frac{1}{3}\arccos\left(-\frac{3v(Q_+(z)e^{-\zeta},z)}{2u(z)}\left(-\frac{3}{u(z)}\right)^{\frac12}\right)\right)
            \\
            &=
            2\left(-\frac{u(z)}{3}\right)^{\frac12} \left.\left(-\cos\left(\frac{1}{3}\arccos\left(\frac{3v}{2u}\left(-\frac{3}{u}\right)^{\frac12}\right)\right)+\cos\left(\frac{1}{3}\arccos\left(\frac{3v}{2u}\left(-\frac{3}{u}\right)^{\frac12}\right)+\frac23\pi\right)\right)\right|_{(Q_+e^{-\zeta},z)}
            \\
            &=
            (w_0-w_1)|_{(Q_+e^{-\zeta},z)},
        \end{split}
    \end{equation}
    where formula \eqref{equationarccos-x} is used in the second step. Similarly, we have 
    \begin{equation}
    \begin{split}
        (w_0-w_2)|_{\big(Q_-(1-z)e^{-\zeta},1-z\big)} 
            &=
            (w_2-w_1)|_{(Q_+e^{-\zeta},z)}
            \\
            (w_1-w_2)|_{\big(Q_-(1-z)e^{-\zeta},1-z\big)} 
            &=
            (w_2-w_0)|_{(Q_+e^{-\zeta},z)} 
    \end{split}
    \end{equation}
    We get the desired result using these formulas for the difference of $w$'s.
\end{proof}

Now we have that 
\begin{equation}
    I_1(C,1-z) = -Q_{+}^C(1-z) \frac{1}{e^{2\pi iC}-1} \mathfrak{L}^{\frac\pi2+\varepsilon} \hpsi.
\end{equation}

Combining with equation \eqref{equationI11-z}, we get the canonical choice in our system
\begin{equation}
    Q_+^C(1-z) = e^{\pi iC}Q_-^C(z), \quad Q_-^C(1-z)=e^{-\pi iC}Q_+^C(z).
\end{equation}

We thus have the Borel-Laplace expression for $I_1(1-z)$ and $I_2(1-z)$.

\begin{proposition}\label{pro:Izto1-z}
    Let $z$ be around $\frac12$. We have
    \begin{equation}
        I_1(C,1-z) =  -Q_{-}^C(z) \frac{e^{\pi iC}}{e^{2\pi iC}-1} \mathfrak{L}^{\frac\pi2+\varepsilon} \hpsi, \quad 
        I_2(C,1-z) =  -Q_{+}^C(z) e^{-\pi iC}
        \mathfrak{L}^{0} \hphi.
    \end{equation}
\end{proposition}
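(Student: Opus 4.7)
The plan is to derive both identities by direct substitution, assembling them from the Borel--Laplace representations in Propositions~\ref{propositionI1BL} and~\ref{propositionI2BL}, the germ duality in Proposition~\ref{proposition1-z}, and the branch relations for $Q_{\pm}^{C}(1-z)$ already fixed immediately above the statement. Conceptually, each of the two formulas arises by specializing Proposition~\ref{propositionI1BL} or Proposition~\ref{propositionI2BL} at the point $1-z$, and then rewriting both the prefactor $Q_{\pm}^{C}(1-z)$ and the Borel germ $\hpsi(\zeta,1-z)$ or $\hphi(\zeta,1-z)$ back in terms of data attached to $z$.

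For the second identity, I would start with Proposition~\ref{propositionI2BL} applied at $1-z$,
\[
I_2(C,1-z) \;=\; Q_-^{C}(1-z)\, \mathfrak{L}^{0}\hpsi(\zeta,1-z),
\]
and then use Proposition~\ref{proposition1-z} in the form $\hpsi(\zeta,1-z) = -\hphi(\zeta,z)$, together with the canonical branch identity $Q_-^{C}(1-z) = e^{-\pi iC}Q_+^{C}(z)$. Pulling the minus sign out of the linear operator $\mathfrak{L}^{0}$ yields the claimed expression for $I_2(C,1-z)$.

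For the first identity, the same recipe works starting from Proposition~\ref{propositionI1BL} at $1-z$, which gives
\[
I_1(C,1-z) \;=\; Q_+^{C}(1-z)\,\frac{1}{e^{2\pi iC}-1}\, \mathfrak{L}^{\tfrac{\pi}{2}+\varepsilon}\hphi(\zeta,1-z).
\]
Applying Proposition~\ref{proposition1-z} with $z$ replaced by $1-z$ (so that $\hphi(\zeta,1-z) = -\hpsi(\zeta,z)$, using the involution $z\mapsto 1-z$) and the companion branch identity $Q_+^{C}(1-z) = e^{\pi iC}Q_-^{C}(z)$ rearranges the expression into the stated form.

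The only delicate point is consistency of branch choices: the factors $Q_{\pm}^{C}$ carry multi-valued dependence on $z$, and the signs/phases in Proposition~\ref{proposition1-z} were derived by a particular choice of $\arccos$ branches in equation~\eqref{equationsolutionswz}. The canonical choice fixed just before the proposition is precisely the one compatible with these branches, so the substitutions go through without additional phase factors. Beyond that, one should verify that $1-z$ still lies in the neighborhood of $\tfrac12$ where the Borel--Laplace representations and the integrability of the singularities of $\hpsi,\hphi$ remain valid, but this is automatic since the involution $z \mapsto 1-z$ fixes $\tfrac12$.
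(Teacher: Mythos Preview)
Your proposal is correct and follows the paper's own route: the paper derives the first identity in the text immediately preceding the proposition by applying Proposition~\ref{propositionI1BL} at $1-z$, invoking Proposition~\ref{proposition1-z} (with $z\mapsto 1-z$), and substituting the canonical branch $Q_+^C(1-z)=e^{\pi iC}Q_-^C(z)$, leaving the parallel argument for $I_2$ implicit. Your treatment of the branch compatibility and the symmetry of the neighborhood of $\tfrac12$ under $z\mapsto 1-z$ is exactly the check the paper leaves to the reader.
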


One can recover formula \eqref{equationStokesI1inI1I2} by using the Stokes phenomenon of $I_1$ in \eqref{equationStokesI1} and the above equation. Since the four-point correlation function is supposed to be invariant under the transform $z\to 1-z$, which is also known as crossing symmetry, proposition \ref{pro:Izto1-z} is expected to play a key role in expressing the correlation function as a trans-series.


\section{Monodromy in $z$}\label{sectionmonodromy}

In this section, we discuss the monodromy behavior of $I_1$ and $I_2$ in variable $z$ at $0$ and $1$. We denote 
\begin{equation}
    {\rm cont}_{z=p} f(z) 
\end{equation}
to be the analytic continuation of the holomorphic function $f$ that goes negatively around a point $p$.\footnote{See the Appendix \ref{Appendixarccos}. Here we emphasize that the variable we are discussing is $z$.} Since we already have the Borel-Laplace formula for $I_1,I_2$:
\begin{equation}
    I_1(C,z) = Q_+^C \frac{1}{e^{2\pi iC}-1}\mathfrak{L}^{\frac\pi2+\varepsilon} \hphi, \quad I_2(C,z) = Q_-^C \mathfrak{L}^0\hpsi,
\end{equation}
the monodromy computation on such germs is highly related to the trace of the singular points of $\hphi$ and $\hpsi$ on the Borel plane as $z$ moves. We start with a lemma which can be directly proved.

\begin{lemma}\label{lemmamonodromyQ}
    Let $z$ be around $\frac12$ at the beginning. According to the formula \eqref{equationcriticalvalueQpm} of $Q_{\pm}$, we have
    \begin{equation}\label{equationmonodromyQ}
            {\rm cont}_{z=0} Q_-^C = e^{-4\pi iC} Q_-^C, \quad {\rm cont}_{z=0} Q_+^C = Q_+^C, \quad
            {\rm cont}_{z=1} Q_-^C =  Q_-^C, \quad {\rm cont}_{z=1}Q_+^C= e^{-4\pi iC}Q_+^C.
    \end{equation}
Moreover,
 \begin{equation}
{\rm cont}_{z=0}\log{\frac{Q_+}{Q_-}} = \log{\frac{Q_+}{Q_-}} +4\pi i, \quad {\rm cont}_{z=1}\log{\frac{Q_+}{Q_-}} = \log{\frac{Q_+}{Q_-}} -4\pi i.
    \end{equation}
    
\end{lemma}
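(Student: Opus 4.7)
The plan is to reduce the four monodromy relations to a computation of the order of vanishing of $Q_{\pm}(z)$ at $z = 0$ and $z = 1$. By \eqref{equationcriticalvalueQpm}, the only multivalued ingredient in $Q_\pm(z)$ is $(z^2 - z + 1)^{1/2}$, whose branch points sit at $z = \tfrac{1 \pm i\sqrt 3}{2}$; neither lies inside a small loop around $z = 0$ or $z = 1$. Hence $Q_\pm(z)$ is single-valued and holomorphic in punctured neighborhoods of both base points, and a clockwise loop around $z = p$ shifts $\log Q_\pm$ by $-2\pi i k$, where $k$ is the order of the zero of $Q_\pm$ at $p$; correspondingly $Q_\pm^C = e^{C \log Q_\pm}$ acquires the factor $e^{-2\pi i k C}$.

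First I would identify which of $Q_\pm$ vanishes at each base point. At $z = 0$ the critical points of $w(w-1)(w-z) = w^2(w-1)$ are $w = 0$ and $w = 2/3$; comparing with $w_\pm(0) = \tfrac13 \pm \tfrac13$ yields $Q_-(0) = Q(0) = 0$ and $Q_+(0) = Q(2/3) = -4/27 \ne 0$. Symmetrically, at $z = 1$ the critical points of $w(w-1)^2$ are $w = 1/3$ and $w = 1$, giving $Q_-(1) = 4/27 \ne 0$ and $Q_+(1) = 0$. This immediately disposes of the two trivial monodromies $\mathrm{cont}_{z=0} Q_+^C = Q_+^C$ and $\mathrm{cont}_{z=1} Q_-^C = Q_-^C$.

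For the two nontrivial relations, I would Taylor-expand the relevant critical point. Using $\sqrt{1 - z + z^2} = 1 - z/2 + O(z^2)$, one gets $w_-(z) = z/2 + O(z^2)$ near $z=0$, so in $Q_- = w_-(w_--1)(w_--z)$ the first and third factors are each of order $z$ while $w_- - 1 \to -1$; hence $Q_-(z) = \tfrac{z^2}{4} + O(z^3)$, a double zero. An entirely analogous expansion at $z = 1$ (setting $s = z-1$ and using $\sqrt{1+s+s^2}$) produces $w_+ - 1 \sim s/2$ and $w_+ - z \sim -s/2$, giving $Q_+(z) = -\tfrac{(z-1)^2}{4} + O((z-1)^3)$, again a double zero. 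Thus $k = 2$ in both cases and the factor $e^{-4\pi i C}$ in \eqref{equationmonodromyQ} follows. The statements about $\log(Q_+/Q_-)$ are immediate corollaries: at $z = 0$ only $\log Q_-$ shifts (by $-4\pi i$), so the quotient gains $+4\pi i$, and at $z = 1$ only $\log Q_+$ shifts (by $-4\pi i$), so the quotient gains $-4\pi i$.

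The only step requiring any care is verifying that the orders of vanishing are exactly two and not higher, but this is a direct Taylor expansion, so I do not anticipate a genuine obstacle.
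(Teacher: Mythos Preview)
Your proposal is correct and is precisely the ``direct computation'' the paper alludes to without spelling out: identifying that $(z^2-z+1)^{1/2}$ has no branch points near $z=0,1$, reading off the double zeros $Q_-(z)\sim z^2/4$ and $Q_+(z)\sim -(z-1)^2/4$, and translating the order of vanishing into the monodromy factor $e^{-4\pi iC}$. There is nothing to add.
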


One may notice that $\frac{Q_+}{Q_-}$ is real as $z\in(0,1)$. The module $|\frac{Q_+}{Q_-}|>\!\!>1$ as $z$ near $0$ and $|\frac{Q_+}{Q_-}|<\!\!<1$ as $z$ near $1$. Together with the result in Lemma {\ref{lemmamonodromyQ}, we can draw the following picture.

\begin{figure}[h]
    \centering
\begin{tikzpicture}[scale=1, every node/.style={font=\footnotesize},mid arrow/.style={
        postaction={decorate}, 
        decoration={
            markings,
            mark=at position 0.5 with {\arrow{Stealth[scale=1.2]}} 
        }
    }]

  \begin{scope}[shift={(-2.5,1.5)}]
    \fill[blue] (1,0) circle (2pt) node[right] {$\frac{1}{2}$}; 
    \fill[red] (0,0) circle (2pt) node[above left] {$0$}; 
     \draw[thick, blue, mid arrow](0.9,-0.05)--(0.1,-0.05);
  \draw[thick, blue, mid arrow](0.1,0.05)--(0.9,0.05);
\draw[thick,blue] (0.1,-0.05) .. controls (-0.25,-0.25) and (-0.25,0.25) .. (0.1,0.05);
 \node[left] at (-0.7,0.8) {$z$-plane};
  \end{scope}

  \begin{scope}[shift={(2.5,1.5)}]
    \draw[->] (-2,0) -- (2,0) node[right] {$\mathrm{Re}$}; 
    \draw[->] (0,-1) -- (0,3) node[above] {$\mathrm{Im}$}; 
    \fill[blue] (0,2.5) circle (2pt) node[left] {$5\pi i$}; 
        \fill[blue] (0,0.5) circle (2pt) node[left] {$\pi i$}; 
    \draw[thick, blue, mid arrow](0,0.5)--(1.9,0.5);
    \draw[thick, blue, mid arrow](1.9,2.5)--(0,2.5);
    \draw[thick, blue, mid arrow](1.9,0.5)--(1.9,2.5);
    \node[right] at (-2.7,1.8) {
    ($\log\frac{Q_+}{Q_-}$)-plane};
  \end{scope}

  \begin{scope}[shift={(-2.5,-1.5)}]
    \fill[red] (1,0) circle (2pt) node[above right] {$1$}; 
    \fill[blue] (0,0) circle (2pt) node[left] {$\frac12$}; 
     \draw[thick, blue, mid arrow](0.9,-0.05)--(0.1,-0.05);
  \draw[thick, blue, mid arrow](0.1,0.05)--(0.9,0.05);
\draw[thick,blue] (0.9,-0.05) .. controls (1.25,-0.25) and (1.25,0.25) .. (0.9,0.05);
 \node[left] at (-0.7,0.8) {$z$-plane};
  \end{scope}

  \begin{scope}[shift={(2.5,-1.5)}]
    \draw[->] (-2,0) -- (2,0) node[right] {$\mathrm{Re}$}; 
    \draw[->] (0,-2) -- (0,1) node[above] {$\mathrm{Im}$}; 
    \fill[blue] (0,-1.5) circle (2pt) node[right] {$-3\pi i$}; 
        \fill[blue] (0,0.5) circle (2pt) node[right] {$\pi i$}; 
    \draw[thick, blue, mid arrow](0,0.5)--(-1.9,0.5);
    \draw[thick, blue, mid arrow](-1.9,-1.5)--(0,-1.5);
    \draw[thick, blue, mid arrow](-1.9,0.5)--(-1.9,-1.5);
    \node[left] at (2.7,0.8) {
    ($\log\frac{Q_+}{Q_-}$)-plane};
  \end{scope}
\end{tikzpicture}
\caption{In the above left scope, $z$ starts at $\frac12$, goes along the real axis to a point near $0$, circles around $0$ negatively and goes back to $\frac12$. The corresponding $\log(\frac{Q_+}{Q_-})$ runs along the blue line in the above right scope. As $z$ goes around $1$ in the below left scope, the corresponding trace of $\log(\frac{Q_+}{Q_-})$ is shown in the below right scope. }
\label{figuresanalyticQpm}
\end{figure}
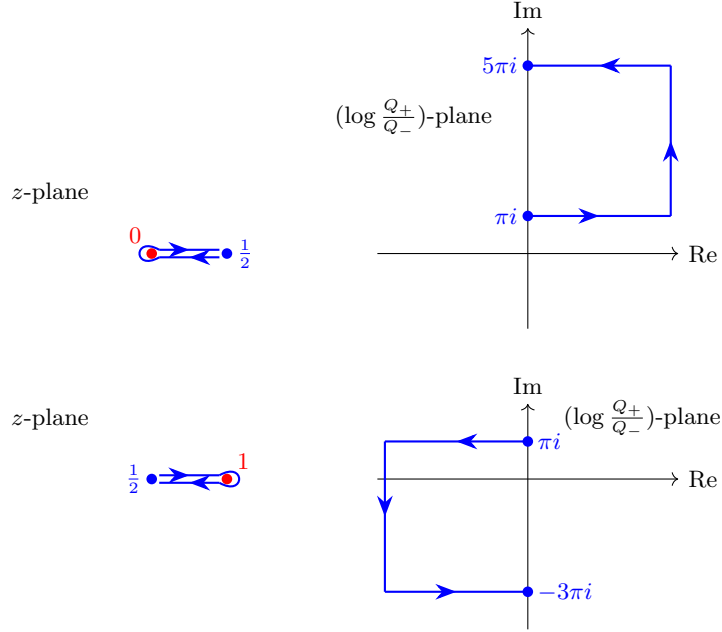

Thus, for the singular set of $\hpsi$, the singular points $\zeta_m^+ = -\log\frac{Q_+}{Q_-} + 2\pi \mathrm{i} m$ shift downward by $4\pi \mathrm{i}$ from left (resp. upward by $4\pi \mathrm{i}$ from right) as $z$ traverses a small circle around $0$ (resp. around $1$). And $\zeta_m^-=2\pi im$ are stable. See the middle column of Figure \ref{figurehpsihphimonodromy}. On the other hand, the singular points of $\hphi$ are $-\zeta_{-m}^{+}=\log\frac{Q_+}{Q_-}+2\pi im$ and $\zeta_m^-=2\pi im$. As $z$ near $\frac12$, by the convention $\log\frac{Q_+}{Q_-}\sim\pi i$, we have $-\zeta_{-m}^+\sim \pi i(2m+1)$. See the right column of Figure \ref{figurehpsihphimonodromy} for the trace of such singular points.
\ \\

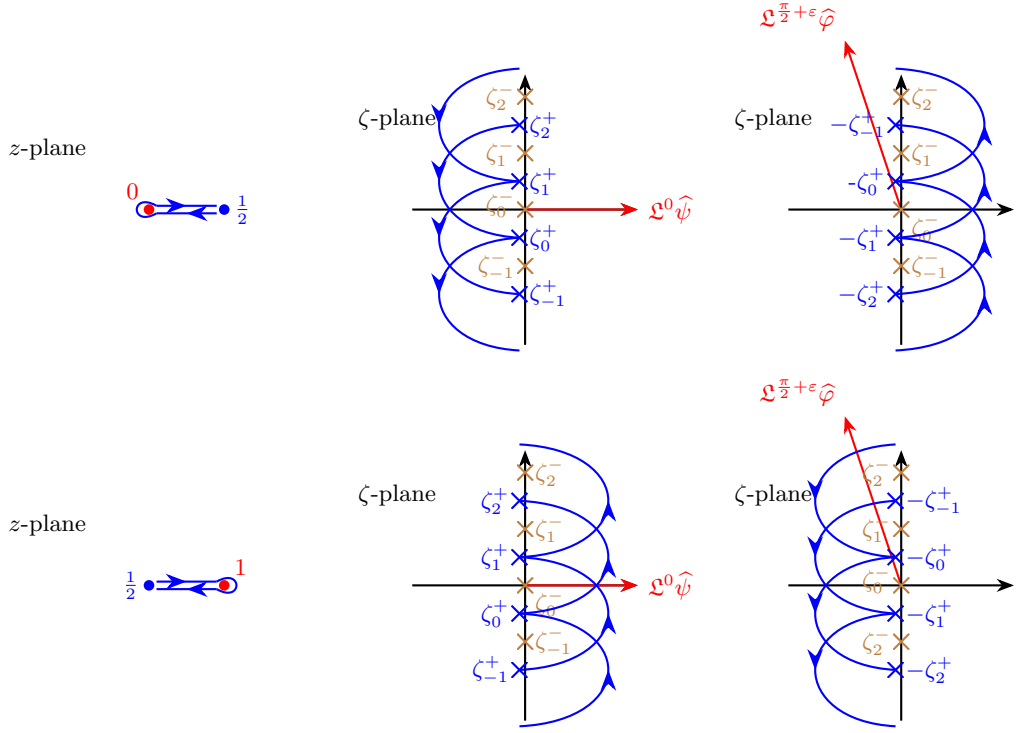
\begin{figure}[h]
    \centering
\begin{tikzpicture}[scale=1, every node/.style={font=\footnotesize},mid arrow/.style={
        postaction={decorate}, 
        decoration={
            markings,
            mark=at position 0.5 with {\arrow{Stealth[scale=1.2]}} 
        }
    }]

  \begin{scope}[shift={(-5,2.5)}]
    \fill[blue] (1,0) circle (2pt) node[right] {$\frac{1}{2}$}; 
    \fill[red] (0,0) circle (2pt) node[above left] {$0$}; 
     \draw[thick, blue, mid arrow](0.9,-0.05)--(0.1,-0.05);
  \draw[thick, blue, mid arrow](0.1,0.05)--(0.9,0.05);
\draw[thick,blue] (0.1,-0.05) .. controls (-0.25,-0.25) and (-0.25,0.25) .. (0.1,0.05);
 \node[left] at (-0.7,0.8) {$z$-plane};
  \end{scope}

  \begin{scope}[shift={(0,2.5)},scale=1.5]
\draw[-{Stealth[scale=1]}, thick] (-1,0)--(1,0); 
  \draw[-{Stealth[scale=1]}, thick,red] (0,0)--(1,0);
  \draw[-{Stealth[scale=1]}, thick] (0,-1.2)--(0,1.2); 
  \node[inner sep=3pt, path picture={
    \draw[thick,blue] (path picture bounding box.south west) -- (path picture bounding box.north east);
    \draw[thick,blue] (path picture bounding box.north west) -- (path picture bounding box.south east);
  }] at (-0.05,-0.25) {};
    \node[inner sep=3pt, path picture={
    \draw[thick,blue] (path picture bounding box.south west) -- (path picture bounding box.north east);
    \draw[thick,blue] (path picture bounding box.north west) -- (path picture bounding box.south east);
  }] at (-0.05,-0.75) {};
    \node[inner sep=3pt, path picture={
    \draw[thick,blue] (path picture bounding box.south west) -- (path picture bounding box.north east);
    \draw[thick,blue] (path picture bounding box.north west) -- (path picture bounding box.south east);
  }] at (-0.05,0.75) {};
  \node[inner sep=3pt, path picture={
    \draw[thick,blue] (path picture bounding box.south west) -- (path picture bounding box.north east);
    \draw[thick,blue] (path picture bounding box.north west) -- (path picture bounding box.south east);
  }] at (-0.05,0.25) {};
    \node[inner sep=3pt, path picture={
    \draw[thick,brown] (path picture bounding box.south west) -- (path picture bounding box.north east);
    \draw[thick,brown] (path picture bounding box.north west) -- (path picture bounding box.south east);
  }] at (0,0) {};
      \node[inner sep=3pt, path picture={
    \draw[thick,brown] (path picture bounding box.south west) -- (path picture bounding box.north east);
    \draw[thick,brown] (path picture bounding box.north west) -- (path picture bounding box.south east);
  }] at (0,0.5) {};
      \node[inner sep=3pt, path picture={
    \draw[thick,brown] (path picture bounding box.south west) -- (path picture bounding box.north east);
    \draw[thick,brown] (path picture bounding box.north west) -- (path picture bounding box.south east);
  }] at (0,1) {};
      \node[inner sep=3pt, path picture={
    \draw[thick,brown] (path picture bounding box.south west) -- (path picture bounding box.north east);
    \draw[thick,brown] (path picture bounding box.north west) -- (path picture bounding box.south east);
  }] at (0,-0.5) {};
  \node[left] at (-0.7,0.8) {$\zeta$-plane};
  \node[right,blue] at (-0.05,0.75) {$\zeta_2^+$};
  \node[right,blue] at (-0.05,0.25) {$\zeta_1^+$};
  \node[right,blue] at (-0.05,-0.25) {$\zeta_0^+$};
  \node[right,blue] at (-0.05,-0.75) {$\zeta_{-1}^+$};
   \node[left,brown] at (0,-0.5) {$\zeta_{-1}^-$};
   \node[left,brown] at (0,0.5) {$\zeta_{1}^-$};
   \node[left,brown] at (0,1) {$\zeta_{2}^-$};
   \node[left,brown] at (0,0.05) {$\zeta_{0}^-$};
   \draw[thick, blue, mid arrow] 
    (-0.05,0.75) .. controls (-1,0.7) and (-1,-0.2) .. (-0.05,-0.25);
       \draw[thick, blue, mid arrow] 
    (-0.05,0.75+0.5) .. controls (-1,0.7+0.5) and (-1,-0.2+0.5) .. (-0.05,-0.25+0.5);
     \draw[thick, blue, mid arrow] 
    (-0.05,0.75-0.5) .. controls (-1,0.7-0.5) and (-1,-0.2-0.5) .. (-0.05,-0.25-0.5);
 \draw[thick, blue, mid arrow] 
    (-0.05,0.75-1) .. controls (-1,0.7-1) and (-1,-0.2-1) .. (-0.05,-0.25-1);
    \node[right,red] at (1,0) {$\mathfrak{L}^0\hpsi$};
  \end{scope}

   \begin{scope}[shift={(5,2.5)},scale=1.5]
\draw[-{Stealth[scale=1]}, thick] (-1,0)--(1,0); 
  \draw[-{Stealth[scale=1]}, thick,red] (0,0)--(-0.5,1.5);
  \draw[-{Stealth[scale=1]}, thick] (0,-1.2)--(0,1.2); 
  \node[inner sep=3pt, path picture={
    \draw[thick,blue] (path picture bounding box.south west) -- (path picture bounding box.north east);
    \draw[thick,blue] (path picture bounding box.north west) -- (path picture bounding box.south east);
  }] at (-0.05,-0.25) {};
    \node[inner sep=3pt, path picture={
    \draw[thick,blue] (path picture bounding box.south west) -- (path picture bounding box.north east);
    \draw[thick,blue] (path picture bounding box.north west) -- (path picture bounding box.south east);
  }] at (-0.05,-0.75) {};
    \node[inner sep=3pt, path picture={
    \draw[thick,blue] (path picture bounding box.south west) -- (path picture bounding box.north east);
    \draw[thick,blue] (path picture bounding box.north west) -- (path picture bounding box.south east);
  }] at (-0.05,0.75) {};
  \node[inner sep=3pt, path picture={
    \draw[thick,blue] (path picture bounding box.south west) -- (path picture bounding box.north east);
    \draw[thick,blue] (path picture bounding box.north west) -- (path picture bounding box.south east);
  }] at (-0.05,0.25) {};
    \node[inner sep=3pt, path picture={
    \draw[thick,brown] (path picture bounding box.south west) -- (path picture bounding box.north east);
    \draw[thick,brown] (path picture bounding box.north west) -- (path picture bounding box.south east);
  }] at (0,0) {};
      \node[inner sep=3pt, path picture={
    \draw[thick,brown] (path picture bounding box.south west) -- (path picture bounding box.north east);
    \draw[thick,brown] (path picture bounding box.north west) -- (path picture bounding box.south east);
  }] at (0,0.5) {};
      \node[inner sep=3pt, path picture={
    \draw[thick,brown] (path picture bounding box.south west) -- (path picture bounding box.north east);
    \draw[thick,brown] (path picture bounding box.north west) -- (path picture bounding box.south east);
  }] at (0,1) {};
      \node[inner sep=3pt, path picture={
    \draw[thick,brown] (path picture bounding box.south west) -- (path picture bounding box.north east);
    \draw[thick,brown] (path picture bounding box.north west) -- (path picture bounding box.south east);
  }] at (0,-0.5) {};
  \node[left] at (-0.7,0.8) {$\zeta$-plane};
  \node[left,blue] at (-0.05,0.75) {$-\zeta_{-1}^+$};
  \node[left,blue] at (-0.05,0.25) {-$\zeta_0^+$};
  \node[left,blue] at (-0.05,-0.25) {$-\zeta_1^+$};
  \node[left,blue] at (-0.05,-0.75) {$-\zeta_{2}^+$};
   \node[right,brown] at (0,-0.5) {$\zeta_{-1}^-$};
   \node[right,brown] at (0,0.5) {$\zeta_{1}^-$};
   \node[right,brown] at (0,1) {$\zeta_{2}^-$};
   \node[below right,brown] at (0,0.05) {$\zeta_{0}^-$};
   \draw[thick, blue, mid arrow] 
    (-0.05,-0.25) .. controls (1,-0.2) and (1,0.7) .. (-0.05,0.75);
       \draw[thick, blue, mid arrow] 
   (-0.05,-0.25+0.5) .. controls (1,-0.2+0.5) and (1,0.7+0.5) .. (-0.05,0.75+0.5) ;
     \draw[thick, blue, mid arrow] 
    (-0.05,-0.25-0.5) .. controls (1,-0.2-0.5) and (1,0.7-0.5) .. (-0.05,0.75-0.5);
 \draw[thick, blue, mid arrow] 
    (-0.05,-0.25-1).. controls (1,-0.2-1)  and (1,0.7-1).. (-0.05,0.75-1) ;
     \node[above left,red] at (-0.5,1.5) {$\mathfrak{L}^{\frac\pi2 + \varepsilon}\hphi$};
  \end{scope}

  \begin{scope}[shift={(-5,-2.5)}]
    \fill[red] (1,0) circle (2pt) node[above right] {$1$}; 
    \fill[blue] (0,0) circle (2pt) node[left] {$\frac12$}; 
     \draw[thick, blue, mid arrow](0.9,-0.05)--(0.1,-0.05);
  \draw[thick, blue, mid arrow](0.1,0.05)--(0.9,0.05);
\draw[thick,blue] (0.9,-0.05) .. controls (1.25,-0.25) and (1.25,0.25) .. (0.9,0.05);
 \node[left] at (-0.7,0.8) {$z$-plane};
  \end{scope}

  \begin{scope}[shift={(0,-2.5)},scale=1.5]
\draw[-{Stealth[scale=1]}, thick] (-1,0)--(1,0); 
  \draw[-{Stealth[scale=1]}, thick,red] (0,0)--(1,0);
  \draw[-{Stealth[scale=1]}, thick] (0,-1.2)--(0,1.2); 
  \node[inner sep=3pt, path picture={
    \draw[thick,blue] (path picture bounding box.south west) -- (path picture bounding box.north east);
    \draw[thick,blue] (path picture bounding box.north west) -- (path picture bounding box.south east);
  }] at (-0.05,-0.25) {};
    \node[inner sep=3pt, path picture={
    \draw[thick,blue] (path picture bounding box.south west) -- (path picture bounding box.north east);
    \draw[thick,blue] (path picture bounding box.north west) -- (path picture bounding box.south east);
  }] at (-0.05,-0.75) {};
    \node[inner sep=3pt, path picture={
    \draw[thick,blue] (path picture bounding box.south west) -- (path picture bounding box.north east);
    \draw[thick,blue] (path picture bounding box.north west) -- (path picture bounding box.south east);
  }] at (-0.05,0.75) {};
  \node[inner sep=3pt, path picture={
    \draw[thick,blue] (path picture bounding box.south west) -- (path picture bounding box.north east);
    \draw[thick,blue] (path picture bounding box.north west) -- (path picture bounding box.south east);
  }] at (-0.05,0.25) {};
    \node[inner sep=3pt, path picture={
    \draw[thick,brown] (path picture bounding box.south west) -- (path picture bounding box.north east);
    \draw[thick,brown] (path picture bounding box.north west) -- (path picture bounding box.south east);
  }] at (0,0) {};
      \node[inner sep=3pt, path picture={
    \draw[thick,brown] (path picture bounding box.south west) -- (path picture bounding box.north east);
    \draw[thick,brown] (path picture bounding box.north west) -- (path picture bounding box.south east);
  }] at (0,0.5) {};
      \node[inner sep=3pt, path picture={
    \draw[thick,brown] (path picture bounding box.south west) -- (path picture bounding box.north east);
    \draw[thick,brown] (path picture bounding box.north west) -- (path picture bounding box.south east);
  }] at (0,1) {};
      \node[inner sep=3pt, path picture={
    \draw[thick,brown] (path picture bounding box.south west) -- (path picture bounding box.north east);
    \draw[thick,brown] (path picture bounding box.north west) -- (path picture bounding box.south east);
  }] at (0,-0.5) {};
  \node[left] at (-0.7,0.8) {$\zeta$-plane};
  \node[left,blue] at (-0.05,0.75) {$\zeta_2^+$};
  \node[left,blue] at (-0.05,0.25) {$\zeta_1^+$};
  \node[left,blue] at (-0.05,-0.25) {$\zeta_0^+$};
  \node[left,blue] at (-0.05,-0.75) {$\zeta_{-1}^+$};
   \node[right,brown] at (0,-0.5) {$\zeta_{-1}^-$};
   \node[right,brown] at (0,0.5) {$\zeta_{1}^-$};
   \node[right,brown] at (0,1) {$\zeta_{2}^-$};
   \node[below right,brown] at (0,0.05) {$\zeta_{0}^-$};
   \draw[thick, blue, mid arrow] 
    (-0.05,-0.25) .. controls (1,-0.2) and (1,0.7) .. (-0.05,0.75);
       \draw[thick, blue, mid arrow] 
   (-0.05,-0.25+0.5) .. controls (1,-0.2+0.5) and (1,0.7+0.5) .. (-0.05,0.75+0.5) ;
     \draw[thick, blue, mid arrow] 
    (-0.05,-0.25-0.5) .. controls (1,-0.2-0.5) and (1,0.7-0.5) .. (-0.05,0.75-0.5);
 \draw[thick, blue, mid arrow] 
    (-0.05,-0.25-1).. controls (1,-0.2-1)  and (1,0.7-1).. (-0.05,0.75-1) ;
    \node[right,red] at (1,0) {$\mathfrak{L}^0\hpsi$};
  \end{scope}

  \begin{scope}[shift={(5,-2.5)},scale=1.5]
\draw[-{Stealth[scale=1]}, thick] (-1,0)--(1,0); 
  \draw[-{Stealth[scale=1]}, thick,red] (0,0)--(-0.5,1.5);
  \draw[-{Stealth[scale=1]}, thick] (0,-1.2)--(0,1.2); 
  \node[inner sep=3pt, path picture={
    \draw[thick,blue] (path picture bounding box.south west) -- (path picture bounding box.north east);
    \draw[thick,blue] (path picture bounding box.north west) -- (path picture bounding box.south east);
  }] at (-0.05,-0.25) {};
    \node[inner sep=3pt, path picture={
    \draw[thick,blue] (path picture bounding box.south west) -- (path picture bounding box.north east);
    \draw[thick,blue] (path picture bounding box.north west) -- (path picture bounding box.south east);
  }] at (-0.05,-0.75) {};
    \node[inner sep=3pt, path picture={
    \draw[thick,blue] (path picture bounding box.south west) -- (path picture bounding box.north east);
    \draw[thick,blue] (path picture bounding box.north west) -- (path picture bounding box.south east);
  }] at (-0.05,0.75) {};
  \node[inner sep=3pt, path picture={
    \draw[thick,blue] (path picture bounding box.south west) -- (path picture bounding box.north east);
    \draw[thick,blue] (path picture bounding box.north west) -- (path picture bounding box.south east);
  }] at (-0.05,0.25) {};
    \node[inner sep=3pt, path picture={
    \draw[thick,brown] (path picture bounding box.south west) -- (path picture bounding box.north east);
    \draw[thick,brown] (path picture bounding box.north west) -- (path picture bounding box.south east);
  }] at (0,0) {};
      \node[inner sep=3pt, path picture={
    \draw[thick,brown] (path picture bounding box.south west) -- (path picture bounding box.north east);
    \draw[thick,brown] (path picture bounding box.north west) -- (path picture bounding box.south east);
  }] at (0,0.5) {};
      \node[inner sep=3pt, path picture={
    \draw[thick,brown] (path picture bounding box.south west) -- (path picture bounding box.north east);
    \draw[thick,brown] (path picture bounding box.north west) -- (path picture bounding box.south east);
  }] at (0,1) {};
      \node[inner sep=3pt, path picture={
    \draw[thick,brown] (path picture bounding box.south west) -- (path picture bounding box.north east);
    \draw[thick,brown] (path picture bounding box.north west) -- (path picture bounding box.south east);
  }] at (0,-0.5) {};
  \node[left] at (-0.7,0.8) {$\zeta$-plane};
  \node[right,blue] at (-0.05,0.75) {$-\zeta_{-1}^+$};
  \node[right,blue] at (-0.05,0.25) {$-\zeta_0^+$};
  \node[right,blue] at (-0.05,-0.25) {$-\zeta_1^+$};
  \node[right,blue] at (-0.05,-0.75) {$-\zeta_{2}^+$};
   \node[left,brown] at (0,-0.5) {$\zeta_{2}^-$};
   \node[left,brown] at (0,0.5) {$\zeta_{1}^-$};
   \node[left,brown] at (0,1) {$\zeta_{2}^-$};
   \node[left,brown] at (0,0.05) {$\zeta_{0}^-$};
   \draw[thick, blue, mid arrow] 
    (-0.05,0.75) .. controls (-1,0.7) and (-1,-0.2) .. (-0.05,-0.25);
       \draw[thick, blue, mid arrow] 
    (-0.05,0.75+0.5) .. controls (-1,0.7+0.5) and (-1,-0.2+0.5) .. (-0.05,-0.25+0.5);
     \draw[thick, blue, mid arrow] 
    (-0.05,0.75-0.5) .. controls (-1,0.7-0.5) and (-1,-0.2-0.5) .. (-0.05,-0.25-0.5);
 \draw[thick, blue, mid arrow] 
    (-0.05,0.75-1) .. controls (-1,0.7-1) and (-1,-0.2-1) .. (-0.05,-0.25-1);
   \node[above left,red] at (-0.5,1.5) {$\mathfrak{L}^{\frac\pi2 + \varepsilon}\hphi$};
  \end{scope}
\end{tikzpicture}
\caption{In the first row, as $z$ goes around $0$ negatively in top left picture, the singular points of $\hpsi$ move along the blue curves from $\zeta_m^+$ to $\zeta_{m-2}^+$ in the top middle picture. Correspondingly, the singular points of $\hphi$ move along the blue curves from $-\zeta_{-m}^+$ to $-\zeta_{-(m+2)}^+$, as shown in the top right picture. In the second row, as $z$ goes around $1$ negatively in the bottom left picture, the singular points of $\hpsi$ move along the blue curves from $\zeta_m^+$ to $\zeta_{m+2}^+$ in the bottom middle picture. Correspondingly, the singular points of $\hphi$ move along the blue curves from $-\zeta_{-m}^+$ to $-\zeta_{-(m-2)}^+$. The function $I_2$ is related to the Laplace transform of $\hphi$ along $0$ direction. The function $I_1$ is related to the Laplace transform of $\hphi$ along direction $\frac\pi2+\varepsilon$.}
\label{figurehpsihphimonodromy}
\end{figure}

Let 
\begin{equation}
    \Delta_{z=p}f(z) := f(z) - {\rm cont}_{z=p}f(z) 
\end{equation}
be the difference of two sheets of the function $f$. We have the following.

\begin{lemma}\label{lemmaMonodromy0} Viewing the Laplace transform $\mathfrak{L}^{\theta} \hphi$ and $\mathfrak{L}^{\theta}\hpsi$ as functions in $z$, we have,
    \begin{equation}\label{equationMonodromy0}
        \Delta_{z=0} \mathfrak{L}^0 \hpsi=0, \quad  \Delta_{z=0} \mathfrak{L}^{\frac\pi2+\varepsilon} \hphi =0, \quad  \Delta_{z=1} \mathfrak{L}^{\frac\pi2+\varepsilon} \hpsi=0, \quad  \Delta_{z=1} \mathfrak{L}^0 \hphi=0.
    \end{equation}
    Moreover, 
    \begin{equation}\label{equationmonodromy0I}
        \Delta_{z=0} I_2 = (1-e^{-4\pi iC})I_2. \quad \Delta_{z=0} I_1=0, \quad \Delta_{z=1} I_1 = \Delta_{z=1} I_2.
    \end{equation}
\end{lemma}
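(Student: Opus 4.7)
The plan is to establish the four contour identities \eqref{equationMonodromy0} first, by a continuous-deformation argument on the Borel plane, and then to read off the four monodromy relations \eqref{equationmonodromy0I} by combining \eqref{equationMonodromy0} with Lemma~\ref{lemmamonodromyQ} and, for the last one, with the Stokes identity of Theorem~\ref{theoremI2Stokes}.

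The key observation for \eqref{equationMonodromy0} is that in each of the four cases the continuous family of singular points traced out during the $z$-loop remains disjoint from the corresponding Laplace ray, so the integration contour can be held fixed. Concretely, as $z$ loops around $0$, Lemma~\ref{lemmamonodromyQ} together with Figure~\ref{figuresanalyticQpm} shows that $-\log(Q_+/Q_-)$ traces a loop in the closed left half-plane; hence the moving singular points $\zeta_m^+$ of $\hpsi$ stay in the left half-plane and the positive real ray is never crossed, which yields $\Delta_{z=0}\mathfrak{L}^0\hpsi=0$. Dually, the singular points $-\zeta_{-m}^+$ of $\hphi$ move in the right half-plane, so they avoid the ray in direction $\tfrac{\pi}{2}+\varepsilon$, giving $\Delta_{z=0}\mathfrak{L}^{\frac{\pi}{2}+\varepsilon}\hphi=0$. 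The two cases around $z=1$ are symmetric: the moving singular points of $\hpsi$ (resp.\ of $\hphi$) lie in the right (resp.\ left) half-plane, again avoiding $\mathfrak{L}^{\frac{\pi}{2}+\varepsilon}$ (resp.\ $\mathfrak{L}^0$). The fixed singularities $\zeta_m^-=2\pi im$ on the imaginary axis are harmless throughout, since they sit strictly off each of the two rays $\mathfrak{L}^0$ and $\mathfrak{L}^{\frac{\pi}{2}+\varepsilon}$.

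The relations \eqref{equationmonodromy0I} then follow by the product rule for analytic continuation applied to the Borel--Laplace representations of Propositions~\ref{propositionI2BL} and~\ref{propositionI1BL}. Combining \eqref{equationMonodromy0} with Lemma~\ref{lemmamonodromyQ} gives ${\rm cont}_{z=0}I_2=e^{-4\pi iC}Q_-^C\mathfrak{L}^0\hpsi=e^{-4\pi iC}I_2$ and ${\rm cont}_{z=0}I_1=Q_+^C\tfrac{1}{e^{2\pi iC}-1}\mathfrak{L}^{\frac{\pi}{2}+\varepsilon}\hphi=I_1$, which are the first two identities. For $\Delta_{z=1}I_1=\Delta_{z=1}I_2$, a direct computation would force me to analyse ${\rm cont}_{z=1}\mathfrak{L}^{\frac{\pi}{2}+\varepsilon}\hphi$, whose contour \emph{is} crossed by moving singularities; instead the cleanest route is to use the Stokes relation \eqref{equationI2Stokes} in the form
\begin{equation}
    I_2-I_1=Q_-^C\frac{1+e^{-2\pi iC}}{1-e^{-2\pi iC}}\mathfrak{L}^{\frac{\pi}{2}+\varepsilon}\hpsi,
\end{equation}
whose right-hand side is invariant under ${\rm cont}_{z=1}$ by Lemma~\ref{lemmamonodromyQ} and the third identity in \eqref{equationMonodromy0}, so that $\Delta_{z=1}(I_2-I_1)=0$.

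The main obstacle is the continuous-path justification underlying \eqref{equationMonodromy0}: one must verify that the trajectories of the moving singularities miss the Laplace rays throughout the entire $z$-loop, not merely at its endpoints, so that no alien correction is produced. Once the left/right half-plane locations of these trajectories are extracted from Figure~\ref{figuresanalyticQpm}, the four contour-avoidance statements drop out, and the remainder of the lemma is an algebraic manipulation of the Borel--Laplace formulas together with one clever invocation of the Stokes identity.
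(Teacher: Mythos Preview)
Your proposal is correct and follows essentially the same route as the paper: the four identities in \eqref{equationMonodromy0} are obtained by tracking the trajectories of the moving singularities and observing that they avoid the relevant Laplace rays (the paper reads this off Figure~\ref{figurehpsihphimonodromy}, you equivalently use Figure~\ref{figuresanalyticQpm}), and the relations \eqref{equationmonodromy0I} are then deduced from Lemma~\ref{lemmamonodromyQ} together with the Stokes identity $I_2-I_1=Q_-^C\frac{1+e^{-2\pi iC}}{1-e^{-2\pi iC}}\mathfrak{L}^{\frac{\pi}{2}+\varepsilon}\hpsi$ for the last one.
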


\begin{proof}
    The formulas in equation \eqref{equationMonodromy0} can be read off the trace of the singular points of $\hphi$ and $\hpsi$ in Figure $\ref{figurehpsihphimonodromy}$. For example,  $\Delta_{z=0} \mathfrak{L}^0 \hpsi=0$ is implied by the upper-middle picture of Figure \ref{figurehpsihphimonodromy} since the moving singular points $\zeta_m^+$ run from the left and the Laplace integral is along the positive axis. It is similar to prove the others. 

    Moreover, the first two formulas in equation \eqref{equationmonodromy0I} are implied by \eqref{equationMonodromy0} and the computation in equation \eqref{equationmonodromyQ}. The last formula is by equation \eqref{equationMonodromy0}, \eqref{equationmonodromyQ} and the relation between $I_1$ and $I_2$ (see equation \eqref{equationI2Stokes})
    \begin{equation}
            I_2-I_1 = Q_-^C \frac{e^{2\pi iC}+1}{e^{2\pi iC}-1}\mathfrak{L}^{\frac\pi2 +\varepsilon} \hpsi.
    \end{equation}
\end{proof}

If the trace of singular points crosses the integral line, one should use the curve which avoids the singular points by the Cauchy integral formula. 

\begin{lemma}\label{lemmaDeltaalienhphi} There is a relation between the monodromy operator $\Delta_{z=1}$ and the alien operators:
    \begin{equation}
        \Delta_{z=1} \mathfrak{L}^{\frac\pi2+\varepsilon} \hphi = \mathfrak{L}^{\frac\pi2+\varepsilon }\left( -e^{\zeta_1^+ C}\Delta_{-\zeta_1^+}^+ +e^{\zeta^+_2C}\Delta_{-\zeta_2^+}^+ \right) \hphi
        =
        \left(\frac{Q_-}{Q_+}\right)^C(e^{4\pi iC}-e^{2\pi iC})\mathfrak{L}^{\frac\pi2+\varepsilon } \hpsi
    \end{equation}
\end{lemma}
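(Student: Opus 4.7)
The plan is to track the motion of the singular points of $\hphi$ in the $\zeta$-plane as $z$ encircles $1$ negatively, and then apply a Cauchy-type deformation argument for the Laplace contour in direction $\pi/2+\varepsilon$. First, by Lemma~\ref{lemmamonodromyQ}, $\log(Q_+/Q_-)$ is shifted by $-4\pi i$ under this continuation, so each singular point $-\zeta_{-m}^+ = \log(Q_+/Q_-) + 2\pi i m$ of $\hphi$ is transported to $-\zeta_{-(m-2)}^+$ along a U-shaped trajectory that (mirroring Figure~\ref{figuresanalyticQpm}) runs far into the left half-plane, descends by $4\pi$ in imaginary part, and then returns to the imaginary axis, as depicted in the bottom row of Figure~\ref{figurehpsihphimonodromy}. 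The points $\zeta_m^-$ remain fixed.

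Next I would determine which trajectories produce a nonzero net winding about the Laplace ray $L_{\pi/2+\varepsilon}$. Since this ray lies slightly inside the second quadrant and the descending portion of each trajectory sits far to its left, only the horizontal legs of the U-shape can contribute. A case-by-case inspection by height shows that for $m\in\{0,1\}$ only the upper horizontal leg meets $L_{\pi/2+\varepsilon}$ (one net crossing), for $m\geq 2$ both horizontal legs of the same trajectory meet $L_{\pi/2+\varepsilon}$ in opposite directions and cancel, and for $m\leq -1$ the trajectory stays in the lower half-plane. Hence only the end-positions $\omega_1 := -\zeta_1^+$ (coming from the original $m=1$ singular point) and $\omega_2 := -\zeta_2^+$ (coming from the original $m=0$ singular point) contribute to $\Delta_{z=1}\mathfrak{L}^{\pi/2+\varepsilon}\hphi$.

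Then I would carry out the Cauchy deformation: the difference between the original Laplace contour and the contour obtained by continuously deforming it through the $z$-loop reduces to Hankel-type loops around $\omega_1$ and $\omega_2$. By the integrable double-branch structure of these singularities (as in Lemma~\ref{lemmaHpropoty}), each Hankel loop evaluates to $\pm e^{-\omega_j C}\,\mathfrak{L}^{\pi/2+\varepsilon}(\Delta_{\omega_j}^+\hphi)$, with the sign dictated by the orientation of the deformation relative to $L_{\pi/2+\varepsilon}$. Assembling these contributions yields the first equality. The second equality then follows immediately by substituting $\Delta_{-\zeta_j^+}^+\hphi = \hpsi$ (via Lemma~\ref{lemmaalienhphihpsi}) together with $e^{\zeta_j^+ C} = (Q_-/Q_+)^C e^{2\pi i j C}$ for $j=1,2$, and combining terms.

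The principal obstacle is the geometric step: correctly enumerating the net crossings and their orientations. In particular one must verify the $m\geq 2$ cancellations, which rely on the precise position of the rightmost part of the U-trajectory relative to $L_{\pi/2+\varepsilon}$, and determine the sign of each Hankel integral at the final positions $\omega_1,\omega_2$. Once these signs are pinned down, the remainder is bookkeeping via the alien calculus already computed in Lemma~\ref{lemmaalienhphihpsi}.
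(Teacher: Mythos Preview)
Your overall strategy agrees with the paper's: track how the singular points of $\hphi$ move as $z$ loops negatively around $1$, deform the Laplace contour accordingly, and express the difference as Hankel loops at the final positions $-\zeta_1^+$ and $-\zeta_2^+$. The second equality is handled correctly via Lemma~\ref{lemmaalienhphihpsi}.

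The gap is in the sign determination for the first equality. You attribute the opposite signs on the two Hankel contributions to ``the orientation of the deformation relative to $L_{\pi/2+\varepsilon}$'', but this cannot be right: both contributing moving singularities (your $m=0$ and $m=1$) cross the ray $L_{\pi/2+\varepsilon}$ exactly once, in the \emph{same} direction (leftward along the upper horizontal leg of the U). An orientation argument alone therefore forces the same sign on both Hankel loops, yielding $(e^{\zeta_1^+C}+e^{\zeta_2^+C})\mathfrak L^{\pi/2+\varepsilon}\hpsi$ rather than the required difference. What actually produces the relative sign, and what your analysis omits entirely, is the role of the \emph{fixed} singularities $\zeta_m^-$ in the deformation. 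The deformed contour, in reaching the Hankel loop at $-\zeta_2^+\sim-3\pi i$, must pass the fixed point $\zeta_{-1}^-=-2\pi i$; since ${\rm cont}_{\zeta_m^-}\hphi=-\hphi$ (a consequence of the double-branch structure at $Q_-$), this monodromy flips the sign of the integrand on the second Hankel loop relative to the first. The paper states this explicitly (``The symbols we used\ldots are by the analytic continuation of germ $\hphi$ at $\zeta_m^-$\ldots''). The same mechanism is also what makes your $m\ge2$ cancellation actually work: the trajectory for such $m$ winds around an \emph{even} number of $\zeta_k^-$'s, so the sign flips cancel---but your net-crossing count by itself does not see this and would fail in a setting where the fixed-point monodromy were not an involution.
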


\begin{proof}
    The second step can be proved by the alien operator computations in equation \eqref{equationalienhphihpsi} and the convention about the commutative between the exponential term and Laplace. For the first step, we need to look into ${\rm cont}_{z=1}\mathfrak{L}^{\frac\pi2+\varepsilon} \hphi$. As $z$ goes around $1$ negatively, the moving singular points push the integral curve \footnote{Indeed, a non-homogeneous vector field can be constructed such that its time-dependent flow coincides with the homotopy used in the Cauchy integral. For technical details, see \cite{AIF_2023__73_5_1987_0}.} as in the left picture in Figure \ref{figurehphimonodromy1}.
    \begin{figure}[h]
    \centering
\begin{tikzpicture}[scale=1, every node/.style={font=\footnotesize},mid arrow/.style={
        postaction={decorate}, 
        decoration={
            markings,
            mark=at position 0.5 with {\arrow{Stealth[scale=1.2]}} 
        }
    }]

  \begin{scope}[shift={(-5,0)},scale=2]
\draw[-{Stealth[scale=1]}, thick] (-1,0)--(1,0); 
  \draw[-{Stealth[scale=1]}, thick] (0,-1.2)--(0,1.2); 
  \node[inner sep=3pt, path picture={
    \draw[thick,blue] (path picture bounding box.south west) -- (path picture bounding box.north east);
    \draw[thick,blue] (path picture bounding box.north west) -- (path picture bounding box.south east);
  }] at (-0.05,-0.25) {};
    \node[inner sep=3pt, path picture={
    \draw[thick,blue] (path picture bounding box.south west) -- (path picture bounding box.north east);
    \draw[thick,blue] (path picture bounding box.north west) -- (path picture bounding box.south east);
  }] at (-0.05,-0.75) {};
    \node[inner sep=3pt, path picture={
    \draw[thick,blue] (path picture bounding box.south west) -- (path picture bounding box.north east);
    \draw[thick,blue] (path picture bounding box.north west) -- (path picture bounding box.south east);
  }] at (-0.05,0.75) {};
  \node[inner sep=3pt, path picture={
    \draw[thick,blue] (path picture bounding box.south west) -- (path picture bounding box.north east);
    \draw[thick,blue] (path picture bounding box.north west) -- (path picture bounding box.south east);
  }] at (-0.05,0.25) {};
    \node[inner sep=3pt, path picture={
    \draw[thick,brown] (path picture bounding box.south west) -- (path picture bounding box.north east);
    \draw[thick,brown] (path picture bounding box.north west) -- (path picture bounding box.south east);
  }] at (0,0) {};
      \node[inner sep=3pt, path picture={
    \draw[thick,brown] (path picture bounding box.south west) -- (path picture bounding box.north east);
    \draw[thick,brown] (path picture bounding box.north west) -- (path picture bounding box.south east);
  }] at (0,0.5) {};
      \node[inner sep=3pt, path picture={
    \draw[thick,brown] (path picture bounding box.south west) -- (path picture bounding box.north east);
    \draw[thick,brown] (path picture bounding box.north west) -- (path picture bounding box.south east);
  }] at (0,1) {};
      \node[inner sep=3pt, path picture={
    \draw[thick,brown] (path picture bounding box.south west) -- (path picture bounding box.north east);
    \draw[thick,brown] (path picture bounding box.north west) -- (path picture bounding box.south east);
  }] at (0,-0.5) {};
  \node[above] at (0,1.2) {$\zeta$-plane};
  \node[right,blue] at (-0.05,0.75) {$-\zeta_{-1}^+$};
  \node[right,blue] at (-0.05,0.25) {$-\zeta_0^+$};
  \node[right,blue] at (-0.05,-0.25) {$-\zeta_1^+$};
  \node[right,blue] at (-0.05,-0.75) {$-\zeta_{2}^+$};
   \node[left,brown] at (0,-0.5) {$\zeta_{-1}^-$};
   \node[left,brown] at (0,0.5) {$\zeta_{1}^-$};
   \node[left,brown] at (0,1) {$\zeta_{2}^-$};
   \node[left,brown] at (0,0.05) {$\zeta_{0}^-$};
   \draw[thick, blue, mid arrow] 
    (-0.05,0.75) .. controls (-1,0.7) and (-1,-0.2) .. (-0.05,-0.25);
     \draw[thick, blue, mid arrow] 
    (-0.05,0.75-0.5) .. controls (-1,0.7-0.5) and (-1,-0.2-0.5) .. (-0.05,-0.25-0.5);
    \draw[thick, red, mid arrow] 
    (0,-0.375) .. controls (-0.2,-0.5) and (-0.2,-0.6) .. (0,-0.625);
     \draw[thick, red] 
    (-0.07,0.14) .. controls (-0.1,-0.15) and (0.2,-0.25) .. (0,-0.375);
    \draw[thick, red] (0,0)--(-0.07,0.14);
    \draw[thick, red] 
    (0,-0.625) .. controls (0.2,-0.7) and (0.2,-0.8) .. (0,-0.875);
    \draw[thick, red] 
    (0,-0.875) .. controls (-0.2,-0.875) and (-0.8,-0.8) .. (-0.9,-0.55);
    \draw[-{Stealth[scale=1]}, thick,red] (-0.9,-0.55)--(-1.5,1.5);
      \node[above,red] at (-1.5,1.5) {${\rm cont}_{z=1}\mathfrak{L}^{\frac\pi2 + \varepsilon}\hphi$};
  \end{scope}
  
  \begin{scope}[shift={(0,0)},scale=2]
\draw[-{Stealth[scale=1]}, thick] (-1,0)--(1,0); 
  \draw[-{Stealth[scale=1]}, thick,red] (0,0)--(-0.5,1.5);
  \draw[-{Stealth[scale=1]}, thick] (0,-1.2)--(0,1.2); 
  \node[inner sep=3pt, path picture={
    \draw[thick,blue] (path picture bounding box.south west) -- (path picture bounding box.north east);
    \draw[thick,blue] (path picture bounding box.north west) -- (path picture bounding box.south east);
  }] at (-0.05,-0.25) {};
    \node[inner sep=3pt, path picture={
    \draw[thick,blue] (path picture bounding box.south west) -- (path picture bounding box.north east);
    \draw[thick,blue] (path picture bounding box.north west) -- (path picture bounding box.south east);
  }] at (-0.05,-0.75) {};
    \node[inner sep=3pt, path picture={
    \draw[thick,blue] (path picture bounding box.south west) -- (path picture bounding box.north east);
    \draw[thick,blue] (path picture bounding box.north west) -- (path picture bounding box.south east);
  }] at (-0.05,0.75) {};
  \node[inner sep=3pt, path picture={
    \draw[thick,blue] (path picture bounding box.south west) -- (path picture bounding box.north east);
    \draw[thick,blue] (path picture bounding box.north west) -- (path picture bounding box.south east);
  }] at (-0.05,0.25) {};
    \node[inner sep=3pt, path picture={
    \draw[thick,brown] (path picture bounding box.south west) -- (path picture bounding box.north east);
    \draw[thick,brown] (path picture bounding box.north west) -- (path picture bounding box.south east);
  }] at (0,0) {};
      \node[inner sep=3pt, path picture={
    \draw[thick,brown] (path picture bounding box.south west) -- (path picture bounding box.north east);
    \draw[thick,brown] (path picture bounding box.north west) -- (path picture bounding box.south east);
  }] at (0,0.5) {};
      \node[inner sep=3pt, path picture={
    \draw[thick,brown] (path picture bounding box.south west) -- (path picture bounding box.north east);
    \draw[thick,brown] (path picture bounding box.north west) -- (path picture bounding box.south east);
  }] at (0,1) {};
      \node[inner sep=3pt, path picture={
    \draw[thick,brown] (path picture bounding box.south west) -- (path picture bounding box.north east);
    \draw[thick,brown] (path picture bounding box.north west) -- (path picture bounding box.south east);
  }] at (0,-0.5) {};
  \node[above] at (0,1.2) {$\zeta$-plane};
  \node[right,blue] at (-0.05,0.75) {$-\zeta_{-1}^+$};
  \node[right,blue] at (-0.05,0.25) {$-\zeta_0^+$};
  \node[right,blue] at (-0.05,-0.25) {$-\zeta_1^+$};
  \node[right,blue] at (-0.05,-0.75) {$-\zeta_{2}^+$};
   \node[left,brown] at (0,-0.5) {$\zeta_{-1}^-$};
   \node[left,brown] at (0,0.5) {$\zeta_{1}^-$};
   \node[left,brown] at (0,1) {$\zeta_{2}^-$};
   \node[left,brown] at (0,0.05) {$\zeta_{0}^-$};
   \node[above,red] at (-0.5,1.5) {$\mathfrak{L}^{\frac\pi2 + \varepsilon}\hphi$};
    \draw[thick, red, mid arrow] 
    (0,-0.375) .. controls (-0.2,-0.5) and (-0.2,-0.6) .. (0,-0.625);
     \draw[thick, red, mid arrow] 
    (0,-0.375) .. controls (-0.2,-0.5) and (-0.2,-0.6) .. (0,-0.625);
        \draw[thick, red, mid arrow] 
    (0,-0.375) .. controls (-0.2,-0.5) and (-0.2,-0.6) .. (0,-0.625);
     \draw[thick, red] 
    (-0.07,0.14) .. controls (-0.1,-0.15) and (0.2,-0.25) .. (0,-0.375);
    \draw[thick, red] (0,0)--(-0.07,0.14);
    \draw[thick, red] 
    (0,-0.625) .. controls (0.2,-0.7) and (0.2,-0.8) .. (0,-0.875);
    \draw[thick, red] 
    (0,-0.875) .. controls (-0.2,-0.875) and (-0.23,-0.65) .. (-0.3,-0.5);
    \draw[-{Stealth[scale=1]}, thick,red] (-0.3,-0.5)--(-1,1.5);
      \node[above left,red] at (-1,1.5) {${\rm cont}_{z=1}\mathfrak{L}^{\frac\pi2 + \varepsilon}\hphi$};
  \end{scope}

    \begin{scope}[shift={(5,0)},scale=2]
\draw[-{Stealth[scale=1]}, thick] (-1,0)--(1,0); 
  \draw[-{Stealth[scale=1]}, thick] (0,-1.2)--(0,1.2); 
  \node[inner sep=3pt, path picture={
    \draw[thick,blue] (path picture bounding box.south west) -- (path picture bounding box.north east);
    \draw[thick,blue] (path picture bounding box.north west) -- (path picture bounding box.south east);
  }] at (-0.05,-0.25) {};
    \node[inner sep=3pt, path picture={
    \draw[thick,blue] (path picture bounding box.south west) -- (path picture bounding box.north east);
    \draw[thick,blue] (path picture bounding box.north west) -- (path picture bounding box.south east);
  }] at (-0.05,-0.75) {};
    \node[inner sep=3pt, path picture={
    \draw[thick,blue] (path picture bounding box.south west) -- (path picture bounding box.north east);
    \draw[thick,blue] (path picture bounding box.north west) -- (path picture bounding box.south east);
  }] at (-0.05,0.75) {};
  \node[inner sep=3pt, path picture={
    \draw[thick,blue] (path picture bounding box.south west) -- (path picture bounding box.north east);
    \draw[thick,blue] (path picture bounding box.north west) -- (path picture bounding box.south east);
  }] at (-0.05,0.25) {};
    \node[inner sep=3pt, path picture={
    \draw[thick,brown] (path picture bounding box.south west) -- (path picture bounding box.north east);
    \draw[thick,brown] (path picture bounding box.north west) -- (path picture bounding box.south east);
  }] at (0,0) {};
      \node[inner sep=3pt, path picture={
    \draw[thick,brown] (path picture bounding box.south west) -- (path picture bounding box.north east);
    \draw[thick,brown] (path picture bounding box.north west) -- (path picture bounding box.south east);
  }] at (0,0.5) {};
      \node[inner sep=3pt, path picture={
    \draw[thick,brown] (path picture bounding box.south west) -- (path picture bounding box.north east);
    \draw[thick,brown] (path picture bounding box.north west) -- (path picture bounding box.south east);
  }] at (0,1) {};
      \node[inner sep=3pt, path picture={
    \draw[thick,brown] (path picture bounding box.south west) -- (path picture bounding box.north east);
    \draw[thick,brown] (path picture bounding box.north west) -- (path picture bounding box.south east);
  }] at (0,-0.5) {};
  \node[above] at (0,1.2) {$\zeta$-plane};
  \node[right,blue] at (-0.05,0.75) {$-\zeta_{-1}^+$};
  \node[right,blue] at (-0.05,0.25) {$-\zeta_0^+$};
  \node[right,blue] at (-0.05,-0.25) {$-\zeta_1^+$};
  \node[right,blue] at (-0.05,-0.75) {$-\zeta_{2}^+$};
   \node[left,brown] at (0,-0.5) {$\zeta_{-1}^-$};
   \node[left,brown] at (0,0.5) {$\zeta_{1}^-$};
   \node[left,brown] at (0,1) {$\zeta_{2}^-$};
   \node[left,brown] at (0,0.05) {$\zeta_{0}^-$};
  \draw[thick, red, mid arrow](-0.69,1.31-1)--(-0.09,0.71-1);
  \draw[thick, red, mid arrow](0.01,0.75-1)--(-0.59,1.35-1);
\draw[thick,red] (-0.09,0.71-1) .. controls (-0.02,0.63-1) and (0.03,0.73-1) .. (0.01,0.75-1);

  \draw[thick, red, mid arrow](-0.69,1.31-1.5)--(-0.09,0.71-1.5);
  \draw[thick, red, mid arrow](0.01,0.75-1.5)--(-0.59,1.35-1.5);
\draw[thick,red] (-0.09,0.71-1.5) .. controls (-0.02,0.63-1.5) and (0.03,0.73-1.5) .. (0.01,0.75-1.5);
\node[above,red] at (-0.69,1.31-1) {$\gamma_1$};
\node[above,red] at (-0.69,1.31-1.5) {$\gamma_2$};
  \end{scope}
\end{tikzpicture}
\caption{Computation of $\Delta_{z=1} \mathfrak{L}^{\frac\pi2+\varepsilon} \hphi$}
\label{figurehphimonodromy1}
\end{figure}
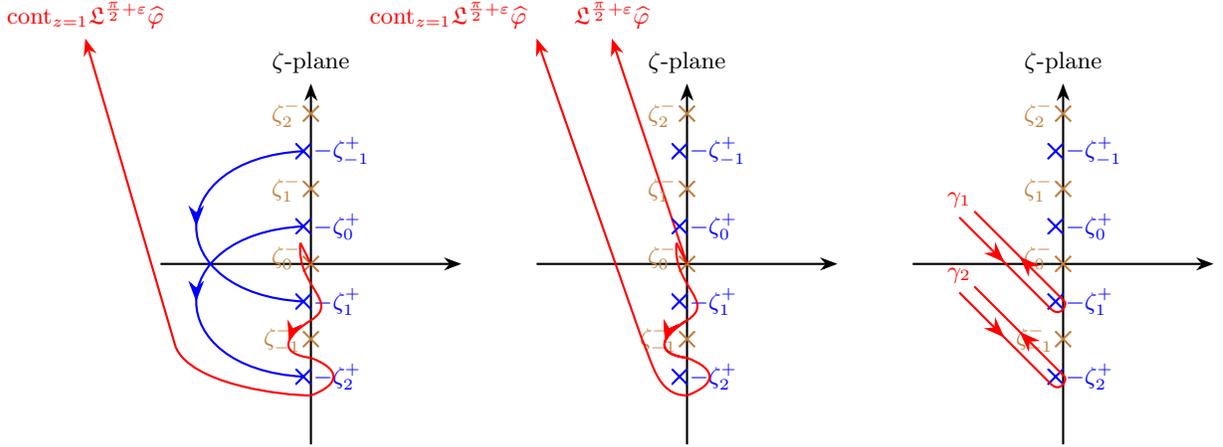

We then compute the difference of two Laplace transforms as shown in the middle picture. The result is 
\begin{equation}
     \Delta_{z=1} \mathfrak{L}^{\frac\pi2+\varepsilon} \hphi =   \mathfrak{L}^{\frac\pi2+\varepsilon}\hphi -{\rm cont}_{z=1}\mathfrak{L}^{\frac\pi2+\varepsilon} \hphi = \left(-\int_{\gamma_1} + 
    \int_{\gamma_2}\right) e^{-C\zeta} \hphi d\zeta,
\end{equation}
where the paths $\gamma_1$ and $\gamma_2$ are shown in the right picture. The symbols we used in the last term of above formula are by the analytic continuation of germ $\hphi$ at $\zeta^-_m$ for all $m\in\mathbb{Z}$: ${\rm cont}_{\zeta_m^-}\hphi= -\hphi$. We get the desired result.
\end{proof}

By using Lemma \ref{lemmamonodromyQ}, Lemma \ref{lemmaMonodromy0}, and Lemma \ref{lemmaalienhphihpsi}, we have the monodromy of $I_1$ and $I_2$.

\begin{proposition}\label{pro:I-monodormy}
    \begin{equation}
        \begin{split}
            \Delta_{z=0} 
\begin{bmatrix}
I_1 \\
I_2 
\end{bmatrix}
&=
\begin{bmatrix}
0&0 \\
0& 1-e^{-4\pi iC} 
\end{bmatrix}
\begin{bmatrix}
I_1 \\
I_2 
\end{bmatrix},
\\
            \Delta_{z=1} 
\begin{bmatrix}
I_1 \\
I_2 
\end{bmatrix} &=  \frac{1}{e^{6\pi iC} +e^{4\pi iC} } 
\begin{bmatrix}
    (e^{6\pi iC}-1) & e^{4\pi iC} -e^{2\pi iC}\\
    (e^{6\pi iC}-1) &e^{4\pi iC} -e^{2\pi iC} 
\end{bmatrix}
\begin{bmatrix}
I_1 \\
I_2 
\end{bmatrix}.
\end{split}
    \end{equation}
\end{proposition}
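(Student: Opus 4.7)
The plan is to substitute the Borel-Laplace expressions from Propositions~\ref{propositionI2BL} and~\ref{propositionI1BL} and then apply the monodromy identities already recorded in Lemmas~\ref{lemmamonodromyQ}, \ref{lemmaMonodromy0}, and~\ref{lemmaDeltaalienhphi}. For $\Delta_{z=0}$, the computation is essentially mechanical: combining ${\rm cont}_{z=0} Q_-^C = e^{-4\pi iC} Q_-^C$ and ${\rm cont}_{z=0}Q_+^C = Q_+^C$ with the vanishings $\Delta_{z=0}\mathfrak{L}^0\hpsi = \Delta_{z=0}\mathfrak{L}^{\frac{\pi}{2}+\varepsilon}\hphi = 0$ shows at once that $I_2 = Q_-^C \mathfrak{L}^0\hpsi$ picks up the factor $e^{-4\pi iC}$ while $I_1$ is invariant, yielding the diagonal matrix in the first claim.

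For $\Delta_{z=1}$, the nontrivial step is to compute $\Delta_{z=1}I_1$. Here ${\rm cont}_{z=1} Q_+^C = e^{-4\pi iC} Q_+^C$ is combined with Lemma~\ref{lemmaDeltaalienhphi}, whose output is proportional to $\mathfrak{L}^{\frac{\pi}{2}+\varepsilon}\hpsi$. Substituting into $I_1 = Q_+^C (e^{2\pi iC}-1)^{-1}\mathfrak{L}^{\frac{\pi}{2}+\varepsilon}\hphi$ and using the algebraic simplification $(e^{4\pi iC}-e^{2\pi iC})/(e^{2\pi iC}-1) = e^{2\pi iC}$ produces
\begin{equation*}
{\rm cont}_{z=1} I_1 \;=\; e^{-4\pi iC} I_1 \;-\; e^{-2\pi iC}\, Q_-^C \mathfrak{L}^{\frac{\pi}{2}+\varepsilon}\hpsi .
\end{equation*}
The crucial step is then to eliminate the Laplace transform of $\hpsi$ in favor of $I_1$ and $I_2$ via the Stokes identity of Theorem~\ref{theoremI2Stokes}, rewritten as
\begin{equation*}
Q_-^C \mathfrak{L}^{\frac{\pi}{2}+\varepsilon}\hpsi \;=\; \frac{1-e^{-2\pi iC}}{1+e^{-2\pi iC}}\,(I_2 - I_1).
\end{equation*}
Taking the difference $\Delta_{z=1}I_1 = I_1 - {\rm cont}_{z=1}I_1$ and rationalizing by multiplying numerator and denominator by $e^{6\pi iC}$ should produce exactly the first row of the claimed matrix.

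The second row requires no new computation: the last identity of Lemma~\ref{lemmaMonodromy0} already gives $\Delta_{z=1} I_2 = \Delta_{z=1} I_1$. This is also transparent from the same Stokes identity, since the combination $I_2 - I_1 = Q_-^C \tfrac{1+e^{-2\pi iC}}{1-e^{-2\pi iC}}\mathfrak{L}^{\frac{\pi}{2}+\varepsilon}\hpsi$ is manifestly $\Delta_{z=1}$-invariant (its $Q_-^C$ prefactor is unchanged by Lemma~\ref{lemmamonodromyQ}, and $\Delta_{z=1}\mathfrak{L}^{\frac{\pi}{2}+\varepsilon}\hpsi = 0$ by Lemma~\ref{lemmaMonodromy0}), forcing the two rows of the monodromy matrix to agree.

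The main obstacle I expect is the bookkeeping in the final algebraic simplification. After eliminating $\mathfrak{L}^{\frac{\pi}{2}+\varepsilon}\hpsi$, several phases are in play at once—the factor $e^{-4\pi iC}$ from $Q_+^C$, the factor $e^{4\pi iC}-e^{2\pi iC}$ from the alien calculus of Lemma~\ref{lemmaDeltaalienhphi}, and the Stokes prefactor $(1-e^{-2\pi iC})/(1+e^{-2\pi iC})$. Collapsing these into the compact common denominator $e^{6\pi iC}+e^{4\pi iC}$ times the stated numerators is essentially the only calculation-heavy step; all other ingredients are already available in the cited lemmas and theorems.
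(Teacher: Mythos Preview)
Your proposal is correct and follows essentially the same route as the paper, which records only a one-line pointer to Lemmas~\ref{lemmamonodromyQ}, \ref{lemmaMonodromy0} and~\ref{lemmaalienhphihpsi} before stating the proposition. Your write-up is the natural expansion of that pointer: the $\Delta_{z=0}$ case is immediate from the first two lemmas; for $\Delta_{z=1}$ you correctly invoke Lemma~\ref{lemmaDeltaalienhphi} (which packages the alien calculus of Lemma~\ref{lemmaalienhphihpsi}) to get ${\rm cont}_{z=1}I_1 = e^{-4\pi iC}I_1 - e^{-2\pi iC}Q_-^C\mathfrak{L}^{\frac\pi2+\varepsilon}\hpsi$, then eliminate the Laplace transform via the Stokes identity of Theorem~\ref{theoremI2Stokes}, and finally use $\Delta_{z=1}I_2=\Delta_{z=1}I_1$ from Lemma~\ref{lemmaMonodromy0} for the second row. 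The algebraic simplification you flag as the only delicate step indeed collapses to the stated matrix after multiplying numerator and denominator by $e^{6\pi iC}$.
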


\section{Conclusion and discussion}
\label{sec:conclusion}
In this paper, we have developed the resurgence analysis for large central charge (large $C$) expansions in two-dimensional CFTs using the Coulomb gas formalism. In particular, we have focused on the conformal blocks $I_1(C,z)$ and $I_2(C,z)$ of the correlation
function involving degenerate primary operator $\phi_{2,1}$, $ \langle \phi_{2,1}(0)\phi_{2,1}(z,\bar{z})\phi_{2,1}(1)\phi_{2,1}(\infty)\rangle$.
Through the exact Borel-Laplace representations of conformal blocks $I_1(C,z)$ and $I_2(C,z)$, we have demonstrated that $I_1(C,z)$ participates in the Stokes phenomenon of $I_2(C,z)$ (and vice versa). Given a certain conformal block, resurgence theory enables us to discover other internal operators (conformal blocks) \cite{Benjamin:2023uib,Bissi:2024wur}.

Since $I_1(C,z)$ and $I_2(C,z)$ can be expressed using the product of Gamma functions and hypergeometric function, one could extract the resurgent data of hypergeometric functions via the formula \eqref{equationalienleibniz}, which relates to the results in \cite{Bissi:2024wur}. Notably, the BPZ equation of $I_1(C,z)$ and $I_2(C,z)$ (manifest as a hypergeometric differential equation) can be rewritten as a Schr\"odinger equation, where $1/C$ plays the role of Planck constant. One can identify $I_1(C,z)$ and $I_2(C,z)$ with the Borel resummed WKB wavefunction \cite{ATT21}, where our results \eqref{equationI2Stokes} and \eqref{equationStokesI1} play a curious role. Moreover, we have shown that monodromy in \( z \) arises from alien calculus in the Borel variable \( \zeta \) (dual to \( C \)), demonstrating the power of resurgence techniques. In contrast, Ecalle's parametric resurgence (or co-equational resurgence) studies this BPZ equation and establishes the relationships between resurgence phenomena as \( C \to \infty \) and \( z \to \infty \), such as those described by the first, second, and third bridge equations \cite{ecalle1994weighted}. This will be an interesting topic in future work.

In our paper, we have seen the resurgence structure of conformal blocks $I_1(C,z)$ and $I_2(C,z)$. A key next step is constructing the four-point function by
combining these conformal blocks with their anti-holomorphic counterparts. It is also interesting to study this norm structure within the framework of resurgence theory, where the crossing symmetry ($z\to 1-z$) and monodromy-free condition (around $z=0,1$) of the correlation function play an essential role.  These properties are closely related to proposition \ref{pro:Izto1-z} and proposition \ref{pro:I-monodormy}.
So far, we have focused on the non-unitary minimal model with a large central charge. Expanding this work to explore the resurgence structure in unitary CFT is particularly important. One interesting direction could be the Liouville theory. Furthermore, the Borel-Laplace representations play a significant role in various domains of  QFT, see \cite{tHooft:1977xjm} for instance. Exploring the application of our method to these representations could yield valuable insights into the study of the non-perturbative method.

From a purely mathematical perspective, elementary solution formulas exist for the roots of the equation $Q(w) = 0$ when $Q$ is a cubic polynomial. However, for polynomials of degree five or higher ($\deg Q \geq 5$), Galois theory establishes that no general solution expressible by radicals can exist. In lieu of explicit algebraic solutions, asymptotic approaches such as the saddle-point method applied to contour integrals or WKB analysis of associated differential equations provide viable alternatives. The resurgence properties of the integral
\begin{equation}
    \int_{\Gamma} Q^C(w)  dw, \quad \text{where } Q \text{ is an arbitrary polynomial in } w
\end{equation}
in the large-$C$ asymptotic regime will be investigated in subsequent research.

\subsection*{Acknowledgements}
We would like to thank Jie Gu, Xia Gu, Si Li, David Sauzin, Shanzhong SUN, Ruidong Zhu and Hao Zou for useful discussions. We would also like to thank Feiyu Peng and Hao Ouyang for their collaboration on the related topic. Y.L. thanks Zhengzhou University for its hospitality during his visit. Y.L. is partially supported by BNSFC No.JR25001. %
H.S. is supported by the National Natural Science Foundation of China (Grant No.12405087), Henan Postdoc Foundation (Grant No.22120055) and the Startup Funding of Zhengzhou University (Grant No.121-35220049, 121-35220581). H.S would like to thank BIMSA, SIMIS and Tongji University for their hospitality.

\appendix
\section{Analytic continuation of the function $\arccos$}\label{Appendixarccos}
\renewcommand{\theequation}{A.\arabic{equation}} 
\setcounter{equation}{0} 

In this appendix, we mainly review the analytic continuation of the function $\arccos$. Let $\widehat\phi$ be a germ holomorphic in a disc $\mathbb{D}$ and let point $p$ be a possibly singular point located at the boundary of $\mathbb{D}$: $p\in\partial\mathbb{D}$. If the germ $\widehat{\phi}$ admits the analytic continuation along a path around $p$ with negative direction, we then denote the final germ by ${\rm cont}_p \widehat\phi$.

\begin{center}
\begin{tikzpicture}[
    mid arrow/.style={
        postaction={decorate}, 
        decoration={
            markings,
            mark=at position 0.5 with {\arrow{Stealth[scale=1.2]}} 
        }
    }
]

    \filldraw[fill=gray!30] (0,0) circle (0.5); 
    
    \foreach \point/\pos in {(0.5,0)/right,(-1.5,0.7)/left,(-1,-0.6)/left,(1.4,-1)/right}
        \draw[red,line width=0.8pt] \point ++(-2pt,-2pt) -- ++(4pt,4pt) ++(-4pt,0) -- ++(4pt,-4pt);
    \coordinate (A) at (-0.25,0); 
  \node[right,red] at (0.6,0) {$p$};
\draw[thick, red, mid arrow] 
    (0,0) .. controls (1.5,2) and (1.5,-2) .. (0.2,-0.2);
\end{tikzpicture}
\end{center}

For instance, let $\mathbb{D}$ be the open unit disc. Then
\begin{equation}
    {\rm cont}_1\frac{1}{x-1}=\frac{1}{x-1},\quad
    {\rm cont}_1\big[(x-1)^{\frac{n}2}\big]=e^{\frac{\pi in}2}(x-1)^{\frac{n}2}, \quad
{\rm cont}_1\big[\log(x-1)\big]=\log(x-1)-2\pi i.
\end{equation}

For the function $\arccos$, we first introduce an elementary formula.

\begin{lemma}
\begin{equation}\label{equationarccos}
    \arccos(x) = \frac1i \log\big(x+(x^2-1)^{\frac12}\big)
\end{equation}
    is holomorphic in the unit disc. Moreover, 
    \begin{equation}\label{equationarccos-x}
        \arccos (x) + \arccos (-x) =\pi.
    \end{equation}
\end{lemma}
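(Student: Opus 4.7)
The plan is to establish \eqref{equationarccos} by a derivative-and-initial-value argument, and then to deduce \eqref{equationarccos-x} from the algebraic factorisation
\[
\big(x+(x^2-1)^{1/2}\big)\big(-x+(x^2-1)^{1/2}\big)=-1.
\]

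First I would fix branches by setting $(x^2-1)^{1/2}:=i(1-x^2)^{1/2}$ with the principal branch of the square root. Since $1-x^2$ has positive real part on the open unit disc $\mathbb{D}$, this radical is holomorphic there and takes values in the right half plane, so $(x^2-1)^{1/2}$ is holomorphic on $\mathbb{D}$. I would then verify that the argument $x+i(1-x^2)^{1/2}$ avoids the cut $(-\infty,0]$ of the principal logarithm throughout $\mathbb{D}$: on the real segment $(-1,1)$ it equals $e^{i\theta}$ for $x=\cos\theta\in(0,\pi)$ and so lies on the upper unit semicircle, and this position propagates through $\mathbb{D}$ by continuity together with a direct nonvanishing check. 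With the right-hand side of \eqref{equationarccos} thus holomorphic on $\mathbb{D}$, I would differentiate, use the convenient algebraic simplification
\[
(1-x^2)^{1/2}-ix = -i\big(x+i(1-x^2)^{1/2}\big),
\]
and obtain $-(1-x^2)^{-1/2}$, which matches $(\arccos)'$. A single evaluation at $x=0$, giving $\tfrac{1}{i}\log i=\tfrac{\pi}{2}=\arccos 0$, closes the argument by uniqueness of antiderivatives on $\mathbb{D}$.

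For \eqref{equationarccos-x}, I would apply \eqref{equationarccos} to both $x$ and $-x$ and sum. The displayed factorisation suggests the sum of principal logarithms equals $\log(-1)=i\pi$, but one must rule out an extra $2\pi i k$. I would pin down the integer $k$ by evaluating at $x=0$: both factors equal $i$, both principal logarithms contribute $i\pi/2$, and the sum equals $i\pi=\log(-1)$, forcing $k=0$ at the origin. Since $k(x)$ is integer-valued and continuous on the connected set $\mathbb{D}$, it vanishes identically, and dividing by $i$ yields $\arccos(x)+\arccos(-x)=\pi$.

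The main obstacle will be the branch bookkeeping, namely verifying that neither $x+i(1-x^2)^{1/2}$ nor $-x+i(1-x^2)^{1/2}$ meets the branch cut of the principal logarithm anywhere in $\mathbb{D}$. This is visibly true on the real diameter, but the global extension requires a short argument — most cleanly through a nonvanishing estimate of the modulus combined with connectedness of $\mathbb{D}$, or equivalently by showing that the image of $\mathbb{D}$ under $x\mapsto x+i(1-x^2)^{1/2}$ stays inside the slit plane $\mathbb{C}\setminus(-\infty,0]$. Once this is settled, both parts of the lemma follow by the routine computations outlined above.
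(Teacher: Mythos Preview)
Your argument is correct, but it differs in method from the paper's. For \eqref{equationarccos} the paper inverts $\cos y=x$ algebraically: writing $\cos y=\tfrac12(e^{iy}+e^{-iy})=x$ and solving the resulting quadratic for $e^{iy}$ gives $e^{iy}=x\pm(x^2-1)^{1/2}$, and the sign (equivalently, the branch) is fixed by the single evaluation $\arccos 0=\tfrac\pi2$. You instead match derivatives and an initial value, which requires you to set up and justify the branch of $(x^2-1)^{1/2}$ and of $\log$ on all of $\mathbb D$ before you can differentiate; the payoff is that the holomorphy on $\mathbb D$ is then built in rather than asserted. For \eqref{equationarccos-x} the paper simply says it follows ``directly'' from \eqref{equationarccos}, which amounts to the same product identity $\big(x+(x^2-1)^{1/2}\big)\big(-x+(x^2-1)^{1/2}\big)=-1$ you use; your treatment is more careful in pinning down the $2\pi i\,k$ ambiguity by continuity on a connected domain. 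In short: the paper's route is shorter and leaves branch bookkeeping implicit, while yours trades brevity for an explicit verification that the principal branches are globally valid on $\mathbb D$.
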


\begin{proof}
    The formula \eqref{equationarccos-x} is implied by the formula \eqref{equationarccos} directly. The Let $y=\arccos(x)$. Thus $\cos y=\frac12(e^{iy}+e^{-iy}) = x$. By quadratic formula, it follows that $e^{iy}=x \pm (x^2-1)^{\frac12}$. Thus, $y$ has two possibly expressions $\frac1i\log(x\pm (x^2-1)^{\frac12})$. Recall by definition $\arccos(0)=\frac\pi2$, we get the desired result.
\end{proof}

By equation \eqref{equationarccos}, we have useful analytic information for the function $\arccos$.
\begin{lemma}\label{lemmaarccosanalyticcontinuation}
    The function $\arccos$ defines a holomorphic germ at the origin. It is a resurgent function\footnote{We say $f$ is resurgent function in $\widehat{\mathcal{R}}_{\Omega}$ if  $f$ admits analytic continuation along any path which avoids a discrete closed subset $\Omega$ in $\mathbb{C}$.} in $\widehat{\mathcal{R}}_{\pm1}$ with double branches integrable singular points at $\pm1$:
    \begin{equation}
        \arccos(x) = -i\sqrt2 (x-1)^{\frac12} + O((x-1)^{\frac32}),\quad \arccos(x) = \pi-\sqrt2 (x+1)^{\frac12} + O((x+1)^{\frac32}).
    \end{equation}
    and 
    \begin{equation}
    ({\rm cont}_1 \arccos)(\zeta)= -\arccos(\zeta), \quad
    ({\rm cont}_{-1} \arccos)(\zeta) = -\arccos(\zeta) + 2\pi.
\end{equation}
\end{lemma}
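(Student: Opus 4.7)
The plan is to leverage the explicit representation $\arccos(x) = \frac{1}{i}\log\!\bigl(x + (x^2-1)^{1/2}\bigr)$ established in Lemma A.1. Since $(x^2-1)^{1/2}$ has its only branch points at $\pm 1$ and $w(x) := x + (x^2-1)^{1/2}$ never vanishes on $\mathbb{C}\setminus\{\pm 1\}$ (otherwise squaring the relation $x=-(x^2-1)^{1/2}$ forces $x^2=x^2-1$), this formula extends along every path in $\mathbb{C}\setminus\{\pm 1\}$, which settles the membership $\arccos \in \widehat{\mathcal{R}}_{\pm 1}$ immediately. The substance of the lemma is then to pin down the local structure at $\pm 1$ and to compute the two monodromies.

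For the local expansions I would factor $(x^2-1)^{1/2} = (x-1)^{1/2}(x+1)^{1/2}$. Near $x=1$ the factor $(x+1)^{1/2}$ is holomorphic with value $\sqrt{2}$, so $w(x) = 1 + \sqrt{2}(x-1)^{1/2} + O(x-1)$ and applying $\tfrac{1}{i}\log(1+\bullet)$ yields $\arccos(x) = -i\sqrt{2}(x-1)^{1/2} + O\!\bigl((x-1)^{3/2}\bigr)$. Near $x=-1$ the factor $(x-1)^{1/2}$ is holomorphic with value $i\sqrt{2}$ (fixed by the principal branch chosen so that $\arccos(0)=\tfrac{\pi}{2}$), so $w(x) = -1 + i\sqrt{2}(x+1)^{1/2} + O(x+1)$; factoring out $-1$ and using $\log(-1) = i\pi$ produces $\arccos(x) = \pi - \sqrt{2}(x+1)^{1/2} + O\!\bigl((x+1)^{3/2}\bigr)$. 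Both expansions are integrable and possess two sheets, giving the claimed double-branch integrable singularities.

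The monodromy computation rests on the identity $w(x)\cdot\bigl(x - (x^2-1)^{1/2}\bigr) = 1$. A negative loop around either $\pm 1$ flips the sign of $(x^2-1)^{1/2}$, sending $w$ to $1/w$, hence the continued log equals $-\log w + 2\pi i\, k_\pm$ for integers $k_\pm$ that I would extract by tracking the image curve in the $w$-plane along the parametrization $x(\theta) = \pm 1 + \varepsilon e^{i\theta}$, $\theta:0\to-2\pi$. For $x=1$, evaluation at $\theta = 0,-\pi,-2\pi$ shows that $w(\theta)$ stays in a small neighborhood of $1$ with argument swinging from $0$ down to $-\sqrt{2\varepsilon}$ and back to $0$, so $k_+ = 0$ and $\mathrm{cont}_1\arccos = -\arccos$. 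For $x=-1$ the curve stays near $-1$ but its continuously tracked argument is swept from $\pi-\sqrt{2\varepsilon}$ through $\pi$ up to $\pi+\sqrt{2\varepsilon}$, whereas the principal-branch argument of $w(-2\pi)$ is $-\pi+\sqrt{2\varepsilon}$; the $2\pi$ discrepancy forces $k_- = 1$, and multiplying by $1/i$ delivers $\mathrm{cont}_{-1}\arccos = -\arccos + 2\pi$.

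The only genuinely delicate step is this winding count at $x=-1$, because the image curve passes through the customary branch cut of $\log$ and a correct proof must distinguish the continuous lift from the principal value; I would execute it by comparing, at each of the reference angles $\theta = 0,-\pi,-2\pi$, the small-$\varepsilon$ approximation of $w(\theta)$ with its principal argument. Everything else reduces to routine local analysis of $\log(1+\bullet)$ around the base values $w=1$ and $w=-1$.
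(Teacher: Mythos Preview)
Your proposal is correct and follows essentially the same strategy as the paper: both exploit the representation $\arccos(x)=\tfrac{1}{i}\log\bigl(x+(x^2-1)^{1/2}\bigr)$, note that a loop around $\pm 1$ sends $w:=x+(x^2-1)^{1/2}$ to $1/w$, and then track the image curve in the $w$-plane to fix the integer ambiguity in $\log$. The only cosmetic differences are that the paper uses unit-radius loops based at $\zeta=0$ (so the image curve runs from $i$ to $-i$, staying in the right half-plane for the $+1$ loop and in the left half-plane for the $-1$ loop), whereas you use $\varepsilon$-loops around the branch points themselves; and you supply the explicit local expansions at $\pm 1$, which the paper's proof asserts without derivation.
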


\begin{proof}
    It is clear that $\arccos \zeta$ has singular points exclusively at $\zeta = \pm 1$, where it exhibits a double-branching behavior. Moreover, one can show that the term in the $\log$ function $\zeta+(\zeta^2-1)^{\frac12}$ moves from $i$ to $-i$ with $\mathfrak{Re}(\zeta+(\zeta^2-1)^{\frac12})>0$ (except $\zeta=0$) if $\zeta$ moves along $1+e^{2\pi is}$ with $s$ changing from $\frac12$ to $-\frac12$. This implies that 
    \begin{equation}
        \begin{split}
            \arccos \zeta + {\rm cont}_1 \arccos(x) 
            &= 
            \frac1i \log\big(\zeta+(\zeta^2-1)^{\frac12}\big) +\frac1i \log\big(\zeta-(\zeta^2-1)^{\frac12}\big)
            = 
            0.
        \end{split}
    \end{equation}
On the other side, if $\zeta$ moves along $-1+e^{2\pi is}$ as $s$ changes from $0$ to $-1$, then $\zeta+(\zeta^2-1)^{\frac12}$ goes from $i$ to $-i$ with real part less than $0$ except the two ending points. See the following picture.

\begin{center}
\begin{tikzpicture}[scale=2]

\begin{scope}[xshift=-1cm, local bounding box=left]
  \draw[-{Stealth[scale=1]}, thick] (-0.8,0)--(0.8,0); 
  \draw[-{Stealth[scale=1]}, thick] (0,-0.8)--(0,0.8); 
  \node[inner sep=3pt, path picture={
    \draw[thick] (path picture bounding box.south west) -- (path picture bounding box.north east);
    \draw[thick] (path picture bounding box.north west) -- (path picture bounding box.south east);
  }] at (-0.5,0) {};
  \node[inner sep=3pt, path picture={
    \draw[thick] (path picture bounding box.south west) -- (path picture bounding box.north east);
    \draw[thick] (path picture bounding box.north west) -- (path picture bounding box.south east);
  }] at (0.5,0) {};
  \draw[->,thick,red] (0,0) arc[start angle=0, end angle=-340, radius=0.5];
  \draw[->,thick,blue] (0,0) arc[start angle=180, end angle=-160, radius=0.5];
  \node[left] at (-0.4,0.8) {$\zeta$-plane};
  \node[below] at (0.5,0) {$1$};
  \node[below] at (-0.5,0) {$-1$};
\end{scope}

\begin{scope}[xshift=2cm, local bounding box=right, dot/.style = {circle, fill, inner sep=0.5pt}]
  \draw[-{Stealth[scale=1]}, thick] (-0.8,0)--(0.8,0); 
  \draw[-{Stealth[scale=1]}, thick] (0,-0.8)--(0,0.8); 
\fill[] (0,0.5) circle (0.8pt) node[above right] {$i$};
\fill[] (0,-0.5) circle (0.8pt) node[below right] {$-i$};
   \node[inner sep=3pt, path picture={
    \draw[thick] (path picture bounding box.south west) -- (path picture bounding box.north east);
    \draw[thick] (path picture bounding box.north west) -- (path picture bounding box.south east);
     }] at (0,0) {};
    \node[below right] at (0,0) {$0$};
    \node[right] at (0.4,0.8) {$[\zeta+(\zeta^2-1)^{\frac12}]$-plane};
      \draw[->,thick, red] 
    (0,0.5) .. controls (-0.2,0) and (-0.2,0) .. (-0.05,-0.45);
          \draw[->,thick, blue] 
    (0,0.5) .. controls (0.6,0) and (0.6,0) .. (0.05,-0.47);
\end{scope}
\end{tikzpicture}
\end{center}
This implies a $2\pi i$ difference of the final result in this lemma by the $\log$ function. 
\end{proof}

\bibliographystyle{JHEP.bst}
\bibliography{BIBLIOGRAPHY}

\end{document}